\newtheorem{theorem}{Theorem}
\newtheorem{lemma}{Lemma}
\newtheorem{corollary}{Corollary}
\newtheorem{definition}{Definition}
\theoremstyle{definition}
\begin{document}

\preprint{APS/123-QED}

\title{Classical simulation of quantum circuits with noisy magic inputs}

\author{Jiwon Heo}
\affiliation{%
 Graduate School of Quantum Science and Technology, Korea Advanced Institute of Science and Technology~(KAIST)
}%

\author{Sojeong Park}
\affiliation{
 Department of Physics, Korea Advanced Institute of Science and Technology~(KAIST)
}

\author{Changhun Oh}
\email{changhun0218@gmail.com}
\affiliation{%
 Graduate School of Quantum Science and Technology, Korea Advanced Institute of Science and Technology~(KAIST)
}%
\affiliation{
 Department of Physics, Korea Advanced Institute of Science and Technology~(KAIST)
}

\date{\today}

\begin{abstract}
    Magic states are essential for universal quantum computation and are widely viewed as a key source of quantum advantage, yet in realistic devices they are inevitably noisy. In this work, we characterize how noise on injected magic resources changes the classical simulability of quantum circuits and when it induces a transition from classically intractable behavior to efficient classical simulation. We adopt a resource-centric noise model in which only the injected magic components are noisy, while the baseline states, operations, and measurements belong to an efficiently simulable family. Within this setting, we develop an approximate classical sampling algorithm with controlled error and prove explicit noise-dependent conditions under which the algorithm runs in polynomial time. Our framework applies to both qubit circuits with Clifford baselines and fermionic circuits with matchgate baselines, covering representative noise channels such as dephasing and particle loss. We complement the analysis with numerical estimates of the simulation cost, providing concrete thresholds and runtime scaling across practically relevant parameter regimes.
\end{abstract}

\maketitle


\section{Introduction}
A quantum computer is a computational device that harnesses fundamental quantum phenomena, such as quantum entanglement and superposition, to achieve computational speedups~\cite{Nielsen_Chuang}. Owing to these features, quantum computers are expected to outperform classical computers in certain tasks, such as integer factoring and the simulation of quantum systems' time evolution, a phenomenon often referred to as quantum computational advantage~\cite{Supremacy1, Supremacy2, Shor2, Boson_sampling, Gaussian_boson_sampling, IQP, RCS, Fermion_sampling}. 
Recently, many experiments have aimed to achieve quantum advantage via sampling-based tasks, such as random-circuit sampling and Gaussian boson sampling~\cite{RCS_experiment, GBS_experiment}.
Despite these advances, realizing practical quantum advantage remains challenging because current hardware platforms are subject to substantial physical noises~\cite{Classical_RCS1, Classical_RCS2, Classical_GBS1}.
Although quantum error correction promises to address the effect of noise~\cite{QEC1, QEC2}, building a fault-tolerant quantum computer remains beyond current technology.
Hence, understanding the power of noisy quantum devices hinges on clarifying how noise affects computational tasks.

More specifically, many studies have shown that typical physical noise progressively reduces quantum advantage and can eventually destroy it~\cite{Classical_RCS1, Classical_RCS2, Classical_BS1, Classical_BS2, Classical_GBS1, Noise_uniform,Noisy_gate1,Noisy_gate2,Noisy_gate3,Noisy_gate4,Noisy_arbitrary, Noisy_sampling_IQP,Noisy_sampling_RCS, Error1}.
A representative example is the depolarizing channel in qubit systems. 
When this noise channel is applied to each circuit layer, the system’s entropy increases and, eventually, the output state converges to the maximally mixed state, washing out any computational advantage~\cite{aharonov1996limitations}. 
For a better understanding of the effect of noise and pinpointing the regime where a quantum advantage is maintained, recent studies increasingly focus on characterizing the transition between classically intractable and tractable regimes as a function of noise strength~\cite{Classical_GBS1, Hardness_GBS1, Hardness_BS1, Hardness_BS2}.


Meanwhile, a standard route to universal quantum computation augments Clifford operations with a non-stabilizer resource, typically supplied in the form of magic states~\cite{Nielsen_Chuang,Fermion_computing,Fermion_computing2,Magic}.
In contrast, circuits restricted to stabilizer states, Clifford gates, and computational-basis measurements admit efficient classical simulation (e.g., via the Gottesman–Knill theorem)~\cite{Gottesman, Stabilizer_tableau}.
Magic states are therefore the essential ingredient that promotes Clifford circuits to universality and, more broadly, underpins proposals for quantum advantage beyond the stabilizer regime.
It is therefore natural to expect that when physical noise suppresses the non-stabilizer content~(``magic”) of these states, the corresponding circuit becomes easier to classically simulate.

In realistic devices, the preparation of ideal magic states is inevitably affected by hardware imperfections, resulting in noisy magic states with reduced computational power, a limitation commonly referred to as state-preparation error~\cite{SPAM1,SPAM2}. To mitigate this issue, magic-state distillation (MSD) protocols have been developed~\cite{Magic_distillation1,Magic_distillation2, Magic_distillation3}, which purify a small number of high-fidelity magic states from an ample supply of noisy ones. However, MSD incurs substantial overhead in both circuit depth and qubit resources, and remains one of the main bottlenecks for fault-tolerant architectures. 
Consequently, near-term experiments typically operate in a regime where magic states are noisy. Understanding their computational power, particularly determining when noise renders them efficiently classically simulable, is therefore crucial for assessing whether near-term quantum devices can exhibit rigorous quantum advantage. In this work, we address this question by analyzing the transition between quantum-hard and classically tractable regimes under realistic noise acting on magic states.

\subsection{Related works}
Over many decades, the classical simulability of quantum circuits has been extensively studied. A representative result is the Gottesman-Knill theorem for qubit stabilizer circuits~\cite{Gottesman, Stabilizer_tableau}, which implies that there is a classical algorithm for simulating a quantum scheme consisting of a stabilizer state, a Clifford operation, and the computational basis measurement. 
In fermionic settings, an analogous efficiently simulable model is provided by matchgate computation~(equivalently, fermionic linear optics), where efficient simulation follows from the underlying free-fermion structure~\cite{Matchgate,Matchgate2,Matchgate3}. More precisely, it is based on the fact that any Gaussian operator can be decomposed as nearest-neighbor~(NN) matchgates. Subsequently, Brod extended these simulability results to more general input and measurement settings by leveraging limited auxiliary resources~\cite{Classical_FLO1}.

Beyond these tractable families, substantial effort has been devoted to classically simulating circuits that include limited non-stabilizer or non-Gaussian resources, such as magic states~\cite{Bravyi,Stabilizer_rank,Gaussian_rank1,Gaussian_rank2, Bosonic_Gaussian_rank1,Bosonic_Gaussian_rank2, MPS1, MPS2, MPS3}.
One prominent direction is the rank-based~(decomposition) approach~\cite{Bravyi,Stabilizer_rank,Gaussian_rank1,Gaussian_rank2, Bosonic_Gaussian_rank1,Bosonic_Gaussian_rank2}. Bravyi and Gosset introduced an approximate classical simulation algorithm for Clifford-dominated circuits~\cite{Bravyi}. Their approach can be viewed as: (i) reducing non-Clifford gates to the injection of corresponding magic states via standard gadgets, and (ii) approximating those magic states by a small superposition of stabilizer states. 
Later, Bravyi et al. refined this framework by introducing quantitative notions such as stabilizer rank and stabilizer extent, together with fidelity-based performance guarantees, thereby strengthening bounds on the classical simulation cost~\cite{Stabilizer_rank}.
Related rank-based ideas have also been developed in fermionic and bosonic Gaussian settings, yielding notions such as Gaussian rank and its variants~\cite{Gaussian_rank1,Gaussian_rank2,Bosonic_Gaussian_rank1,Bosonic_Gaussian_rank2}.

Finally, since physical noise can degrade or even eliminate computational advantage, it is important to understand how noise acts on the non-stabilizer resources themselves. In the fermionic setting, de Melo et al. studied the effect of depolarizing noise on auxiliary resource states and related the onset of efficient classical simulation to whether the noisy state becomes convex-Gaussian, i.e., a convex mixture of fermionic Gaussian states~\cite{Fermion_error1}. They derived noise thresholds that separate regimes where the resource state is provably non-convex-Gaussian from regimes where an explicit convex-Gaussian decomposition exists, implying efficient classical simulation in the latter. Subsequently, Oszmaniec et al. expanded this line of work by providing an analytic characterization of convex-Gaussian states in the first nontrivial (four-mode) case and by enlarging the known classically simulable region under depolarizing noise~\cite{Fermion_error2}.


\begin{figure*}[t!]
    \centering
    \includegraphics[width=1.0\linewidth]{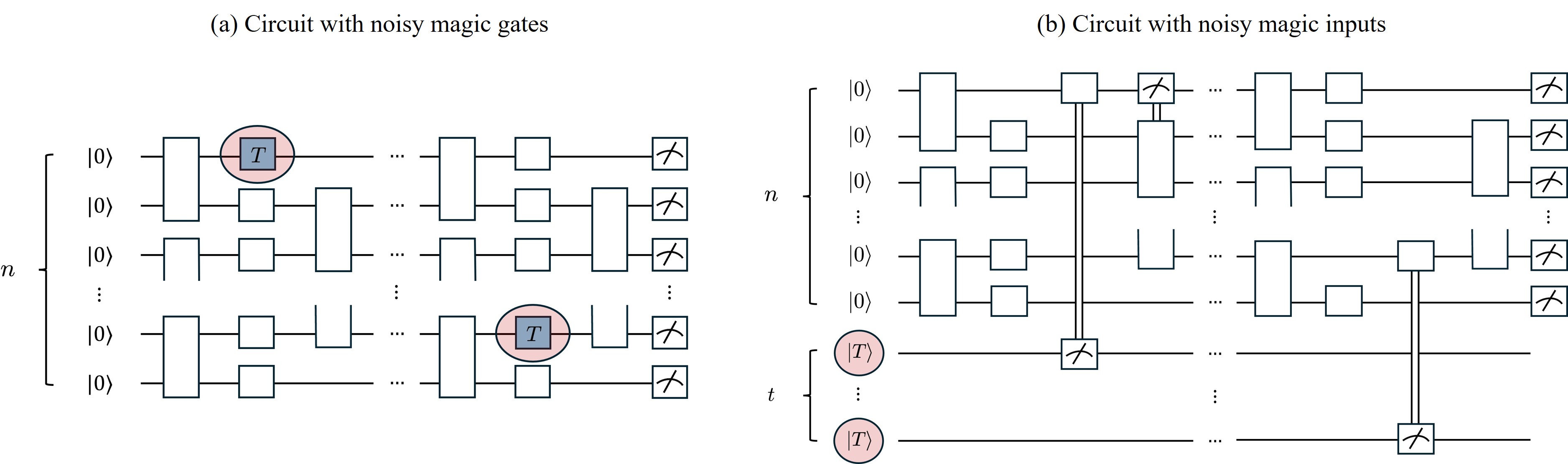}
    \caption{A qubit-based example of quantum circuits that permitted universal quantum computation with noisy magic components, which is our desired circuit. White boxes are Clifford operations, and measurements are the computational basis measurement (a) A $n$-qubit circuit with $t$ $T$ gates, whereas these suffer from physical noise. (b) Injecting $|T\rangle$ with adaptive measurement; however, it also suffers from physical noise. We mention that we also permit adaptive measurements in the injected circuit. }
    \label{fig:Scheme}
\end{figure*}

\subsection{Our contributions}

Although substantial work has investigated magic-state resources and the effects of noise, the computational power of \emph{noisy} magic states, and, in particular, the noise-dependent boundary between classically tractable and classically intractable behavior remains not fully understood.
A natural way to address this gap is to develop explicit classical simulation algorithms whose runtime can be analyzed as a function of the noise strength and resource parameters. Such an analysis clarifies how noise degrades computational power and pinpoints thresholds at which the simulation cost becomes polynomial, thereby delineating a classically simulable regime.

In this work, we study how physical noise acting on injected magic resources affects the computational power of otherwise efficiently simulable circuit families. We adopt a resource-centric noise model in which noise acts solely on the injected magic components, while the baseline free states, operations, and measurements are treated as ideal, allowing the degradation of the nonfree resource to be analyzed in isolation. This yields a clean characterization of the transition from quantum-hard to efficiently classically simulable behavior. As concrete instantiations, we consider qubit circuits with stabilizer baselines and fermionic linear-optical circuits with matchgate baselines, where the addition of suitable magic resources promotes universality.

Technically, we employ tools that systematically move all nonfree elements to the input-state side, reducing the problem to simulating a fixed free backbone driven by noisy resource states.
In the qubit setting, this is achieved via standard injection gadgets such as $T$-gadgetization~\cite{Gadget, Magic} (see Fig.~\ref{fig:Scheme}).
This perspective naturally captures regimes in which errors on the injected magic resources dominate, and it yields explicit noise thresholds together with analytic upper bounds on the advantage attainable with noisy magic resources.
Consequently, we identify parameter regimes in which noisy magic states still enable nontrivial quantum computation and regimes in which efficient classical simulation suffices.

To this end, we demonstrate how to instantiate our framework in concrete scenarios. In the qubit setting, we adopt the magic state $|H\rangle = \cos(\pi/8)|0\rangle + \sin(\pi/8)|1\rangle$ and consider dephasing noise. In the fermionic setting, we adopt the two-pair state $|\psi_4\rangle = (|0011\rangle + |1100\rangle)/\sqrt{2}$ as the injected resource and consider particle-loss and dephasing noise. For each example, we derive the simulation cost analytically and determine the noise-dependent boundary at which the cost becomes polynomial, thereby identifying the corresponding classically simulable regime~(see Fig.~\ref{fig:Total regime}). We emphasize that our method is formulated in a general setting; hence, it can be applied to other circuit families and noise models to obtain analogous computational boundaries.


In other words, our classical simulation algorithm targets sampling in the noisy-magic regime and relies on three ingredients. First, we exploit the fact that noise maps pure inputs to mixed states and use a Monte Carlo reduction: instead of evolving the full density operator, we sample a pure state from an ensemble decomposition and simulate that instance to generate a measurement outcome. Second, when the noise is sufficiently strong, the induced sampling distribution places negligible weight on instances containing many magic insertions; we therefore introduce a truncation scheme that discards high-magic samples while keeping the resulting bias under control. Third, conditioned on a sampled pure instance, we reduce the per-sample computational cost by leveraging stabilizer-decomposition techniques from prior work~\cite{Bravyi}, which are efficient whenever the retained instances contain only a small number of magic insertions. Taken together, these steps yield a sampling simulator that produces draws from a controlled approximation to the circuit’s output distribution with polynomial runtime once noise suppresses the high-magic tail below our derived threshold. The framework applies to qubit circuits with Clifford baselines and fermionic circuits with matchgate baselines, while treating other operations and measurements as noiseless to isolate the role of magic-state noise.

This paper is organized as follows: In Sec.~\ref{Sec:General framework}, we describe the general problem that we focus on, and propose an algorithm that simulates it classically. In Sec.~\ref{Sec:Qubit} and Sec.~\ref{Sec:Fermion}, we show how we apply our strategy to the various systems. We discuss the qubit-based case in Sec.~\ref{Sec:Qubit}, and the fermionic case in Sec.~\ref{Sec:Fermion}. Furthermore, we numerically depict the computational costs of our algorithm for each case by varying several variables, including the number of qubits, the noise rate, and the approximation error~(see Fig.~\ref{fig:Numerical}). Finally, we give discussions and the remaining open questions in Sec.~\ref{Sec:Discussion}.

\begin{figure*}[ht!]
    \centering
    \includegraphics[width=1.0\linewidth]{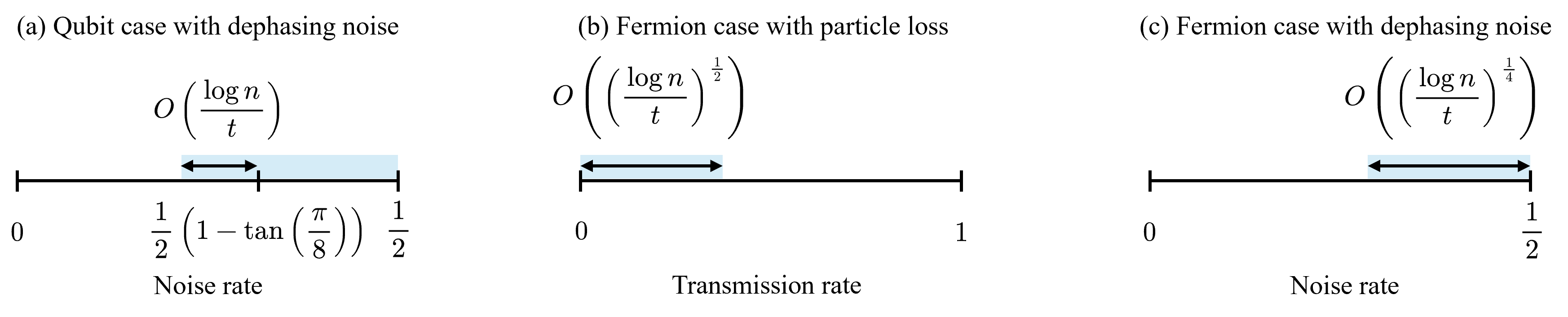}
    \caption{Classical simulable regions in each case (Blue boxes). For the dephasing noise cases, we only represent when the noise rate is in $[0, 1/2]$, because of symmetry. (a) In this case, the noisy magic states can be represented as a convex summation of stabilizer states when the noise rate is in $[(1-\tan(\pi/8))/2, 1/2]$, so that this interval is in the classical simulable region. For the remaining interval, the area of the classical simulable region is $O(\log n /t)$. (b) The area of the classical simulable region is $O(\sqrt{\log n /t})$. (c) The area of the classical simulable region is $O((\log n /t)^{1/4})$.
    }
    \label{fig:Total regime}
\end{figure*}

\section{General framework}\label{Sec:General framework}
\subsection{Problem description}\label{Sec:Problem}
As noted previously, circuits built solely from resourceless primitives are known to be efficiently classically simulable (e.g., via the Gottesman--Knill theorem for stabilizer circuits, and analogous methods for fermionic linear optics)~\cite{Gottesman, Stabilizer_tableau, Classical_FLO1, Classical_FLO2}. 
In contrast, injecting magic resources typically renders such simulations costly: a generic magic state must be expanded as a superposition of exponentially many resourceless states, and the simulation cost scales with the number of terms in the superposition. 
This observation is formalized by notions such as the stabilizer rank (and its fermionic/Gaussian analogues), which quantify how many resourceless states are required to represent a given resourceful state~\cite{Stabilizer_rank, Gaussian_rank2}.

At the same time, it is widely expected that noise can reduce the effective nonclassicality, or ``magic'', of the injected resource, potentially making a low-rank description possible, especially if one allows a small approximation error. 
This suggests a concrete route toward approximate classical simulation: if a noisy magic input can be well-approximately described by a superposition of only a few resourceless states, then one can classically simulate each resourceless branch (using, e.g., Gottesman-Knill-type methods) and combine them through phase-sensitive post-processing, thereby obtaining an efficient approximate sampler~\cite{Stabilizer_rank, Gaussian_rank2}.

Motivated by this intuition, we ask the following key question:
\emph{Can the noisy magic components in our scheme be replaced (up to a small error) by a low-rank superposition of resourceless states?}
Equivalently, does noise reduce the relevant (approximate) rank to a level that collapses our setting into only a few phase-sensitive resourceless computations?

To address this question, we first formalize a system-independent setting that isolates how noise acting on magic resources affects computational power. Throughout, we distinguish between resourceless components, which admit efficient classical simulation (e.g., Gottesman-Knill theorem), and magic components, which promote a baseline model to computational universality once supplied in sufficient quantity with adaptive measurements. On the qubit system, stabilizer states, Clifford operations, and computational-basis measurements are resourceless, whereas non-stabilizer states and non-Clifford gates are magic. On the fermionic system, Gaussian states and Gaussian evolutions with number measurements are resourceless, whereas non-Gaussian resources are magic. These examples serve only to ground the discussion; our definitions and results do not depend on a particular system.
The circuits of interest consist of a baseline built from resourceless operations and measurements together with a finite supply of injected magic states. Other input states are assumed to be resourceless, such as computational-basis states or Gaussian states. 
Hence, our starting point is a noiseless template with three parts:
\begin{itemize}
    \item Input: a tensor product of an $n$-qubit~(or mode) resourceless state $|0^{\otimes n}\rangle$ and $t$ copies of magic states $|\psi\rangle$, i.e., $|0^{\otimes n}\rangle\otimes|\psi^{\otimes t}\rangle$~($|\psi\rangle$ need not be a single qubit or mode state and $t = O(\text{poly}(n))$.
    \item Operation: an evolution $\hat{U}$ built entirely from resourceless gates (e.g., Clifford in the qubit system or Gaussian/matchgate evolutions in the fermion system). It consists of polynomially many single or two-qubit~(adaptive) Clifford gates in the qubit system (Or, it consists of polynomially many~(adaptive) NN matchgates in the fermionic system).
    \item Measurement: computational-basis (or number-basis) measurements, as appropriate for the system.
\end{itemize}

Following our motivation, we introduce noise only to the injected magic states while the operations in the baseline and the measurements are assumed noiseless. This resource-centric placement of noise reflects regimes where magic-state-preparation errors dominate and allows us to attribute any change in computational behavior directly to the degradation of magic.
Thus, the resulting input is a product of density operators, $|0^{\otimes n}\rangle\langle0^{\otimes n}|\otimes \hat{\rho}^{\otimes t}$, with $\hat{\rho}$ the noisy version of $|\psi\rangle$.
The overall scheme is illustrated in Fig.~\ref{fig:Scheme}.

Our computational task is sampling. Given a circuit from this family and a specification of the noise parameter, the goal is to generate samples from a distribution that approximates the noisy output distribution within a prescribed total variation distance~(TVD). We study when such sampling is efficiently classically simulable as a function of system size, the number of magic states, and the noise level. For concreteness in later sections, we instantiate this framework with qubit circuits that inject $|H\rangle$ states subject to dephasing noise and with fermionic circuits that inject a simple four-mode non-Gaussian resource subject to dephasing-like noise or particle loss, but our analysis is formulated for the general setting above.

Within this framework, we aim to identify noise thresholds that separate regimes exhibiting quantum advantage in sampling from regimes that admit efficient classical simulation, under the assumption that noise acts only on magic resources. We do so by constructing and analyzing a classical sampling algorithm tailored to this setting. This section fixes notation and assumptions; the subsequent sections instantiate concrete noise channels and magic resources and develop the corresponding algorithmic and complexity results.

\begin{figure*}
    \centering
    \includegraphics[width=1.0\linewidth]{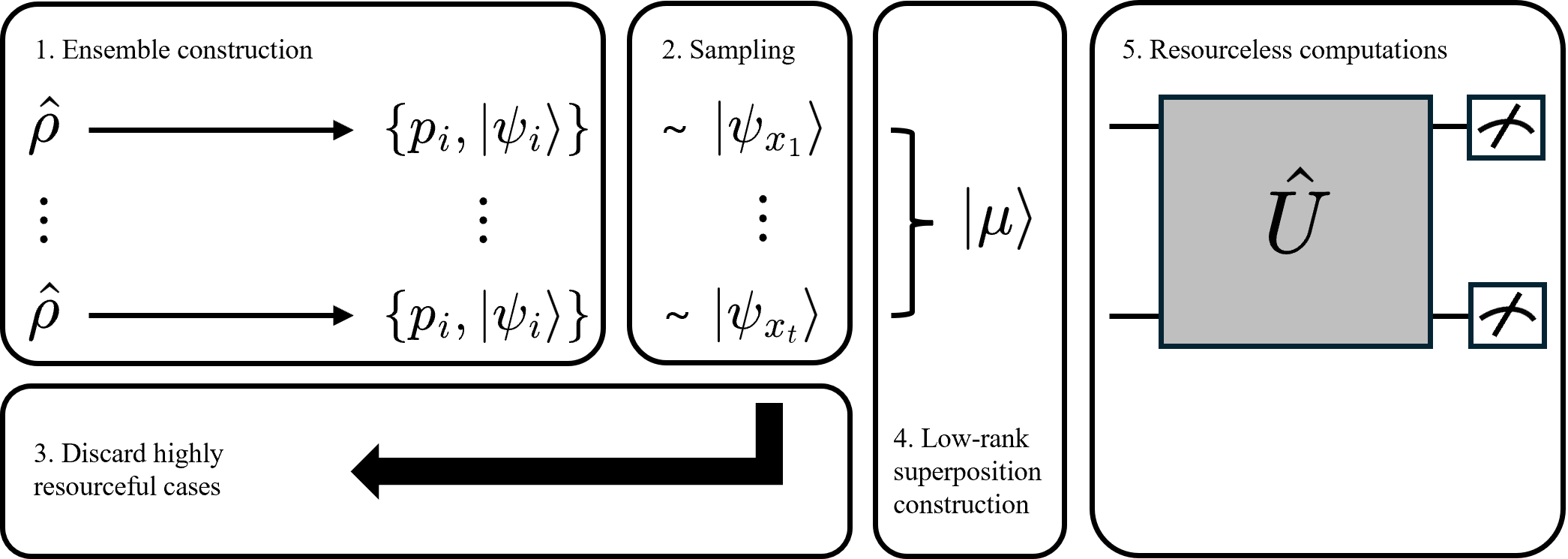}
    \caption{Blueprint of our problem and algorithm. The goal of our strategy is to find a proper low-rank resourceless superposition state. Since the state $|0^{\otimes n}\rangle$ does not affect the resourceless rank of the output state, we present $\hat{\rho}^{\otimes t}$ only in the figure.}
    \label{fig:Total algorithm}
\end{figure*}

\subsection{Algorithm description}\label{Sec:Algorithm}

The key idea of our algorithm is to adopt a pure-state sampling (trajectory) picture of the noisy input.
Since any density operator admits a convex decomposition into pure states, $\hat{\rho}=\sum_i p_i |\psi_i\rangle\langle\psi_i|$, to simulate a resourceless circuit with input $\hat{\rho}$, it suffices to sample $|\psi_i\rangle$ according to $\{p_i\}_i$ and simulate the circuit on that pure input.
Accordingly, our algorithm draws $t$ i.i.d.\ samples from the ensemble $\{p_i,|\psi_i\rangle\}$ and uses their tensor product as the (magic) part of the circuit input (we identify a classical description with the state itself for brevity).
Because all circuit operations are resourceless, the simulation cost is then governed by the resourceless structure of this sampled input: each draw may be resourceless or magic, and the number of sampled magic components controls the complexity, with the worst case when all $t$ samples are magic.
Moreover, since the ancillary factor $|0^{\otimes n}\rangle$ does not affect the relevant resourceless rank, in what follows, we focus on analyzing the noisy magic state.

In the pure-state sampling procedure, when the probability of drawing a large number of magic states is sufficiently small, discarding such rare cases incurs only a small approximation error. 
Consequently, we can reduce the worst-case level via this part by truncating high-resource configurations, namely, samples containing too many magic components. To rigorously analyze this point, we note that the TVD between the probability distributions with and without this truncation method is upper-bounded by the trace distance between the density operators corresponding to the two cases. Therefore, we will focus on the density operator after the truncation method to demonstrate the performance of this step.

Furthermore, even when many magic components are sampled, the computational cost can still be reduced if the resulting pure input state admits a compact representation as a low-rank superposition of resourceless states.
Building on the approach of Ref.~\cite{Bravyi}, which constructs a low-rank stabilizer approximation to $|H^{\otimes t}\rangle$, we generalize this idea and show that the same principle applies in broader settings.
Once such a representation is obtained with small approximation error, the remaining simulation requires evaluating only a few phase-sensitive resourceless components~\cite{Stabilizer_rank,Gaussian_rank2}, leading to substantial computational savings even for samples containing many magic components.

In summary, our algorithm can be outlined as follows:
\begin{enumerate}
    \item Choose a suitable ensemble decomposition of $\hat{\rho}$. It may consist of resourceless and magic states. (See Sec.~\ref{Sec:Ensemble})
    \item Sample $t$ pure states from the ensemble in step~1, and construct a single pure state by the tensor product. If the number of magic components exceeds the truncation threshold, discard the sample and resample. (See Sec.~\ref{Sec:Truncation})
    \item 
    Given the retained sample, construct a low-rank superposition of resourceless states that maintains high fidelity with it. (See Sec.~\ref{Sec:Low-rank})
\end{enumerate}
After obtaining the low-rank superposition of resourceless states, the remaining computation consists of only a few phase-sensitive evaluations of resourceless components, which require computational costs proportional to the resourceless rank of the input state. In other words, each resourceless computation requires just $O(\text{poly}(n))$, because $t = O(\text{poly}(n))$. The overall workflow is illustrated in Fig.~\ref{fig:Total algorithm}.
From now on, we will describe each step in more detail.

\subsubsection{Ensemble conditions}\label{Sec:Ensemble}
Since many different ensembles can represent the same given density operator $\hat{\rho}$, selecting an appropriate ensemble that minimizes the computation cost for the simulation is crucial for the overall algorithm. 
In this work, we choose a particular ensemble $\{p_i, |\psi_i\rangle\}$ corresponding to $\hat{\rho}$ in which each $|\psi_i\rangle$ is either a resourceless state or a magic state of the form
\begin{align}
        |\psi_i\rangle & = \frac{1}{2\nu}(|\psi_i^0\rangle + |\psi_i^1\rangle),
\end{align}
where $|\psi_i^0\rangle$ and $|\psi_i^1\rangle$ are resourceless states such that $\langle\psi_i^0|\psi_i^1\rangle=\langle\psi_i^1|\psi_i^0\rangle = 2\nu^2-1$. We note that a superposition of resourceless states is in general resourceful. A simple example is $|H\rangle = (|0\rangle+|+\rangle)/(2\cos(\pi/8))$ in the qubit system, which is a superposition of two stabilizer states, $|0\rangle$ and $|+\rangle$.

We emphasize that since a given density operator admits multiple valid ensembles, alternative choices may exist, which further reduce the overall simulation complexity beyond that considered here.

\subsubsection{Sampling procedure and truncation method}\label{Sec:Truncation}
Based on the ensemble $\{p_i, |\psi_i\rangle\}$ obtained in the previous subsection, we now describe the associated sampling and truncation procedure to reduce the computational cost for simulating the system. We emphasize that since our objective is to approximate the process of obtaining a measurement outcome from a quantum circuit, rather than to compute its probability distribution, it suffices to sample pure states from the ensemble prior to applying the quantum circuit and performing the measurement in the trajectory manner, instead of directly simulating the full density operator $\hat{\rho}$.


More precisely, we first sample $t$ pure states $|\psi_{x_1}\rangle,\dots,|\psi_{x_t}\rangle$ independently from the ensemble according to the probabilities $\{p_i\}$, where $x_j$'s are indices for the states in the given ensemble. Next, we construct $|\Psi\rangle = \bigotimes_{j=1}^t|\psi_{x_j}\rangle$ and execute phase-sensitive resourceless computations with $|0^{\otimes n}\rangle\langle0^{\otimes n}|\otimes |\Psi\rangle\langle\Psi|$ instead of $|0^{\otimes n}\rangle\langle 0^{\otimes n}|\otimes \hat{\rho}^{\otimes t}$. Due to the probabilistic structure of the density operator, the probability of obtaining a measurement outcome from this procedure is equivalent to the probability from the original circuit.


Since the resourceless rank of the constructed pure state increases as the number of sampled magic states, the time complexity of simulating the quantum circuit with sampled states depends on how many magic states are drawn. The worst case occurs when all $t$ sampled states are magic states, yielding a superposition of $2^t$ resourceless states. In this case, classical simulation of the worst case requires $2^t$ resourceless computations according to the conditions in Sec.~\ref{Sec:Ensemble}. 
However, if the probability of such highly resourceful situations is sufficiently small, neglecting such cases may induce only a small approximation error in the sampling task.

More precisely, let $|\psi_i\rangle$ be a resourceless state for $i = 1,\dots,K$ and a magic state for $i = K+1,\dots,L$, without loss of generality. Then, the number of magic states in a sampled configuration follows a binomial distribution with success probability
\begin{align}
p = \sum_{i=K+1}^{L} p_i,
\end{align}
and failure probability $1-p = \sum_{i=1}^{K} p_i$.
Therefore, the probability of sampling at least $k+1$ magic states from the ensemble $\{p_i, |\psi_i\rangle\}$ can be evaluated using the upper tail of this binomial distribution.



\begin{lemma}\label{Truncation}
    Let $\hat{\rho}$ be a density matrix described as an ensemble $\{p_i, |\psi_i\rangle\}$ and $\delta_1>0$ be given. Let $k$ be the smallest integer such that $k\geq tp$ and
    \begin{align}\label{Eq:Truncation condition}
        D\left(\frac{k+1}{t}\bigg\| p\right) & \geq \frac{1}{t}\log\left(\frac{1}{\delta_1}\right),
    \end{align}
    where $D(a\| p)$ is the Kullback-Leibler divergence of a binomial distribution, defined as
    \begin{align}
    D(a\| p) & \equiv a\ln\frac{a}{p} + (1-a)\ln\frac{1-a}{1-p}.
    \end{align}
    Consider a truncation procedure described as 
    \begin{enumerate}
        \item Sample $t$ states $|\psi_{x_j}\rangle \sim \{p_i,|\psi_i\rangle\}$,
        \item If the number of sampled magic states is larger than $k$, then we discard the case. If not, construct $|\Psi\rangle = \bigotimes_{j=1}^t|\psi_{x_j}\rangle$.
    \end{enumerate}
    Then, the trace distance between $\hat{\rho}^{\otimes t}$ and the density operator generated by this procedure is at most $\delta_1$.
\end{lemma}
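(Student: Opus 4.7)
My plan is to reduce the claim to a standard binomial-tail estimate on the number of magic components in a length-$t$ sample, and then invoke the Chernoff bound written in the Kullback--Leibler form that already appears in the hypothesis.

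First, I would set up the truncated density operator explicitly. Writing $p_{\vec x} = \prod_{j=1}^t p_{x_j}$ and $|\Psi_{\vec x}\rangle = \bigotimes_{j=1}^t |\psi_{x_j}\rangle$ for $\vec x = (x_1,\dots,x_t) \in \{1,\dots,L\}^t$, we have $\hat\rho^{\otimes t} = \sum_{\vec x} p_{\vec x} |\Psi_{\vec x}\rangle\langle\Psi_{\vec x}|$. Let $m(\vec x)$ be the number of coordinates with $x_j \in \{K{+}1,\dots,L\}$, i.e.\ the count of magic draws. The operator produced by the truncation procedure (keeping only the accepted branches) is $\hat\sigma_t = \sum_{\vec x:\, m(\vec x)\le k} p_{\vec x} |\Psi_{\vec x}\rangle\langle\Psi_{\vec x}|$. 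The difference $\hat\rho^{\otimes t} - \hat\sigma_t$ is manifestly positive semidefinite, so its trace norm coincides with its trace, which is exactly the probability of the discarded event: $\|\hat\rho^{\otimes t} - \hat\sigma_t\|_1 = \Pr[m(\vec x) > k]$.

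Second, since each draw is independently magic with probability $p = \sum_{i=K+1}^L p_i$ and resourceless with probability $1-p$, the count $m(\vec x)$ is $\mathrm{Bin}(t,p)$-distributed, and the required trace-distance bound reduces to controlling $\Pr[\mathrm{Bin}(t,p) \ge k+1]$. Because the lemma assumes $k \ge tp$, this is a genuine upper-tail event (the threshold lies above the mean), and the sharp Chernoff bound in relative-entropy form applies:
\[
\Pr[\mathrm{Bin}(t,p) \ge k+1] \;\le\; \exp\!\Bigl(-t\, D\bigl(\tfrac{k+1}{t}\,\big\|\,p\bigr)\Bigr).
\]
This is the standard large-deviation inequality obtained by exponentiating with a tilting parameter $\lambda>0$ and optimizing (equivalently, Sanov's theorem specialized to two-point distributions). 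Plugging in the hypothesis $t\,D((k+1)/t\,\|\,p) \ge \log(1/\delta_1)$ immediately gives $\Pr[m(\vec x) > k] \le \delta_1$, which is the desired trace-distance bound.

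There is no deep obstacle here; the argument is essentially a repackaging of the Chernoff bound. The only points requiring care are bookkeeping rather than content: (i) making sure that the logarithm in Eq.~\eqref{Eq:Truncation condition} is interpreted as the natural logarithm so that it matches the $\ln$ used inside $D(\cdot\|\cdot)$, and (ii) fixing the convention for trace distance (with or without the factor $1/2$); since $\hat\rho^{\otimes t} - \hat\sigma_t \succeq 0$, the trace norm equals the tail probability exactly, so the bound $\delta_1$ holds under either convention. Everything else follows from standard identities for ensemble decompositions and the elementary fact that labeling each i.i.d.\ draw as ``magic'' or ``resourceless'' yields a Bernoulli process with success probability $p$.
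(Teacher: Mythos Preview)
Your Chernoff-bound step for the tail probability $\Pr[m(\vec x)>k]$ is correct and is exactly what the paper uses. The gap is in identifying ``the density operator generated by this procedure.'' The procedure discards and (implicitly) resamples, so it is rejection sampling: the output is the \emph{normalized} conditional state $\hat\sigma_t/\operatorname{Tr}\hat\sigma_t$, not the subnormalized $\hat\sigma_t$ you wrote down. A density operator has unit trace, and your $\hat\sigma_t$ does not. Your bound $\|\hat\rho^{\otimes t}-\hat\sigma_t\|_1=\Pr[m>k]$ therefore controls the distance to the wrong object, and your remark~(ii) about the $1/2$ convention does not repair this: the issue is normalization, not the overall prefactor.

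The paper fixes this with one triangle-inequality line. Writing $\hat\rho^{\otimes t}=\hat\sigma_0+\hat\sigma_1$ (so your $\hat\sigma_t=\hat\sigma_0$), it bounds
\[
\tfrac{1}{2}\Bigl\|\hat\rho^{\otimes t}-\frac{\hat\sigma_0}{\operatorname{Tr}\hat\sigma_0}\Bigr\|_1
\;\le\; \tfrac{1}{2}\|\hat\rho^{\otimes t}-\hat\sigma_0\|_1
+\tfrac{1}{2}\Bigl\|\hat\sigma_0-\frac{\hat\sigma_0}{\operatorname{Tr}\hat\sigma_0}\Bigr\|_1
\;=\; \tfrac{1}{2}\operatorname{Tr}\hat\sigma_1+\tfrac{1}{2}(1-\operatorname{Tr}\hat\sigma_0)
\;=\; \operatorname{Tr}\hat\sigma_1 \;\le\; \delta_1.
\]
Adding this step to your argument completes the proof; everything else you wrote is fine and essentially matches the paper.
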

\begin{proof}
    With the above setting of the ensemble, the probability of sampling at least $k+1$ magic states during $t$ rounds has an upper bound,
    \begin{align}\label{Upper bound}
            \sum_{j=k+1}^t \begin{pmatrix}
                t \\ j
            \end{pmatrix}p^j(1-p)^{t-j} & \leq \exp\left[-tD\left(\frac{k+1}{t}\bigg\| p\right)\right],
    \end{align}
    by the Chernoff bound. By the assumption, it implies that this upper bound is upper-bounded by $\delta_1$. Now, denote $\hat{\rho} = \hat{\sigma}_0 + \hat{\sigma}_1$, where $\hat{\sigma}_0$ consists of states including at most $k$ magic states, and $\hat{\sigma}_1$ consists of states including at least $k+1$ magic states. Then, our desired trace distance is
    \begin{align}
            \frac{1}{2}\left\|\hat{\rho}^{\otimes t}-\frac{\hat{\sigma}_0}{\textrm{Tr}[\hat{\sigma}_0]}\right\|_1 & \leq \frac{1}{2}\left(\left\|\hat{\rho}^{\otimes t} - \hat{\sigma}_0\right\|_1 + \left\|\hat{\sigma}_0 - \frac{\hat{\sigma}_0}{\textrm{Tr}[\hat{\sigma}_0]}\right\|_1\right) \\
            & \leq \textrm{Tr}[\hat{\sigma}_1] \\
            & \leq \delta_1,
    \end{align}
    where the second line is derived from $1-\textrm{Tr}[\hat{\sigma}_0] = \textrm{Tr}[\hat{\sigma}_1]$, and the last line is derived from the fact that $\textrm{Tr}[\hat{\sigma}_1]$ is equivalent to the probability of sampling at least $k+1$ states during $t$ rounds, described in Eq.~\eqref{Upper bound}.
\end{proof}

To discuss its performance, we remark that the probability of sampling the truncated cases is upper-bounded by $\delta_1$. Hence, after $O((1-\delta_1)^{-1})$ trials on average, we can obtain a non-truncated case. Therefore, we can conclude that its time complexity is $O((1-\delta_1)^{-1})$ on average.

After the truncation method, the worst case is modified to the case when we sample $k$ magic states from $\{p_i,|\psi_i\rangle\}$, where $k$ is defined in Lem.~\ref{Truncation}. Similarly, a constructed pure state in the modified worst case is a superposition of $2^k$ resourceless states. Therefore, the required number of resourceless computations for the worst case is reduced, compared to the original worst case, which requires $2^t$ resourceless computations.

Lastly, consider the case when $\delta_1$ is exponentially small, i.e. $\delta_1 = O(1/\exp(n))$. Intuitively, the smaller $\delta_1$ is, the larger the number of magic states in the modified worst case $k$ is. Following this intuition, one can expect that the truncation method implies only a little computational enhancement in this extreme case. More formally, $\delta_1 = O(1/\exp(n))$ implies that the right-hand side of Eq.~\eqref{Eq:Truncation condition} becomes $\Omega(\text{poly}(n)/t)$, which makes $k$ is almost close or equal to $t$. Hence, when $\delta_1$ is exponentially small, simulation with the truncation method is almost as close to simulating the original noisy circuit.

\subsubsection{Finding a low-rank superposition of resourceless states}\label{Sec:Low-rank}
Even after discarding highly resourceful cases, the constructed pure state typically remains resourceful. For example, in the truncated worst case where $k$ magic states are sampled during the construction step, the state becomes a superposition of $2^k$ resourceless states, which may still require substantial computational resources.
In this subsection, we further reduce this cost by finding a superposition consisting of only a few resourceless states that preserves high fidelity with the original state, following and generalizing the approach introduced in Ref.~\cite{Bravyi}. For generality, let us assume that $m$ magic states are sampled, where $m \leq k$.

\begin{lemma}\label{Low rank}
    Let $\delta_2>0$ and $|\psi_1\rangle, |\psi_2\rangle,\dots, |\psi_m\rangle$ be magic states of the form
    \begin{align}
            |\psi_i\rangle & = \frac{1}{2\nu}\left(|\psi_i^0\rangle + |\psi_i^1\rangle\right),
    \end{align}
    where both $|\psi_i^0\rangle$ and $|\psi_i^1\rangle$ are resourceless states satisfying $\langle\psi_i^0|\psi_i^1\rangle = \langle\psi_i^1|\psi_i^0\rangle = 2\nu^2-1$ for all $i=1,2,\dots,m$, so that $\nu$ is real. Denote $|\Psi\rangle = \bigotimes_{i=1}^m |\psi_i\rangle$. Then, there exists a state $|\mu\rangle$ such that
    \begin{align}
            \left|\langle\Psi|\mu\rangle \right|^2 & \geq 1-\delta_2,
    \end{align}
    and $|\mu\rangle$ can be expressed as a superposition of $2^l$ resourceless states satisfying $2^l\leq 4\nu^{-2m}\delta_2^{-1}$.
\end{lemma}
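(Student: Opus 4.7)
The plan is to expand the target state in the natural resourceless basis and then apply a randomized sparsification argument in the spirit of Bravyi--Gosset's stabilizer-rank reduction. Writing
\begin{equation*}
|\Psi\rangle = \bigotimes_{i=1}^m |\psi_i\rangle = (2\nu)^{-m}\!\!\sum_{x\in\{0,1\}^m}\!\! |\phi_x\rangle, \qquad |\phi_x\rangle \equiv \bigotimes_{i=1}^m |\psi_i^{x_i}\rangle,
\end{equation*}
we obtain an exact decomposition into $2^m$ resourceless states with uniform coefficients. The assumption $\langle\psi_i^0|\psi_i^1\rangle=2\nu^2-1$ is exactly what makes $|\Psi\rangle$ normalized: one checks that $\sum_{x_i,y_i}\langle\psi_i^{x_i}|\psi_i^{y_i}\rangle = 4\nu^2$ for each factor, so the cross-sum telescopes to $(4\nu^2)^m (2\nu)^{-2m} = 1$. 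The same identity implies $\mathbb{E}_x[|\phi_x\rangle] = 2^{-m}\sum_x |\phi_x\rangle = \nu^m|\Psi\rangle$ when $x$ is drawn uniformly from $\{0,1\}^m$, which is the fact I will repeatedly use.

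Next, I would take $\chi \equiv 2^l$ with $\chi \ge 4\nu^{-2m}/\delta_2$, draw $x^{(1)},\dots,x^{(\chi)}$ i.i.d.\ uniformly on $\{0,1\}^m$, and define the random unnormalized estimator
\begin{equation*}
|\tilde\psi\rangle = \frac{\nu^{-m}}{\chi}\sum_{j=1}^{\chi} |\phi_{x^{(j)}}\rangle.
\end{equation*}
By the identity above, $\mathbb{E}[|\tilde\psi\rangle]=|\Psi\rangle$. Expanding $\|\tilde\psi\|^2$, separating the $\chi$ diagonal ($i=j$) and $\chi(\chi-1)$ off-diagonal terms, and using $\mathbb{E}_{x,y\text{ indep}}\langle\phi_x|\phi_y\rangle = \|\mathbb{E}_x|\phi_x\rangle\|^2 = \nu^{2m}$, gives
\begin{equation*}
\mathbb{E}\|\tilde\psi\|^2 = 1 + \frac{\nu^{-2m}-1}{\chi}, \quad \mathbb{E}\|\tilde\psi-\Psi\|^2 = \frac{\nu^{-2m}-1}{\chi} \le \frac{\delta_2}{4}.
\end{equation*}
By Markov's inequality, there exists a concrete realization (call it again $|\tilde\psi\rangle$) such that $\epsilon \equiv \|\tilde\psi-\Psi\|^2 \le \delta_2/2$. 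Since this $|\tilde\psi\rangle$ is manifestly a linear combination of at most $\chi = 2^l$ resourceless vectors $|\phi_{x^{(j)}}\rangle$, setting $|\mu\rangle \equiv |\tilde\psi\rangle/\|\tilde\psi\|$ yields a unit vector in the prescribed rank class.

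The final step is the fidelity bound, which I expect to be the main subtlety because $\|\tilde\psi\|$ is itself random and need not equal one. I would use the geometric observation that $\langle\Psi|\tilde\psi\rangle|\Psi\rangle$ is the orthogonal projection of $|\tilde\psi\rangle$ onto the line spanned by $|\Psi\rangle$, so $\|\tilde\psi\|^2 - |\langle\Psi|\tilde\psi\rangle|^2 \le \|\tilde\psi-\Psi\|^2 = \epsilon$, and therefore
\begin{equation*}
|\langle\Psi|\mu\rangle|^2 = \frac{|\langle\Psi|\tilde\psi\rangle|^2}{\|\tilde\psi\|^2} \ge 1 - \frac{\epsilon}{\|\tilde\psi\|^2}.
\end{equation*}
To turn this into $1-\delta_2$, I lower-bound $\|\tilde\psi\|^2 \ge 1 - 2\sqrt{\epsilon}$ via $\|\tilde\psi\|^2 = 1 + 2\Re\langle\Psi,\tilde\psi-\Psi\rangle + \epsilon$ together with Cauchy--Schwarz; the slack factor of $4$ in $\chi \ge 4\nu^{-2m}/\delta_2$ is precisely what is needed to absorb this denominator so the final bound $|\langle\Psi|\mu\rangle|^2 \ge 1-\delta_2$ holds. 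If the denominator bound turns out to be borderline in the regime $\delta_2$ close to $1$, one can either enlarge the constant (the statement allows $2^l$ up to $4\nu^{-2m}\delta_2^{-1}$, so headroom is available) or, alternatively, run Markov with a tighter tail ($\epsilon \le \delta_2/4$ with probability $\ge 3/4$), which suffices uniformly for $\delta_2 \in (0,1)$.
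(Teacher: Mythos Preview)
Your i.i.d.\ sampling approach is a legitimate alternative to the paper's, which instead draws a random \emph{linear subspace} $\mathcal{L}\subset\mathbb{Z}_2^m$ of dimension $l$, sets $|\mathcal{L}\rangle\propto\sum_{x\in\mathcal L}|\phi_x\rangle$, and controls the normalization via $\mathbb{E}[Z(\mathcal L)]\le 1+2^l\nu^{2m}$ together with the exact identity $|\langle\Psi|\mathcal L\rangle|^2=2^l\nu^{2m}/Z(\mathcal L)$. Your multiset sampler is more elementary (no subspace structure), but as written it does not close, and the gap is not just bookkeeping.

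First, the inequality on $\chi$ points the wrong way: the lemma requires $2^l\le 4\nu^{-2m}\delta_2^{-1}$, so you must take $\chi$ \emph{at most} that large, which forces $\chi=2^l$ into the window $[2\nu^{-2m}\delta_2^{-1},4\nu^{-2m}\delta_2^{-1}]$ and yields only $\mathbb{E}[\epsilon]<\delta_2/2$. From that you can at best guarantee the existence of a realization with $\epsilon<\delta_2/2$, and then your bound $|\langle\Psi|\mu\rangle|^2\ge 1-\epsilon/(1-\sqrt\epsilon)^2$ fails to deliver $1-\delta_2$ once $\delta_2\gtrsim 0.17$. Your proposed remedies do not help: the constant $4$ is an upper bound imposed by the statement, not slack you may spend, and Markov cannot produce a realization below $\delta_2/4$ when the mean is already $\delta_2/2$.

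The key fact your argument misses is that $\langle\Psi|\phi_x\rangle=\nu^m$ for \emph{every} $x$ (because $\langle\psi_i|\psi_i^0\rangle=\langle\psi_i|\psi_i^1\rangle=\nu$), not merely in expectation. Hence $\langle\Psi|\tilde\psi\rangle=1$ deterministically, so $|\langle\Psi|\mu\rangle|^2=1/\|\tilde\psi\|^2$ with $\|\tilde\psi\|^2\ge 1$. Applying the first-moment bound to the nonnegative variable $\|\tilde\psi\|^2-1$, whose mean is $(\nu^{-2m}-1)/\chi<\delta_2/2$, gives a realization with $\|\tilde\psi\|^2<1+\delta_2/2$, hence $|\langle\Psi|\mu\rangle|^2>1/(1+\delta_2/2)\ge 1-\delta_2$, and now the constant $4$ suffices. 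This is precisely the i.i.d.\ analogue of the paper's step of bounding $Z(\mathcal L)$ rather than a distance to $|\Psi\rangle$.
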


We give a sketch of the proof here to illustrate this subroutine. The main idea is to work on $\mathbb{Z}_2^m$ rather than $|\Psi\rangle$. Because $|\psi_1\rangle,\dots,|\psi_m\rangle$ share the same structure described in Sec.~\ref{Sec:Ensemble}, the state $|\Psi\rangle$ can be represented as a superposition of pure states naturally indexed by bit strings in $\mathbb{Z}_2^m$. For any linear subspace $\mathcal{L} \subset \mathbb{Z}_2^m$ of dimension $l$, we associate a corresponding normalized state $|\mathcal{L}\rangle$.
One can check that the fidelity between $|\Psi\rangle$ and $|\mathcal{L}\rangle$ depends only on $l,\nu$, and its normalization factor $Z(\mathcal{L})$. Hence, our goal reduces to finding a suitable subspace $\mathcal{L}^*$ for which this fidelity is large. Using Markov's inequality, by randomly choosing $O(\delta_2^{-1})$ subspaces, we can obtain a suitable $\mathcal{L}^\ast$ with constant probability. A detailed proof is provided in App.~\ref{App:Proof of Low rank Thm} for completeness.

Let us discuss its performance. For the case where we sample $m$ magic states, this sub-algorithm is summarized to finding a suitable subspace $\mathcal{L}\subset\mathbb{Z}_2^m$ of dimension $l$. We can compute $Z(\mathcal{L})$ in $O(2^l)$ time. For finding $\mathcal{L}^*$, by randomly choosing $O(\delta_2^{-1})$ subspaces $\mathcal{L}$, we can obtain $\mathcal{L}^*$ with constant probability, by Markov's inequality. Therefore, the total time complexity of finding $\mathcal{L}^*$ is $O(2^l\delta_2^{-1}) = O(\nu^{-2m}\delta_2^{-2})$. Hence, the modified worst case, when we sample $k$ magic states, requires $O(\nu^{-2k}\delta_2^{-2})$ time. We again remark that this discussion is based on Ref.~\cite{Bravyi}.

In short, we adopt an approximate classical algorithm for finding a superposition of at most $O(\nu^{-2k}\delta_2^{-1})$ resourceless states, preserving high fidelity with $|\Psi\rangle$, in at most $O(\nu^{-2k}\delta^{-2}_2)$ time, with small approximation error.

While this subroutine is useful when many magic states are sampled, it may not provide any advantage when few magic states are sampled. More concretely, consider the case in which the number of sampled magic states $m$ satisfies $2^m \leq 4\nu^{-2m}\delta_2^{-1}$. In this case, this subroutine cannot reduce the rank of the constructed state. We therefore introduce a threshold $m_0$ as the largest integer such that $2^{m_0} \leq 4\nu^{-2m_0}\delta_2^{-1}$ and apply the low-rank approximation subroutine only when $m > m_0$.

Let us discuss the case when $\delta_2$ is exponentially small. As with the truncation method, if $\delta_2$ is exponentially small, finding a low-rank resourceless superposition state part yields a state close to the original high-rank resourceless superposition state. More formally, $4\nu^{-2m_0}\delta_2^{-1} = \Omega(\exp(n))$ if $\delta_2 = O(1/\exp(n))$, so that $m_0$ becomes almost close or equal to $k$. It means that the rank of the result of this part is almost close to the rank of the state in the modified worst case via the truncation method. Furthermore, an extremely small $\delta_2$ yields the case when $m_0 = k$. We again mention that the subroutine described in this subsection does not work in this extreme case.

Over the whole of our strategy, we introduce two error parameters: $\delta_1$ and $\delta_2$, which are described in Sec.~\ref{Sec:Truncation} and Sec.~\ref{Sec:Low-rank} respectively. Since these error parameters are independent, one can optimize them to ensure that the total TVD is tightly close to the desired value. However, the complex form of the total TVD prevents us from analytical optimization. We remark that the subroutine described in Sec.~\ref{Sec:Low-rank} has a threshold. It makes an upper bound of the total TVD a complex form:
\begin{align}\label{Eq:Total TVD}
    \Delta(\delta_1, \delta_2) & \coloneq \delta_1 + \frac{\textrm{Pr}[m_0< X\leq k]}{\textrm{Pr}[X\leq k]}\sqrt{\delta_2},
\end{align}
where $\textrm{Pr}[X = m]$ is the probability of sampling $m$ magic states from $\{p_i, |\psi_i\rangle\}$. 
In this paper, for simplicity, we set error parameters as a simple case for the analysis of our procedure: $\delta_1=\delta/2$ and $\delta_2 = \delta^2/4$, which induces TVD smaller than $\delta$, while a more optimized choice may be found. 

Finally, the total time complexity for generating a proper low-rank resourceless superposition state is $O((1-\delta/2)^{-1}+ \nu^{-2k}\delta^{-4})$. However, our setting yields a simpler time-complexity analysis for our algorithm. More precisely, since $\delta$ is an upper bound of TVD, it should satisfy $\delta\leq 1$. It implies that the time complexity of the truncation method, $O((1-\delta/2)^{-1})$, becomes a constant scale. Hence, the total time complexity can be simplified to $O(\nu^{-2k}\delta^{-4})$ in this setting if the part described in Sec.~\ref{Sec:Low-rank} works.

Let us discuss the case where $\delta$ is exponentially small. As we mentioned earlier, this case, making both $\delta_1$ and $\delta_2$ exponentially small, implies that both subroutines of our algorithm yield only a little computational enhancement. Furthermore, an extremely small $\delta$ can yield the case when $2^k\leq 16\nu^{-2k}\delta^{-2}$. In this extreme case, we cannot reduce the resourceless rank of the state in the modified worst case. Therefore, the rank of the result of our algorithm is $2^k$, which is almost close or equal to the rank of the state in the original worst case. In other words, our simulation strategy is close to simulating the original noisy circuit when $\delta$ is extremely small.

    Therefore, for convenience, we discuss the case where $\delta = \Omega(1/\text{poly}(n))$ and $2^k > 16\nu^{-2k}\delta^{-2}$ in the later discussions. Lastly, we again remark that the remaining work is just a few phase-sensitive evaluations of resourceless components, which requires at most $O(\nu^{-2k}\delta^{-2}\text{poly}(n))$ time. Therefore, the total computational cost for simulating a noisy quantum circuit is at most $O(\nu^{-2k}\delta^{-2}(\text{poly}(n)+\delta^{-2}))$ time. However, it can be simplified as $O(\nu^{-2k}\text{poly}(n))$ under our setting, because $\delta = \Omega(1/\text{poly}(n))$.

In short, our strategy described in Sec.~\ref{Sec:Algorithm} can be summarized as follows:
\begin{theorem}
    Consider a quantum circuit that injects $t$ noisy magic states $\hat{\rho}^{\otimes t}$. Let $\delta = \Omega(1/\textup{poly}(n))$ be given. Then, there is a classical algorithm for approximately simulating this circuit in at most $O(\nu^{-2k}\textup{poly}(n))$ time, up to $\delta$ TVD.
\end{theorem}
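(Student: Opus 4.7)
The plan is to assemble the three subroutines from Sec.~\ref{Sec:Ensemble}, Sec.~\ref{Sec:Truncation}, and Sec.~\ref{Sec:Low-rank} into a single end-to-end argument, bounding total error and runtime separately and then combining them under the assumption $\delta=\Omega(1/\text{poly}(n))$. I would first fix the error budget split already advertised in the text: set $\delta_1=\delta/2$ and $\delta_2=\delta^2/4$. With this choice, Lemma~\ref{Truncation} immediately guarantees that sampling $t$ pure states from the chosen ensemble and discarding realizations with more than $k$ magic draws produces a post-truncation density operator whose trace distance from $\hat{\rho}^{\otimes t}$ is at most $\delta/2$. Since trace distance upper bounds TVD on measurement outcomes, this accounts for the first half of the error budget.

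Next I would handle the low-rank replacement. For a retained sample containing $m$ magic components, there are two sub-cases. If $m\le m_0$, then by definition of $m_0$ the naive expansion into $2^m\le 4\nu^{-2m}\delta_2^{-1}\le 4\nu^{-2k}\delta^{-2}$ resourceless states is already within our target rank, and no further approximation error is incurred. If $m>m_0$, I invoke Lemma~\ref{Low rank} with the fixed $\delta_2$; this produces a state $|\mu\rangle$ of resourceless rank at most $4\nu^{-2m}\delta_2^{-1}\le 4\nu^{-2k}\delta^{-2}$ with fidelity at least $1-\delta^2/4$. By the Fuchs--van de Graaf inequality the induced trace distance per sample is at most $\sqrt{\delta_2}=\delta/2$, and averaging over the retained ensemble yields the second term in Eq.~\eqref{Eq:Total TVD}, bounded by $\delta/2$. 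Adding the two contributions gives $\Delta(\delta_1,\delta_2)\le\delta$, which transfers to TVD on the sampled outcomes.

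For the runtime I would track the three stages per generated sample. Step~1 is a constant-time ensemble draw. Step~2 terminates after an expected $O((1-\delta_1)^{-1})=O(1)$ trials once $\delta\le 1$. Step~3 costs $O(\nu^{-2k}\delta^{-2}\cdot \delta_2^{-1})=O(\nu^{-2k}\delta^{-4})$ to identify a good subspace $\mathcal{L}^\ast$, and the final phase-sensitive resourceless evaluation (Gottesman--Knill in the qubit case, matchgate simulation in the fermionic case) runs in $O(\nu^{-2k}\delta^{-2}\,\text{poly}(n))$ time since each of the $O(\nu^{-2k}\delta^{-2})$ branches costs $\text{poly}(n)$ and $t=O(\text{poly}(n))$. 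Combining, the per-sample cost is $O(\nu^{-2k}\delta^{-2}(\text{poly}(n)+\delta^{-2}))$, which under $\delta=\Omega(1/\text{poly}(n))$ collapses to $O(\nu^{-2k}\,\text{poly}(n))$, matching the claim.

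The main obstacle I anticipate is the clean conversion of the per-branch fidelity bound into a TVD bound on the joint sampling distribution, because the truncation reshapes the ensemble weights and the conditional probability $\Pr[m_0<X\le k]/\Pr[X\le k]$ in Eq.~\eqref{Eq:Total TVD} must be controlled without spoiling the $\delta/2+\delta/2$ arithmetic; this is precisely why the low-rank step is applied only when $m>m_0$, so that the only contribution to the second error term comes from genuine approximations rather than from resourceless or already-low-rank instances. Once that case split is made explicit, the rest of the proof is bookkeeping on the error budget and the runtime exponents as sketched above.
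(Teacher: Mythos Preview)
Your proposal is correct and follows essentially the same approach as the paper: the same error split $\delta_1=\delta/2$, $\delta_2=\delta^2/4$, the same invocation of Lemmas~\ref{Truncation} and~\ref{Low rank} with the $m\lessgtr m_0$ case split, and the same runtime bookkeeping leading to $O(\nu^{-2k}\delta^{-2}(\text{poly}(n)+\delta^{-2}))\to O(\nu^{-2k}\,\text{poly}(n))$. The only cosmetic addition is that you name the Fuchs--van de Graaf inequality explicitly where the paper leaves the fidelity-to-trace-distance conversion implicit in the $\sqrt{\delta_2}$ factor of Eq.~\eqref{Eq:Total TVD}.
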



\section{Qubit system}\label{Sec:Qubit}
In Sec.~\ref{Sec:General framework}, we introduced the general problem setting and outlined our overall strategy for classical simulation. We now apply this framework to concrete scenarios. As a first application, we focus on the qubit system, one of the primary systems for quantum computation. In this section, we analyze qubit circuits composed of noisy magic states, Clifford gates, and computational-basis measurements, and demonstrate how the proposed algorithm can be explicitly implemented in this setting.

\subsection{Problem description}\label{Sec:Qubit_problem}
As we discussed previously, the Gottesman–Knill theorem states that circuits composed solely of stabilizer states, Clifford gates, and computational-basis measurements are classically simulable and thus cannot exhibit quantum advantage in the qubit system~\cite{Gottesman, Stabilizer_tableau}. Accordingly, the introduction of magic components is essential for achieving quantum advantage, and their inclusion enables universal quantum computation~\cite{Nielsen_Chuang}. A representative approach is magic-state injection, such as $T$-gadgetization~\cite{Gadget}: once suitable magics states are available, one can obtain quantum advantage and universality even when all remaining operations are restricted to Clifford gates.

In practice, however, quantum systems are inevitably subject to physical noises~\cite{Classical_RCS1, Classical_RCS2, Classical_RCS3, Noisy_sampling_RCS, RCS_experiment}. In particular, preparing large numbers of high-fidelity magic states remains one of the major experimental challenges~\cite{SPAM1, SPAM2, Magic_distillation1, Magic_distillation2}. Such imperfections can shrink, or even close, the gap between quantum and classical computation. In this section, we therefore investigate how noise acting on magic states affects the resulting computational power. For clarity, we focus on dephasing noise represented by the Kraus operators
\begin{align}
        \hat{K}_0 = \sqrt{1-p}\hat{I},\quad
        \hat{K}_1 = \sqrt{p}\hat{Z},
\end{align}
where $p \in [0,1/2]$ is the dephasing rate.

Consider a magic state $|H\rangle = \cos(\pi/8)|0\rangle+\sin(\pi/8)|1\rangle$, which is Clifford equivalent to $|T\rangle$. More precisely, one can easily check that
\begin{align}
    |H\rangle & = e^{-i\frac{\pi}{8}}\hat{S}\hat{H}|T\rangle,
\end{align} 
where $\hat{H}$ is the Hadamard gate and $\hat{S}$ is the phase gate, so that both are Clifford gates. The output state after the dephasing channel on $|H\rangle$ is written in the computational basis as follows:
\begin{align}
        \hat{\rho}_{\text{qubit}} & = \begin{pmatrix}
            \cos^2(\frac{\pi}{8}) & (1-2p)\cos(\frac{\pi}{8})\sin(\frac{\pi}{8}) \\
            (1-2p)\cos(\frac{\pi}{8})\sin(\frac{\pi}{8}) & \sin^2(\frac{\pi}{8})
        \end{pmatrix},
\end{align}
which describes the effect of the dephasing noise on $|H\rangle$.
Hence, we now present a way to replace the tensor product of such noisy magic states with a low-rank superposition of resourceless states in simulating the quantum circuit.


\subsection{Classical simulation}
As discussed in Ref.~\cite{Bravyi}, the state $|H\rangle$ can be written as the superposition of two stabilizer states,
\begin{align}
        |H\rangle & = \frac{1}{2\nu_{\text{qubit}}}(|0\rangle+|+\rangle),
\end{align}
where $\nu_{\text{qubit}} \equiv \cos(\pi/8)$. Consequently, Lem.~\ref{Low rank} can be applied to the noiseless case as in Ref.~\cite{Bravyi}. Our goal here is to extend this picture to noisy magic states. Intuitively, as the dephasing rate increases, we expect the computational power associated with $|H\rangle$ to decrease. In particular, the off-diagonal terms of $\hat{\rho}_{\text{qubit}}$ vanish at $p=1/2$, at which point $\hat{\rho}_{\text{qubit}}$ can be described as a mixture of stabilizer states, $\{|0\rangle,|1\rangle\}$. To interpolate smoothly between the noiseless and fully dephased regimes, we construct and analyze $p$-dependent ensembles for $\hat{\rho}_{\text{qubit}}$.

Our strategy for identifying suitable ensembles under this observation is as follows.
We first determine the threshold beyond which $|H\rangle$ is no longer needed in any ensemble representation of $\hat{\rho}_{\text{qubit}}$. 
Since the noiseless case is entirely described by $|H\rangle$, we consider ensembles that represent $\hat{\rho}_{\text{qubit}}$ using $|H\rangle$ together with single-qubit stabilizer states.
Since there are only six single-qubit stabilizer states, this analysis can be performed analytically. 
Furthermore, we formulate and solve an optimization problem to minimize the probability associated with $|H\rangle$. The detailed derivation is provided in App.~\ref{App:Optimization}. In other words, we construct an ensemble tailored to our algorithm by minimizing the probability of sampling non-stabilizer states. 
From this optimization, we obtain a critical value of the dephasing parameter: $\hat{\rho}_{\text{qubit}}$ admits a representation consisting solely of stabilizer states if and only if $p\geq (1-\tan(\pi/8))/2 \approx 0.2929$.  This analysis can also be extended to the complementary regime $p \in [1/2, 1]$, as $\hat{\rho}_{\text{qubit}}$ exhibits symmetry about the axis $p = 1/2$. In this region, the corresponding boundary is $p=(1+\tan(\pi/8))/2 \approx 0.7071$. 
In what follows, we focus on the interval $p \in [0,1/2]$, since the behavior for $p \in [1/2,1]$ can be obtained by symmetry.

When the noise rate lies above the relevant threshold, the noisy magic state $\hat{\rho}_{\text{qubit}}$ can thus be expressed as an ensemble consisting solely of stabilizer states, and our simulation reduces to stabilizer-based classical algorithms. For completeness, we now specify our proposed ensembles for the different parameter regimes. First, for $p \in [0,(1-\tan(\pi/8))/2]$, we take the ensemble to be
\renewcommand{\arraystretch}{1.5} 

\begin{center}
\begin{tabular}{@{\hspace{1.4cm}}c@{\hspace{1.4cm}\vrule\hspace{1.4cm}}c@{\hspace{1.4cm}}}
  \textbf{State} & \textbf{Probability} \\ \hline
  $|0\rangle$ & $(1+\sqrt{2})p$ \\
  $|+\rangle$ & $p$ \\
  $|H\rangle$ & $1-(2+\sqrt{2})p$
\end{tabular}
\end{center}

which consists of two stabilizer states and one magic state. Consequently, on $p\in[(1-\tan(\pi/8))/2, 1/2]$, our proposed ensemble is

\begin{center}
\begin{tabular}{@{\hspace{1.4cm}}c@{\hspace{1.4cm}\vrule\hspace{1.46cm}}c@{\hspace{1.46cm}}}
  \textbf{State} & \textbf{Probability} \\ \hline
  $|0\rangle$ & $\frac{1}{2}+\frac{1}{\sqrt{2}}p$           \\
  $|1\rangle$ & $\frac{1-\sqrt{2}}{2} + \frac{1}{\sqrt{2}}p$                         \\
  $| +\rangle$& $\frac{1}{\sqrt{2}}-\sqrt{2}p$
\end{tabular}
\end{center}

As desired, one can easily check that our ensembles consist of only stabilizer states on $p\in[(1-\tan(\pi/8))/2, 1/2]$, so that every sampled state becomes a stabilizer state.

Next, let us discuss how to simulate a sampled state. As mentioned above, the region near the fully dephased point, $p\in[(1-\tan(\pi/8))/2,1/2]$, is computationally trivial since every sampled state is a stabilizer state. Hence, we focus on the region $p\in[0,(1-\tan(\pi/8))/2]$. To apply the truncation method, we first analyze the probability of sampling many non-stabilizer states. In the first ensemble, the probability of sampling a non-stabilizer state in a single draw is $1-(2+\sqrt{2})p$. By Lem.~\ref{Truncation}, we define $f_{\text{qubit}}(t,p,\delta)$ as the minimum number of sampled non-stabilizer states satisfying $f_{\text{qubit}}(t,p,\delta)\geq t(1-(2+\sqrt{2})p)$ and
\begin{align}\label{f0}
        D\left(\frac{f_{\text{qubit}}(t,p,\delta)+1}{t}\bigg\| 1-(2+\sqrt{2})p\right) & \geq \frac{1}{t}\log\left(\frac{2}{\delta}\right).
\end{align}
Finally, suppose that we sample $m$ non-stabilizer states $|H\rangle$ and $m$ satisfies $m > 16\nu_{\text{qubit}}^{-2m}\delta^{-2}$. We remark that if $m$ does not satisfy this condition, i.e., the number of sampled magic states is sufficiently small, then the next step is not applied. As discussed in Ref.~\cite{Bravyi}, we can find a low-rank stabilizer superposition state that preserves high fidelity with $|H\rangle^{\otimes m}$, since $|H\rangle$ can be decomposed into two stabilizer states $|0\rangle$ and $|+\rangle$. It can also be derived from Lem.~\ref{Low rank}.
\begin{lemma}
    There is a state $|\mu\rangle$ such that
    \begin{equation}
        \begin{split}
            \left|\langle H^{\otimes m}|\mu\rangle\right|^2 & \geq 1-\frac{\delta^2}{4}
        \end{split}
    \end{equation}
    and $|\mu\rangle$ can be expressed as a superposition of $2^l$ stabilizer states satisfying $2^l\leq16\nu_{\textup{qubit}}^{-2m}\delta^{-2}$, where $\nu_{\textup{qubit}} = \cos(\pi/8)$.
\end{lemma}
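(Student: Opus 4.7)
The plan is to invoke Lemma~\ref{Low rank} directly, since $|H\rangle$ fits exactly into the structural template assumed there. First I would identify the correct decomposition of $|H\rangle$ into two stabilizer states: the paper has already noted that
\begin{equation}
|H\rangle = \frac{1}{2\nu_{\text{qubit}}}\bigl(|0\rangle + |+\rangle\bigr),\qquad \nu_{\text{qubit}} = \cos(\pi/8),
\end{equation}
so I take $|\psi_i^0\rangle = |0\rangle$ and $|\psi_i^1\rangle = |+\rangle$ for every $i = 1,\dots,m$. I would then verify the two hypotheses needed to feed this into Lemma~\ref{Low rank}: (i) $|0\rangle$ and $|+\rangle$ are stabilizer (hence resourceless) states, and (ii) $\langle 0|+\rangle = \langle +|0\rangle = 1/\sqrt{2} = \cos(\pi/4) = 2\cos^2(\pi/8) - 1 = 2\nu_{\text{qubit}}^2 - 1$, which is exactly the inner-product condition of the lemma. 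One can also double-check that this forces $\nu$ to be the normalization $\cos(\pi/8)$ automatically, which closes the consistency of the template.

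With these identifications in place, I would simply apply Lemma~\ref{Low rank} to $|\Psi\rangle = |H\rangle^{\otimes m}$ with the parameter choice $\delta_2 = \delta^2/4$. The lemma then directly yields a state $|\mu\rangle$ that is a superposition of $2^l$ stabilizer states with
\begin{equation}
|\langle H^{\otimes m}|\mu\rangle|^2 \geq 1 - \delta_2 = 1 - \frac{\delta^2}{4},
\end{equation}
and
\begin{equation}
2^l \leq 4\,\nu_{\text{qubit}}^{-2m}\,\delta_2^{-1} = 4\,\nu_{\text{qubit}}^{-2m}\cdot \frac{4}{\delta^2} = 16\,\nu_{\text{qubit}}^{-2m}\,\delta^{-2},
\end{equation}
which is precisely the claimed bound.

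There is essentially no obstacle here since the nontrivial combinatorial/probabilistic content, namely the subspace sampling argument built on Markov's inequality, has already been carried out in the proof of Lemma~\ref{Low rank} (referenced to App.~\ref{App:Proof of Low rank Thm}). The only thing that could conceivably require attention is making sure that the parameters $\nu_{\text{qubit}}$ and the inner product $2\nu_{\text{qubit}}^2 - 1$ genuinely match, but this reduces to the single trigonometric identity $\cos(\pi/4) = 2\cos^2(\pi/8)-1$, so the proof is really a one-line corollary of Lemma~\ref{Low rank} together with a specific choice of the free error parameter $\delta_2$. Consequently, my write-up would be short: state the decomposition, verify the inner-product identity, set $\delta_2 = \delta^2/4$, and quote Lemma~\ref{Low rank}.
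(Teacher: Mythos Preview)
Your proposal is correct and matches the paper's approach exactly: the paper states just before this lemma that it ``can also be derived from Lem.~\ref{Low rank}'' via the decomposition $|H\rangle = \frac{1}{2\nu_{\text{qubit}}}(|0\rangle+|+\rangle)$, and your verification of the inner-product identity together with the substitution $\delta_2 = \delta^2/4$ is precisely how the constants $16\nu_{\text{qubit}}^{-2m}\delta^{-2}$ and $1-\delta^2/4$ arise.
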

The worst case occurs when the number of sampled $|H\rangle$ is $f_{\text{qubit}}(t,p,\delta)$. Therefore, on $p \in [0, (1-\tan(\pi/8))/2]$, any sampled state from $\hat{\rho}_{\text{qubit}}^{\otimes t}$ can be approximated by a pure state with at most $O(\nu_{\text{qubit}}^{-2f_{\text{qubit}}(t,p,\delta)}\delta^{-2})$ stabilizer rank, and we can find it in at most $O(\nu_{\text{qubit}}^{-2f_{\text{qubit}}(t,p,\delta)}\delta^{-4})$ time. We again remark that the remaining work is just a few phase-sensitive Clifford computations~\cite{Stabilizer_rank}. Combining the case when $p\in [(1-\tan(\pi/8))/2, 1/2]$, application of our strategy can be summarized as follows:
\begin{theorem}
    Assume $p\in [0, (1-\tan(\pi/8))/2]$ and $\delta = \Omega(1/\textup{poly}(n))$ is given. There is a classical algorithm for approximately simulating the circuit in at most $O(\nu_{\textup{qubit}}^{-2f_{\textup{qubit}}(t,p,\delta)}\textup{poly}(n))$ time, up to $\delta$ TVD. Furthermore, on $p\in [(1-\tan(\pi/8))/2, 1/2]$, there is a classical algorithm for exactly simulating the circuit in $O(\textup{poly}(n))$ time.
\end{theorem}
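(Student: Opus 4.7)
The plan is to combine the two ensemble constructions given in the preceding discussion with the general subroutines of Lem.~\ref{Truncation} and Lem.~\ref{Low rank}, adapted to the stabilizer baseline. Thus the proof will proceed by (i) verifying the ensembles actually represent $\hat{\rho}_{\text{qubit}}$, (ii) plugging the qubit-specific parameters into the generic bounds, and (iii) translating the resulting stabilizer-rank bound into a runtime via the phase-sensitive Clifford evaluator of Ref.~\cite{Stabilizer_rank}.

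First I would handle the easy regime $p\in[(1-\tan(\pi/8))/2,1/2]$. Here the proposed ensemble contains only the stabilizer states $|0\rangle,|1\rangle,|+\rangle$ with the listed probabilities, so I would first check by a direct calculation in the computational basis that the corresponding convex combination reproduces $\hat{\rho}_{\text{qubit}}$ and that all weights are nonnegative for $p$ in this interval. Once this holds, the sampling procedure of Sec.~\ref{Sec:Truncation} draws $t$ stabilizer states, the product input is already a single stabilizer state, and so no low-rank subroutine is needed. Exact classical simulation then follows from the Gottesman--Knill theorem applied to each drawn trajectory, giving $O(\textup{poly}(n))$ time per sample. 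Because the sampling is unbiased, TVD is zero, establishing the second clause.

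Next I would treat the interesting regime $p\in[0,(1-\tan(\pi/8))/2]$. I would verify by direct expansion that the ensemble $\{(1+\sqrt 2)p,|0\rangle;\ p,|+\rangle;\ 1-(2+\sqrt 2)p,|H\rangle\}$ reproduces $\hat{\rho}_{\text{qubit}}$ and has nonnegative weights exactly on this interval. Setting $\delta_1=\delta/2$, Lem.~\ref{Truncation} with $p\mapsto 1-(2+\sqrt 2)p$ yields the truncation threshold $k=f_{\text{qubit}}(t,p,\delta)$ defined by Eq.~\eqref{f0}, and bounds the trace distance from the truncated-ensemble product to $\hat{\rho}_{\text{qubit}}^{\otimes t}$ by $\delta/2$. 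Given a retained sample with $m\le k$ copies of $|H\rangle$, I would invoke Lem.~\ref{Low rank} with $\nu=\nu_{\text{qubit}}=\cos(\pi/8)$ and $\delta_2=\delta^2/4$, using the decomposition $|H\rangle=(|0\rangle+|+\rangle)/(2\nu_{\text{qubit}})$ with $\langle 0|+\rangle=1/\sqrt{2}=2\nu_{\text{qubit}}^2-1$, to obtain a $2^l$-term stabilizer approximation with $2^l\le 16\,\nu_{\text{qubit}}^{-2m}\delta^{-2}$ and fidelity at least $1-\delta^2/4$. Standard inequalities (fidelity $\to$ trace distance) convert this into a per-sample TVD contribution of at most $\sqrt{\delta_2}=\delta/2$; inserting both contributions into the bound $\Delta(\delta_1,\delta_2)$ of Eq.~\eqref{Eq:Total TVD} and using $\Pr[m_0<X\le k]/\Pr[X\le k]\le 1$ gives total TVD $\le\delta$.

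Finally I would account for the runtime. Since $m\le k=f_{\text{qubit}}(t,p,\delta)$ in the retained branch, the rank bound becomes $O(\nu_{\text{qubit}}^{-2 f_{\text{qubit}}(t,p,\delta)}\delta^{-2})$, and the search for the subspace $\mathcal{L}^\ast$ costs $O(\nu_{\text{qubit}}^{-2 f_{\text{qubit}}(t,p,\delta)}\delta^{-4})$. For each stabilizer branch, the phase-sensitive Clifford sampler of Ref.~\cite{Stabilizer_rank} runs in $\textup{poly}(n)$; summing over the branches multiplies the rank by $\textup{poly}(n)$. Under the assumption $\delta=\Omega(1/\textup{poly}(n))$, the $\delta^{-c}$ factors are absorbed into $\textup{poly}(n)$, yielding the advertised $O(\nu_{\text{qubit}}^{-2 f_{\text{qubit}}(t,p,\delta)}\textup{poly}(n))$ bound. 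The main obstacle I expect is the careful bookkeeping of the TVD bound when the conditional probability ratio in Eq.~\eqref{Eq:Total TVD} is not automatically $\le 1$; if needed, I would absorb the factor by slightly tightening $\delta_1,\delta_2$, which only changes constants in the ensemble-dependent analysis and does not affect the stated scaling.
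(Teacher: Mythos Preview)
Your proposal is correct and follows essentially the same route as the paper: verify the two ensemble decompositions, apply Lem.~\ref{Truncation} with success probability $1-(2+\sqrt{2})p$ to obtain the cutoff $f_{\text{qubit}}(t,p,\delta)$, apply Lem.~\ref{Low rank} with $\nu=\cos(\pi/8)$ to the retained samples, and finish with the phase-sensitive Clifford sampler of Ref.~\cite{Stabilizer_rank}. Your anticipated ``obstacle'' is not one: the ratio $\Pr[m_0<X\le k]/\Pr[X\le k]$ in Eq.~\eqref{Eq:Total TVD} is automatically $\le 1$ since $\{m_0<X\le k\}\subseteq\{X\le k\}$, so no tightening of $\delta_1,\delta_2$ is needed.
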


In the end, we investigate when this noisy scheme collapses into a classical computational regime. Since we have two independent variables, the noise rate $p$ and the number of magic states $t$, we answer the following two questions: Which condition of $p$ (or $t$) makes this noisy scheme collapse into a classical simulable regime for given $t$ (or $p$)? Due to our algorithm inducing detailed time complexity, we can investigate a boundary where the time complexity becomes a polynomial scale, which is a classical simulable regime. Thm.~\ref{Thm:Qubit_poly1} answers these questions.
\begin{theorem}\label{Thm:Qubit_poly1}
    Consider the ensemble defined on $p \in [0, (1-\tan(\pi/8))/2]$ and let $\delta = \Omega(1/\textup{poly}(n))$. Assume that the number of injected magic states $t$ is given. If the noise rate $p$ satisfies
    \begin{align}
        p & = \frac{1-\tan\left(\frac{\pi}{8}\right)}{2}\left(1-O\left(\frac{\log n}{t}\right)\right),
    \end{align}
    then the time complexity becomes $O(\textup{poly}(n))$. Conversely, assume that the noise rate $p$ is given. If the number of magic states $t$ satisfies
    \begin{align}
        t & = O\left(\frac{\log n}{1-(2+\sqrt{2})p}\right),
    \end{align}
    then the time complexity becomes $O(\textup{poly}(n))$.
\end{theorem}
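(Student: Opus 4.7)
The plan is to reduce both implications to a single quantitative condition on the truncation index $f_{\textup{qubit}}(t,p,\delta)$ and then verify that condition under the stated hypotheses. From the runtime analysis of the previous subsection, the simulation cost is $O\bigl(\nu_{\textup{qubit}}^{-2 f_{\textup{qubit}}(t,p,\delta)}\,\textup{poly}(n)\bigr)$ with $\nu_{\textup{qubit}} = \cos(\pi/8) < 1$ a fixed constant, so polynomial runtime is equivalent to $f_{\textup{qubit}}(t,p,\delta) = O(\log n)$. Writing $q := 1-(2+\sqrt{2})p$ for the per-draw probability of sampling $|H\rangle$ in the ensemble of Sec.~\ref{Sec:Qubit_problem}, the shared goal of both implications is then to show that when $tq = O(\log n)$ and $\delta = \Omega(1/\textup{poly}(n))$, some integer $f = O(\log n)$ simultaneously satisfies $f \geq tq$ and the Chernoff-type condition $D\bigl((f+1)/t \,\|\, q\bigr) \geq \log(2/\delta)/t$.

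The deterministic constraint $f \geq tq$ both makes $tq = O(\log n)$ necessary and identifies the natural candidate scale $f = \Theta(\log n)$. For the KL constraint, $\delta = \Omega(1/\textup{poly}(n))$ gives $\log(2/\delta)/t = O(\log n)/t$, so it suffices to exhibit a constant $C$ for which $f = C\log n$ forces $D((f+1)/t\,\|\,q) \geq c_0\,\log n/t$. I would verify this by a direct Taylor expansion: for $a := (f+1)/t$ and $q$ both $O(\log n/t)$ with ratio $a/q$ equal to a large constant $\kappa$, one has $D(a\,\|\,q) \approx a\log(a/q) - (a-q) = q\bigl[\kappa\log\kappa - (\kappa - 1)\bigr]$, and $\kappa$ can be chosen so that this lower bound dominates any prescribed multiple of $\log n/t$. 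The corner case $q \to 0$ is even easier since $D(a\,\|\,q) \to \infty$ for any $a > 0$, and if $tq = o(\log n)$ then $f \geq tq$ is trivially compatible with $f = O(\log n)$.

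Finally, I would translate the sufficient condition $tq = O(\log n)$ into the two stated forms. Using $\tan(\pi/8) = \sqrt{2} - 1$ one has $\frac{1}{2+\sqrt{2}} = \frac{2-\sqrt{2}}{2} = \frac{1-\tan(\pi/8)}{2}$, hence $q = (2+\sqrt{2})\bigl[\frac{1-\tan(\pi/8)}{2} - p\bigr]$. For fixed $t$, the bound $q = O(\log n/t)$ rearranges to $p = \frac{1-\tan(\pi/8)}{2}\bigl(1 - O(\log n/t)\bigr)$; for fixed $p$, it rearranges directly to $t = O\bigl(\log n/(1-(2+\sqrt{2})p)\bigr)$, yielding both implications. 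The step I expect to be most delicate is the KL estimate in the joint regime where $q$, $(f+1)/t$, and $\log(2/\delta)/t$ are all of order $\log n/t$: because these three scales are coupled through $\log n$ in a controlled way, a single Taylor expansion near $a = q$ with carefully fixed constants should handle the bound uniformly, but the book-keeping needs care to ensure the implicit constants from the three hypotheses remain compatible.
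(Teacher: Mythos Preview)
Your proposal is correct and reaches the same reduction as the paper --- showing $f_{\textup{qubit}}(t,p,\delta)=O(\log n)$ once $tq=O(\log n)$ and $\delta=\Omega(1/\textup{poly}(n))$, with $q=1-(2+\sqrt{2})p$ --- but the mechanism you use to bound $f_{\textup{qubit}}$ differs from the paper's. You argue by \emph{witness construction}: pick $f=C\log n$ for a large constant $C$ and verify, via the Poisson-type approximation $D(a\|q)\approx q[\kappa\log\kappa-(\kappa-1)]$ with $\kappa=a/q$, that this $f$ satisfies both defining conditions, whence $f_{\textup{qubit}}\le f$. The paper instead exploits \emph{minimality} of $f_{\textup{qubit}}$ to obtain the reverse inequality $D(f_{\textup{qubit}}/t\,\|\,q)\le \tfrac{1}{t}\log(2/\delta)$, combines it with the quadratic lower bound $D(a\|q)\ge (a-q)^2/(2a)$ valid for $a\ge q$, and solves the resulting inequality to get the explicit estimate
\[
f_{\textup{qubit}}\;\le\; tq+\log(2/\delta)+\sqrt{\bigl(tq+\log(2/\delta)\bigr)^2-(tq)^2}\,,
\]
from which $f_{\textup{qubit}}=O(\log n)$ is immediate. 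The paper's route avoids precisely the constant-matching you flag as delicate: no choice of $C$ is needed, and the three scales $tq$, $\log(2/\delta)$, and $f_{\textup{qubit}}$ are tied together by a single closed-form bound. Your route is conceptually more direct but, as you note, requires tracking the implicit constants across hypotheses; it would become clean if you replaced the Taylor heuristic by a uniform inequality such as $D(a\|q)\ge a\log(a/q)-a+q$ (exact for Bernoulli KL with the second term bounded below). Your final algebraic translation between the $p$- and $t$-forms via $\tan(\pi/8)=\sqrt{2}-1$ matches the paper.
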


Detailed proofs are provided in App.~\ref{App:Qubit_Time complexity}. These results align with intuition. In Thm.~\ref{Thm:Qubit_poly1}, the threshold on $p$ increases with $t$, indicating that the more magic states are injected, the more tolerant the computational power becomes to physical noise. The extreme cases are also consistent. When $t = 0$, i.e., no magic states are injected, the classically simulable region extends over all $p$, so the circuit is classically simulable for any noise rate, exactly as guaranteed by the Gottesman–Knill theorem.

The opposite case exhibits a similar behavior. As desired, the threshold on $t$ grows as the noise rate approaches the critical value $(1-\tan(\pi/8))/2$. In the opposite extreme, when $p \to 0$, the number of injected magic states that remains classically simulable scales as $O(\log n)$. This again matches the Gottesman–Knill picture, since injecting $O(\log n)$ copies of $|H\rangle$ yields only polynomial stabilizer rank. Moreover, when $\delta$ is taken to be extremely small, our algorithm approximates the original noisy circuit very closely, as one would expect.

Lastly, since we do not apply noisy channels to Clifford operations, it is natural to compare our results with MSD, which produces higher-fidelity magic states from many noisy ones under ideal Clifford processing~\cite{Magic_distillation1, Magic_distillation2, Magic_distillation3}. For $|H\rangle$, the known boundary between distillable and non-distillable regimes is tight; in our situation, this boundary is $ p = (1-\tan(8/\pi))/2$, which coincides with the boundary of our proposed ensemble descriptions~\cite{Magic_distillation3}. Furthermore, a noisy $|H\rangle$ can be expressed as a convex combination of only stabilizer states when it is beyond this boundary, which coincides with our proposed ensemble~\cite{Magic_distillation1}.

However, distillability from an informational perspective alone does not guarantee that every distillable noisy $|H\rangle$ supports universal quantum computation at a practical level, which restricts the depth of a quantum circuit to a polynomial scale, i.e, computational cost. Intuitively, if a noisy magic state lies very close to the distillation threshold, then polynomially many copies may be insufficient to distill polynomially many high-fidelity magic states. In other words, when we inject highly noisy magic states near this boundary, a polynomial number of injections may fail to provide enough computational power.

Our result formalizes this intuition. Consider a noisy magic state $\hat{\rho}$ in the classically simulable regime, but still in the distillable regime according to our analysis (Inside the blue box but $p \in [0, (1-\tan(\pi/8))/2]$ in Fig.~\ref{fig:Total regime}). 
As we mentioned, since it is distillable, one can obtain polynomially many high-quality $|H\rangle$, which induce universal quantum computation with a depth of a polynomial scale, from sufficiently many copies of $\rho$. Now, assume that one can distillate sufficiently many high-quality $|H\rangle$, which grants classical intractability to a quantum circuit, from polynomially many $\rho$ and distillation protocols. Our result directly implies that every quantum circuit capable of universal quantum computation with a depth of a polynomial scale is classically simulable. It contradicts a firm belief in the computational power of a quantum computer. Therefore, polynomially many $\hat{\rho}$ cannot provide sufficient computational power to a quantum circuit, even if they are distillable.

In summary, we analytically characterize how dephasing noise degrades the computational power of qubit circuits that would otherwise support universal quantum computation. Additionally, we identify noise and resource thresholds in terms of $p$ and $t$ at which the noisy scheme collapses into a classically simulable regime. Finally, we verify that our analysis is consistent with the noiseless limit described by the Gottesman–Knill theorem.

\section{Fermionic system}\label{Sec:Fermion}
Following the analysis of the qubit system in Sec.~\ref{Sec:Qubit}, we now apply our general simulation framework to the fermionic system, another setting that supports the universal quantum computation. While many fermionic circuits are classically simulable~\cite{Classical_FLO1, Classical_FLO2, Gaussian_rank1, Gaussian_rank2}, the use of non-Gaussian resources enables universal quantum computation~\cite{Fermion_computing, Fermion_computing2}. Furthermore, it is revealed that every pure non-Gaussian state possesses sufficient computational power, in terms of universal quantum computation, even if we adopt only Gaussian operations~\cite{Magic}. However, generating non-Gaussian states requires non-Gaussian operators, which have complicated structures, such as higher-order terms of Majorana operators. It induces the hardness of implementing high-fidelity non-Gaussian states. In this section, we describe a fermionic circuit composed of noisy magic states, Gaussian gates, and number-basis measurements, and demonstrate how our algorithm performs in this noisy setting.

\subsection{Problem description}\label{Sec:Fermion_problem}

Every pure non-Gaussian state can play the role of a magic state, which enables us to achieve universal quantum computation with adaptive measurements~\cite{Magic}. One of the representative magic states in the fermionic system is $|\psi_4\rangle = (|0011\rangle+|1100\rangle)/\sqrt{2}$, which is one of the representative magic states in the fermionic system~\cite{Fermion_computing2, Fermion_sampling, Fermion_sampling2, Fermion_error1, Fermion_error2}. Here, we inject $|\psi_4^{\otimes t}\rangle$, but suffer from physical noise, into a fermionic circuit consisting of only Gaussian components.
\begin{figure}[t!]
    \centering
    \includegraphics[width=1.0\linewidth]{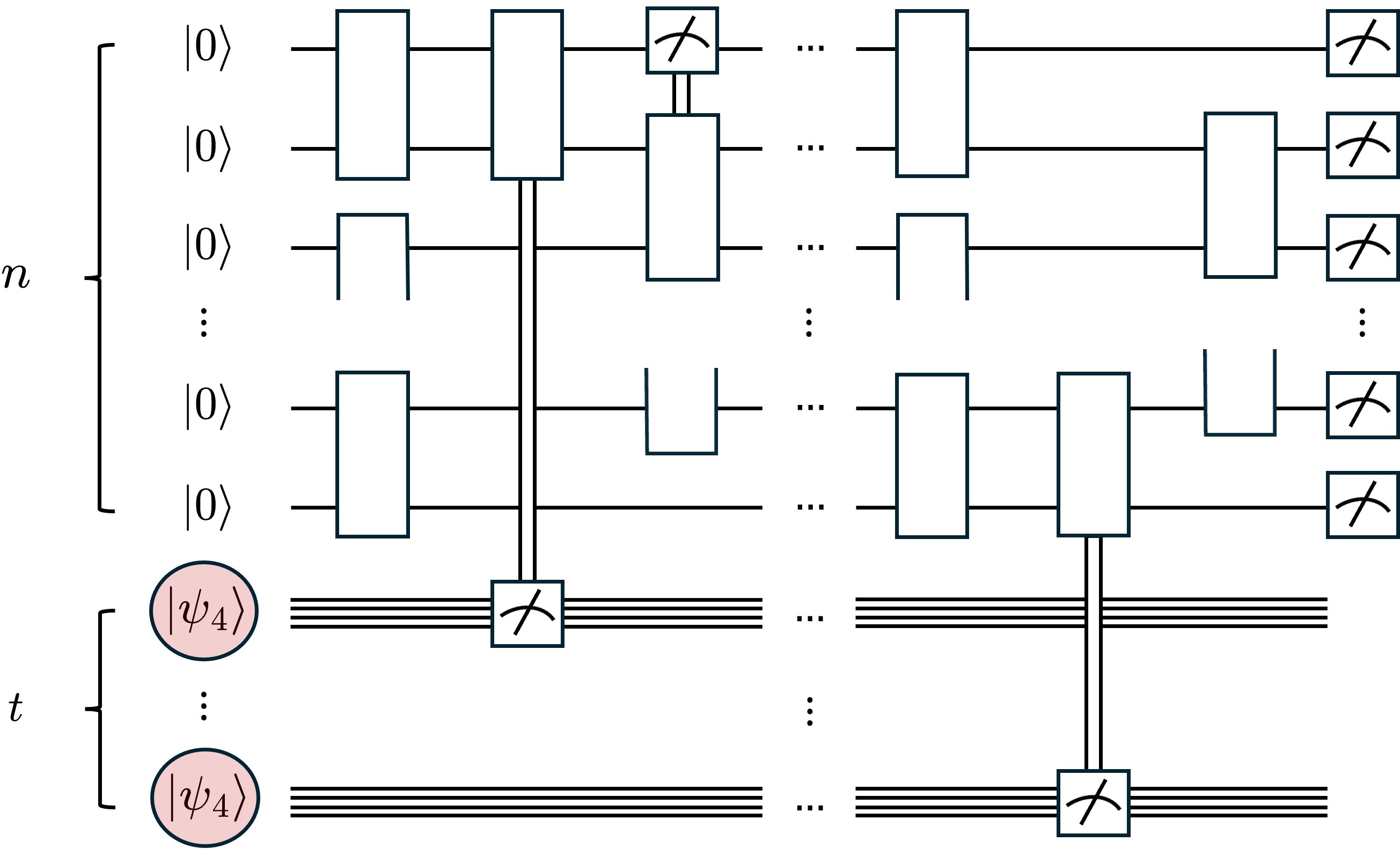}
    \caption{Our desired circuit in the fermionic system. We injects $t$ noisy $4$-mode magic states, $|\psi_4\rangle = (|0011\rangle+|1100\rangle)/\sqrt{2}$. The white boxes represent Gaussian gates. As in the qubit case, we allow adaptive measurements in the injected circuit.}
    \label{fig:Fermion}
\end{figure}

Although the computational power of ideal magic states has already been demonstrated, the computation in a real physical environment can also be affected by various sources of noise. Let us focus on two different noise models: particle loss and dephasing noise. Particle loss can be modeled as a beam-splitter interaction with the environment in vacuum. For instance, the single fermion state transforms as
\begin{align}\label{Particle loss}
        \textrm{Tr}_2\left[\hat{U}|10\rangle\langle10|\hat{U}^\dagger\right] & = \cos^2\lambda|1\rangle\langle1|+\sin^2\lambda|0\rangle\langle0|,
\end{align}
where $\hat{U} = \exp[-i\lambda(\hat{a}_1^\dagger\hat{a}_2 + \hat{a}_2^\dagger\hat{a}_1)]$, $\hat{a}^\dagger_j$ and $\hat{a}_j$ are the creation and annihilation operator on mode $j$. Dephasing noise is also common in real environments because the system is sensitive to phase. The same Kraus operators also describe the dephasing noise on the fermionic system as in the qubit case.

In this section, we analyze the situation of injecting $t$ noisy $|\psi_4\rangle$, suffering particle loss and dephasing noise, into a Gaussian circuit. We again emphasize that this circuit can achieve universal quantum computation in a noiseless environment. To analyze this situation, we start by applying the explicit representation of the impact of noises on $|\psi_4\rangle$ directly. First, when we apply the particle loss to $|\psi_4\rangle$, we can obtain a density operator $\hat{\rho}_{\text{loss}}$ expressed as the following mixed state

\begin{center}
\begin{tabular}{@{\hspace{1.4cm}}c@{\hspace{1.4cm}\vrule\hspace{1.4cm}}c@{\hspace{1.4cm}}}
  \textbf{State} & \textbf{Probability} \\ \hline
  $|\phi_+\rangle$ & $\frac{N}{2}\ $           \\
  $|\phi_-\rangle$ & $\frac{N}{2}\ $ \\
  $|0001\rangle$ & $\frac{1}{2}p(1-p)\ $                         \\
  $|0010\rangle$& $\frac{1}{2}p(1-p)\ $ \\
  $|0100\rangle$ & $\frac{1}{2}p(1-p)\ $ \\
  $|1000\rangle $ & $\frac{1}{2}p(1-p)$,
\end{tabular}
\end{center}

where $|\phi_\pm\rangle \equiv [(1-p)|0000\rangle\pm p|\psi_4\rangle]/\sqrt{N}$ and $N \equiv p^2+(1-p)^2 = 2p^2-2p+1$. One can check that it comprises two non-Gaussian states and four Gaussian states. Similarly, when we apply the dephasing noise on $|\psi
_4\rangle$, we have a density operator $\hat{\rho}_{\text{dep}}$ expressed as the following mixed state
\begin{center}
\begin{tabular}{@{\hspace{0.26cm}}c@{\hspace{0.26cm}\vrule\hspace{1.42cm}}c@{\hspace{1.42cm}}}
  \textbf{State} & \textbf{Probability} \\ \hline
  $C_+|0011\rangle+C_-|1100\rangle$ & $\frac{1}{2}\ $          \\
  $C_-|0011\rangle+C_+|1100\rangle$ & $\frac{1}{2}$, 
\end{tabular}
\end{center}
where 
\begin{align}
        C_\pm & \equiv \frac{1}{2}\left(\sqrt{1+(1-2p)^4}\pm\sqrt{1-(1-2p)^4}\right). 
\end{align}
In this case, the ensemble contains no Gaussian state.

\subsection{Classical simulation}\label{Sec:Fermion_simulation}

First, let us discuss the particle loss case. To utilize our algorithm, we show that $|\phi_\pm\rangle$ can be decomposed into two superpositions of Gaussian states, respectively. We represent $|\phi_{\pm}\rangle$ as the superposition of two states
\begin{align}
        |\phi_\pm\rangle &  = \sqrt{M}\left(|\phi^0_\pm\rangle + |\phi^1_\pm\rangle\right),
\end{align}
where $M \equiv \frac{1}{4N}[2p^2 + (1-p)^2] = \frac{3p^2-2p+1}{4(2p^2-2p+1)}$ and,
\begin{align}
        |\phi^0_\pm\rangle & \equiv |00\rangle\left(\frac{1-p}{2\sqrt{NM}}|00\rangle \pm \frac{p}{\sqrt{2NM}}|11\rangle\right), \\
        |\phi^1_\pm\rangle & \equiv \left(\frac{1-p}{2\sqrt{NM}}|00\rangle \pm\frac{p}{\sqrt{2NM}}|11\rangle\right)|00\rangle.
\end{align}
We emphasize that both states are expressed as tensor products of two-mode states. Since every two-mode state is Gaussian~\cite{Gaussian, Magic}, both states are tensor products of Gaussian states, which implies that both states are also Gaussian. A simple proof of the Gaussianity of a two-mode even state is provided in App.~\ref{Two mode}. Next, we investigate the probability of sampling a lot of non-Gaussian states to apply the truncation method. In our ensemble, the probability of sampling a non-Gaussian state once is $N = 2p^2-2p+1$. By following Lem.~\ref{Truncation}, we define $f_{\text{loss}}(t,p,\delta)$ as the minimum integer such that $f_{\text{loss}}(t,p,\delta)\geq tN$ and
\begin{align}\label{f1}
        D\left(\frac{f_{\text{loss}}(t,p,\delta)+1}{t}\bigg\| N\right) & \geq \frac{1}{t}\log\left(\frac{2}{\delta}\right).
\end{align}
Finally, one can check that these have our desired structure,
\begin{align}
\begin{cases}
    \langle\phi_+^0|\phi_+^1\rangle  = \langle\phi_+^1|\phi^0_+\rangle = \langle\phi^0_-|\phi_-^1\rangle  = \langle\phi^1_-|\phi^0_-\rangle = 2\nu_{\text{loss}}^2-1, \\
    \langle\phi_+|\phi_+^0\rangle = \langle\phi_+|\phi_+^1\rangle = \langle\phi_-|\phi_-^0\rangle = \langle\phi_-|\phi^1_-\rangle = \nu_{\text{loss}},
\end{cases}
\end{align}
where
\begin{align}
        \nu_{\text{loss}} & \equiv \sqrt{\frac{2p^2-2p+1}{3p^2-2p+1}}.
\end{align}
Assume that we sample $m_1$ $|\phi_+\rangle$, $m_2$ $|\phi_-\rangle$ and $t-m_1-m_2$ Gaussian states, and $m_1$ and $m_2$ satisfy $m_1+m_2 > 16\nu_{\text{loss}}^{-2(m_1+m_2)}\delta^{-2}$. We remark that if $m_1+m_2$ does not satisfy this condition, then the next step is not applied. Without loss of generality, it is sufficient to discuss $|\phi_+^{\otimes m_1}\rangle|\phi_-^{\otimes m_2}\rangle$, because we are focusing on the fidelity. By applying Lem.~\ref{Low rank}, we can obtain a low-rank Gaussian superposition state that has high fidelity with the sampled state.
\begin{lemma}\label{Fermion particle loss}
    There is a state $|\mu\rangle$ such that 
    \begin{align}
            \left|\left(\langle\phi_+^{\otimes m_1}|\langle\phi_-^{\otimes m_2}|\right)|\mu\rangle\right|^2 & \geq 1-\frac{\delta^2}{4}
    \end{align}
    and $|\mu\rangle$ can be expressed as a superposition of $2^l$ Gaussian states satisfying $2^l\leq 16\nu_{\textup{loss}}^{-2(m_1+m_2)}\delta^{-2}$.
\end{lemma}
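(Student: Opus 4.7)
The plan is to reduce Lem.~\ref{Fermion particle loss} to a direct application of Lem.~\ref{Low rank}, after verifying that the decompositions of $|\phi_+\rangle$ and $|\phi_-\rangle$ set up earlier in this subsection fit the structural template required there. First, I would confirm that $\sqrt{M}$ plays the role of $1/(2\nu_{\text{loss}})$ by a one-line algebraic check: substituting $M=(3p^2-2p+1)/(4N)$ and $\nu_{\text{loss}}^2=N/(3p^2-2p+1)$ gives $4M\nu_{\text{loss}}^2=1$. Consequently, both $|\phi_+\rangle$ and $|\phi_-\rangle$ can be recast in the exact form
\begin{equation}
|\phi_\pm\rangle = \frac{1}{2\nu_{\text{loss}}}\bigl(|\phi_\pm^0\rangle+|\phi_\pm^1\rangle\bigr),
\end{equation}
with a single, common parameter $\nu_{\text{loss}}$, which is the key uniformity needed to apply Lem.~\ref{Low rank} to a mixed tensor product of $|\phi_+\rangle$'s and $|\phi_-\rangle$'s.

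Next, I would verify that each of the four building blocks $|\phi_\pm^0\rangle,|\phi_\pm^1\rangle$ is Gaussian. Each is a tensor product of a two-mode vacuum factor $|00\rangle$ and a two-mode even state supported on $\{|00\rangle,|11\rangle\}$. The vacuum is manifestly Gaussian, and by the result on two-mode even states recorded in App.~\ref{Two mode}, the second factor is Gaussian as well, so the full state is Gaussian and thus qualifies as a resourceless input for Lem.~\ref{Low rank}. The two inner-product conditions $\langle\phi_\pm^0|\phi_\pm^1\rangle=2\nu_{\text{loss}}^2-1$ have already been stated in the displayed system of equations preceding the lemma, so no new computation is needed there.

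Given these verifications, I would apply Lem.~\ref{Low rank} to the product state $|\Psi\rangle=|\phi_+^{\otimes m_1}\rangle|\phi_-^{\otimes m_2}\rangle$, viewed as a tensor product of $m=m_1+m_2$ magic states of the prescribed form with the common parameter $\nu_{\text{loss}}$, and with the approximation parameter set to $\delta_2=\delta^2/4$. The lemma then yields a state $|\mu\rangle$ with
\begin{equation}
\bigl|\langle\Psi|\mu\rangle\bigr|^2 \ge 1-\tfrac{\delta^2}{4},
\end{equation}
expressible as a superposition of at most $4\nu_{\text{loss}}^{-2(m_1+m_2)}\cdot(4/\delta^2)=16\nu_{\text{loss}}^{-2(m_1+m_2)}\delta^{-2}$ Gaussian states, which is exactly the claim.

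I do not expect a substantive obstacle here; the proof is essentially bookkeeping to match the hypotheses of Lem.~\ref{Low rank}. The only place where one must be a little careful is in treating the two different magic states $|\phi_+\rangle$ and $|\phi_-\rangle$ within a single invocation of the low-rank lemma. This is only an apparent difficulty, since the hypothesis of Lem.~\ref{Low rank} does not require all $|\psi_i\rangle$'s to be identical but only to share the same $\nu$, and our $|\phi_\pm\rangle$ do share the same $\nu_{\text{loss}}$. Any residual work lies entirely in App.~\ref{Two mode} for the Gaussianity of two-mode even states, which is imported as a standalone fact.
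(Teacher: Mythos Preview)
Your proposal is correct and matches the paper's approach: the paper simply states that the lemma follows ``by applying Lem.~\ref{Low rank},'' relying on the already-established decompositions $|\phi_\pm\rangle=\sqrt{M}(|\phi_\pm^0\rangle+|\phi_\pm^1\rangle)$ with the common $\nu_{\text{loss}}$, together with the Gaussianity of the $|\phi_\pm^j\rangle$ as tensor products of two-mode even states. Your explicit check that $4M\nu_{\text{loss}}^2=1$ and your observation that Lem.~\ref{Low rank} only requires a shared $\nu$ (not identical magic states) are exactly the bookkeeping the paper leaves implicit.
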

By the truncation method, the worst case is when we sample $f_{\text{loss}}(t,p,\delta)$ non-Gaussian states. Therefore, any sampled state from $\hat{\rho}_{\text{loss}}^{\otimes t}$ can be approximately replaced with a pure state with at most $O(\nu_{\text{loss}}^{-2f_{\text{loss}}(t,p,\delta)}\delta^{-2})$ Gaussian rank, and we can find it in at most $O(\nu_{\text{loss}}^{-2f_{\text{loss}}(t,p,\delta)}\delta^{-4})$ time. We again remark that the remaining work is just a few phase-sensitive Gaussian computations~\cite{Gaussian_rank2}. In short, the application of our strategy to the particle loss case can be summarized as follows:


\begin{theorem}
    Consider the noisy fermionic scheme with particle loss and let $\delta = \Omega(1/\textup{poly}(n))$. There is a classical algorithm for approximately simulating the circuit in at most $O(\nu_{\textup{loss}}^{-2f_{\textup{loss}}(t,p,\delta)}\textup{poly}(n))$ time, up to $\delta$ TVD.
\end{theorem}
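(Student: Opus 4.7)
The plan is to instantiate the three-step procedure from Sec.~\ref{Sec:Algorithm} with the ensemble already identified for $\hat{\rho}_{\text{loss}}$ and then glue the pieces together via the standard choice $\delta_1 = \delta/2$, $\delta_2 = \delta^2/4$, so that the overall TVD bound $\Delta(\delta_1,\delta_2)\le \delta$ in Eq.~\eqref{Eq:Total TVD} is immediate. First, I would record that the proposed decomposition of $\hat{\rho}_{\text{loss}}$ into the six-element ensemble $\{|\phi_+\rangle,|\phi_-\rangle,|0001\rangle,|0010\rangle,|0100\rangle,|1000\rangle\}$ with the stated weights is a valid convex decomposition (the weights are non-negative on $p\in[0,1]$ and sum to one, and $\frac{N}{2}|\phi_\pm\rangle\langle\phi_\pm|$ sums to $\frac{1}{2}[(1-p)^2|0000\rangle\langle 0000| + p^2|\psi_4\rangle\langle\psi_4|]$ plus the correct cross terms). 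Among these, only $|\phi_\pm\rangle$ are non-Gaussian, and each draw produces a non-Gaussian element with probability exactly $N = 2p^2-2p+1$.

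Next, I would apply Lem.~\ref{Truncation} with $\delta_1 = \delta/2$ to the Bernoulli-with-probability-$N$ process of counting non-Gaussian draws. By definition, $f_{\text{loss}}(t,p,\delta)$ in Eq.~\eqref{f1} is precisely the smallest $k$ satisfying the Chernoff criterion with slack $\log(2/\delta)$, so discarding trajectories with more than $f_{\text{loss}}(t,p,\delta)$ non-Gaussian samples contributes at most $\delta/2$ to the total TVD (and yields a post-truncation accept probability at least $1-\delta/2$, hence constant under $\delta=\Omega(1/\mathrm{poly}(n))$, so resampling costs $O(1)$ on average). Conditioned on a retained trajectory with $m_1$ copies of $|\phi_+\rangle$ and $m_2$ copies of $|\phi_-\rangle$, with $m_1+m_2 \le f_{\text{loss}}(t,p,\delta)$, I would invoke Lem.~\ref{Fermion particle loss} with $\delta_2 = \delta^2/4$ to replace the non-Gaussian block by a low-rank Gaussian superposition of at most $16\nu_{\text{loss}}^{-2(m_1+m_2)}\delta^{-2}$ terms, contributing (via the fidelity-to-TVD inequality) at most $(\delta/2)\cdot (\text{conditional weight})$ to the TVD as in Eq.~\eqref{Eq:Total TVD}. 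The Gaussian sector of the sampled input (the ancilla $|0^{\otimes n}\rangle$ and the tensor factors drawn from the four Gaussian elements) contributes rank one, so the overall Gaussian rank of the simulated input is bounded by the same quantity.

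Putting the pieces together, I would bound the per-sample cost in the worst retained case by the $O(\nu_{\text{loss}}^{-2f_{\text{loss}}(t,p,\delta)}\delta^{-4})$ cost of finding the low-rank Gaussian surrogate, followed by that many phase-sensitive Gaussian computations, each of cost $O(\mathrm{poly}(n))$ via the matchgate/Gaussian-rank machinery of Refs.~\cite{Classical_FLO1,Gaussian_rank2}. Since $\delta=\Omega(1/\mathrm{poly}(n))$, the $\delta^{-4}$ factor is absorbed into $\mathrm{poly}(n)$, and the total runtime simplifies to $O(\nu_{\text{loss}}^{-2f_{\text{loss}}(t,p,\delta)}\mathrm{poly}(n))$, exactly as claimed. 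The expected obstacle is not any single step but the bookkeeping showing that the two Gaussianity hypotheses of Lem.~\ref{Fermion particle loss} are met by the decomposition $|\phi_\pm\rangle = \sqrt{M}(|\phi_\pm^0\rangle+|\phi_\pm^1\rangle)$: one must verify the overlap identities $\langle\phi_\pm^0|\phi_\pm^1\rangle = 2\nu_{\text{loss}}^2 - 1$ (a direct calculation using $M = (3p^2-2p+1)/[4(2p^2-2p+1)]$), and that $|\phi_\pm^0\rangle$, $|\phi_\pm^1\rangle$ are genuinely Gaussian, which reduces to the two-mode even-parity Gaussianity fact proved in App.~\ref{Two mode}. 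Once those two checks are in place, the theorem follows by direct substitution into the general bound derived in Sec.~\ref{Sec:Algorithm}.
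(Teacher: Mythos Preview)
Your proposal is correct and follows essentially the same approach as the paper: the theorem is stated there as a direct summary of the preceding discussion in Sec.~\ref{Sec:Fermion_simulation}, which instantiates the general framework of Sec.~\ref{Sec:Algorithm} with the six-element ensemble for $\hat{\rho}_{\text{loss}}$, applies Lem.~\ref{Truncation} with threshold $f_{\text{loss}}(t,p,\delta)$, invokes Lem.~\ref{Fermion particle loss} (after checking the Gaussianity of $|\phi_\pm^{0,1}\rangle$ via App.~\ref{Two mode} and the overlap identity), and then absorbs the $\delta^{-4}$ overhead into $\mathrm{poly}(n)$ using $\delta=\Omega(1/\mathrm{poly}(n))$. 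Your bookkeeping of the two error budgets and the per-sample cost matches the paper's exactly.
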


In the end, we investigate when this noisy scheme with particle loss collapses into a classical computational regime. As in the qubit case, we show which condition on $p$ (or $t$) makes this noisy scheme collapse into a classical-simulable regime when $t$ (or $p$) is fixed. 

\begin{theorem}\label{Thm:Fermion_poly}
    Consider the noisy scheme with particle loss and let $\delta = \Omega(1/\textup{poly}(n))$. Assume that the number of injected magic states $t$ is given. If the transmission rate $p$ satisfies
    \begin{align}
    p & = O\left(\sqrt{\frac{\log n}{t}}\right),
    \end{align}
    then the time complexity becomes $O(\textup{poly}(n))$. Conversely, assume that the transmission rate $p$ is given. If the number of injected magic states $t$ satisfies
    \begin{align}
    t & = O\left(\frac{\log n}{p^2}\right),
    \end{align}
    then the time complexity becomes $O(\textup{poly}(n))$.
\end{theorem}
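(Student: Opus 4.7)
The plan is to bound the runtime $T(t,p,\delta) = O(\nu_{\text{loss}}^{-2 f_{\text{loss}}(t,p,\delta)}\,\text{poly}(n))$ established in Sec.~\ref{Sec:Fermion_simulation} and pin down when it is polynomial in $n$. The decisive observation is that $\nu_{\text{loss}}^{-2}-1$ is \emph{quadratically} small in $p$ near $p=0$, so the exponent $f_{\text{loss}}\,\ln \nu_{\text{loss}}^{-2}$ behaves like $p^2 t$, which produces the announced square-root threshold $p = O(\sqrt{\log n/t})$. This is precisely what distinguishes the fermionic particle-loss case from the qubit dephasing case of Thm.~\ref{Thm:Qubit_poly1}, where the analogous expansion around the boundary is linear.

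The first step is to rewrite
\[
    \nu_{\text{loss}}^{-2} \;=\; \frac{3p^2 - 2p + 1}{2p^2 - 2p + 1} \;=\; 1 + \frac{p^2}{2p^2 - 2p + 1}.
\]
Since $2p^2 - 2p + 1$ is minimized at $p = 1/2$ with value $1/2$, we have $2p^2 - 2p + 1 \geq 1/2$ on $[0,1]$. Combined with $\ln(1+x) \leq x$, this gives $\ln \nu_{\text{loss}}^{-2} \leq 2p^2$, and hence $\nu_{\text{loss}}^{-2 f_{\text{loss}}} \leq \exp(2 p^2 f_{\text{loss}})$.

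Next, I would control $f_{\text{loss}}(t,p,\delta)$. Although $f_{\text{loss}}$ is implicitly defined through the Chernoff-style KL-divergence condition in Eq.~\eqref{f1}, the trivial inequality $f_{\text{loss}} \leq t$ already suffices for the polynomial-scaling claim. Combining the two bounds yields $\nu_{\text{loss}}^{-2 f_{\text{loss}}} \leq \exp(2 p^2 t)$, so the total runtime is at most $\exp(2 p^2 t)\,\text{poly}(n)$. Under the hypothesis $p = O(\sqrt{\log n/t})$ this is $\exp(O(\log n))\,\text{poly}(n) = \text{poly}(n)$, which proves the first direction. The converse direction is symmetric: fixing $p$ and imposing $t = O(\log n/p^2)$ again forces $p^2 t = O(\log n)$, and the same estimate yields polynomial runtime.

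The main obstacle I expect is ensuring the $\log(2/\delta)$ correction buried in the definition of $f_{\text{loss}}$ does not spoil the threshold. The hypothesis $\delta = \Omega(1/\text{poly}(n))$ keeps $\log(2/\delta) = O(\log n)$, and a Pinsker-type estimate gives a correction to $f_{\text{loss}}$ over its mean $tN$ of order $O(\sqrt{tN(1-N)\log n})$, which is $o(t)$ and is therefore absorbed by the trivial bound $f_{\text{loss}} \leq t$. A more refined analysis could sharpen implicit constants and possibly extend the regime of validity beyond small $p$, but the crude chain above already yields the announced polynomial-scaling theorem.
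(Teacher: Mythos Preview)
Your proposal is correct and follows essentially the same route as the paper: both use the trivial bound $f_{\text{loss}}\le t$, compute $\nu_{\text{loss}}^{-2}-1=\tfrac{p^2}{2p^2-2p+1}\le 2p^2$ via $2p^2-2p+1\ge\tfrac12$, and then apply $\ln(1+x)\le x$ to conclude $\nu_{\text{loss}}^{-2f_{\text{loss}}}\le\exp(2p^2t)=\text{poly}(n)$ under either hypothesis. The only cosmetic difference is that the paper packages the last step as a standalone lemma (``$x^t=O(\text{poly}(n))$ whenever $x=1+O(\log n/t)$''), whereas you carry out the computation inline.
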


A detailed proof is provided in App.~\ref{App:Poly}. In this case, one can have the following intuitions: the noisy scheme may collapse into a classical simulable regime when $p$ is near zero, because every state becomes the vacuum state, which does not induce any computational power, when $p = 0$. Or, the same instance may happen when $t$ is near $O(\log n)$, because this scheme can be classically simulated regardless of $p$ when $t = O(\log n)$. Thm.~\ref{Thm:Fermion_poly} aligns with these intuitions. When $t$ is fixed, the threshold on $p$ decreases with $t$. As in the qubit case, it indicates that the more magic states are injected, the more tolerant the computational power becomes to physical noise. The extreme case is also consistent. When $t=0$, the classically simulable region extends over all $p$, which is a similar result to the qubit case.

The opposite case shows similar behavior. As desired, the threshold on $t$ decreases as $p$ approaches zero, corresponding to the extremely noisy case. In the opposite direction, i.e., as $p$ approaches $1$, the threshold on $t$ approaches $O(\log n)$, which aligns with our intuition.

This analysis can be extended to other physical noises. More concretely, we also investigate the effect of dephasing noise. We remark that our described ensemble in this case consists of solely non-Gaussian states. Therefore, we always sample $t$ non-Gaussian states from this description and cannot apply the truncation method. However, we show Gaussian decompositions of these states to apply Lem.~\ref{Low rank}. Denote $|\omega_\pm\rangle \equiv C_\pm|0011\rangle + C_\mp|1100\rangle$. Although our proposed description consists of two non-Gaussian states, each state can also be decomposed as two Gaussian states,
\begin{align}
        |\omega_\pm\rangle & \equiv \sqrt{E}\left(|\omega^0_\pm\rangle+|\omega^1_\pm\rangle\right),
\end{align}
where, for $j=0,1$,
\begin{align}\label{Eq:Gaussian in Fermion_dephasing}
    |\omega^j_{\pm}\rangle  & \equiv \frac{1}{\sqrt{E}}(C_{\pm}|0011\rangle + C_{\mp}|1100\rangle \nonumber \\
    & + (-1)^j D |0000\rangle + (-1)^jD|1111\rangle),
\end{align}
$D \equiv (1-2p)^2/\sqrt{2}$ and $E \equiv (1-2p)^4+1$. We check that these states are Gaussian in App.~\ref{Check Gaussian}. Furthermore, one can check that these states have our desired structures,
\begin{align}
    \begin{cases}\langle\omega_+^0|\omega_+^1\rangle  = \langle\omega_+^1|\omega^0_+\rangle  = \langle\omega^0_-|\omega_-^1\rangle  = \langle\omega^1_-|\omega^0_-\rangle = 2\nu_{\text{dep}}^2-1, \\
    \langle\omega_+|\omega_+^0\rangle = \langle\omega_+|\omega_+^1\rangle = \langle\omega_-|\omega_-^0\rangle = \langle\omega_-|\omega^1_-\rangle = \nu_{\text{dep}},
    \end{cases}
\end{align}
where
    \begin{align}\label{Eq:nu2}
            \nu_{\text{dep}} & \equiv \sqrt{\frac{1}{1+(1-2p)^4}}.
    \end{align}
Now assume that we sample $m_1$ $|\omega_+\rangle$ and $m_2$ $|\omega_-\rangle$, where $m_1+m_2 = t$. As we said before, we assume $t > 16\nu_{\text{dep}}^{-2t}\delta^{-2}$. By applying Lem.~\ref{Low rank}, we can obtain a low-rank superposition of Gaussian states that has high fidelity with $|\omega_+^{\otimes m_1}\rangle|\omega_-^{\otimes m_2}\rangle$.
\begin{lemma}\label{Fermion Dephasing}
    There is a state $|\mu\rangle$ such that
    \begin{align}
            \left|\left(\langle\omega_+^{\otimes m_1}|\langle\omega_-^{\otimes m_2}|\right)|\mu\rangle\right|^2 & \geq 1-\frac{\delta^2}{4}
    \end{align}
    and $|\mu\rangle$ can be expressed as a superposition of $2^l$ Gaussian states satisfying $2^l\leq 16\nu_{\textup{dep}}^{-2t}\delta^{-2}$.
\end{lemma}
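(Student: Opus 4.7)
The strategy is to instantiate Lemma~\ref{Low rank} directly, treating $|\omega_+^{\otimes m_1}\rangle\otimes|\omega_-^{\otimes m_2}\rangle$ as a product of $m = m_1+m_2 = t$ magic states sharing the common parameter $\nu = \nu_{\textup{dep}}$. The lemma is stated for an arbitrary sequence $|\psi_1\rangle,\ldots,|\psi_m\rangle$ of magic states with a common $\nu$, so having two distinct species $|\omega_\pm\rangle$ poses no obstacle. Setting the lemma's error parameter to $\delta_2 = \delta^2/4$ will reproduce the advertised bound $2^l \leq 4\nu_{\textup{dep}}^{-2t}\cdot(4/\delta^2) = 16\nu_{\textup{dep}}^{-2t}\delta^{-2}$ together with fidelity at least $1-\delta^2/4$.

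To justify the invocation, I would verify the structural hypotheses. Using the definitions of $|\omega_\pm^j\rangle$ in Eq.~\eqref{Eq:Gaussian in Fermion_dephasing} together with $C_+^2+C_-^2 = 1$ and $2D^2 = (1-2p)^4$, one checks $\|\omega_\pm^j\| = 1$ and $\langle \omega_\pm^0|\omega_\pm^1\rangle = (1-(1-2p)^4)/E$, and observes that the $|0000\rangle,|1111\rangle$ components cancel in $|\omega_\pm^0\rangle+|\omega_\pm^1\rangle$ so that this sum is proportional to $|\omega_\pm\rangle$. Substituting $\nu_{\textup{dep}}^2 = 1/E$ from Eq.~\eqref{Eq:nu2} converts these identities into precisely the relations $\langle\omega_\pm^0|\omega_\pm^1\rangle = 2\nu_{\textup{dep}}^2-1$ and $\langle\omega_\pm|\omega_\pm^j\rangle = \nu_{\textup{dep}}$ required by the template of Lemma~\ref{Low rank}. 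Gaussianity of each $|\omega_\pm^j\rangle$ is handled separately in App.~\ref{Check Gaussian}.

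With the hypotheses in place, Lemma~\ref{Low rank} applied to the $t$-fold product with $\delta_2 = \delta^2/4$ outputs a state $|\mu\rangle$ expressible as a superposition of at most $2^l\leq 16\nu_{\textup{dep}}^{-2t}\delta^{-2}$ Gaussian states satisfying $|\langle\omega_+^{\otimes m_1}\otimes\omega_-^{\otimes m_2}|\mu\rangle|^2 \geq 1 - \delta^2/4$, which is exactly the desired conclusion.

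The only non-routine ingredient is the Gaussianity of each four-mode constituent $|\omega_\pm^j\rangle$. Unlike the particle-loss case, where App.~\ref{Two mode} applies because the nontrivial constituents are two-mode even states, here $|\omega_\pm^j\rangle$ is a genuine four-mode superposition of $|0000\rangle,|0011\rangle,|1100\rangle,|1111\rangle$, so Gaussianity must be established by exhibiting an explicit Bogoliubov (matchgate) circuit that prepares it from the vacuum, or equivalently by verifying that its two-point correlation function factorizes via Wick's theorem. This is the main technical hurdle in completing the argument rigorously; everything else reduces to the bookkeeping identities above and a direct application of the previously proved Lemma~\ref{Low rank}.
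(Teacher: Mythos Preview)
Your proposal is correct and matches the paper's approach exactly: the paper does not give a separate proof of this lemma but simply invokes Lem.~\ref{Low rank} after recording the identities $\langle\omega_\pm^0|\omega_\pm^1\rangle=2\nu_{\textup{dep}}^2-1$ and $\langle\omega_\pm|\omega_\pm^j\rangle=\nu_{\textup{dep}}$, with $\delta_2=\delta^2/4$.

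One remark on your final paragraph. You describe the Gaussianity of $|\omega_\pm^j\rangle$ as the ``main technical hurdle'' requiring an explicit Bogoliubov circuit or a Wick-theorem check, because the state is a ``genuine four-mode superposition''. In fact the paper's verification in App.~\ref{Check Gaussian} is simpler than you anticipate: each $|\omega_\pm^j\rangle$ factors as a tensor product of two two-mode even states,
\[
|\omega_\pm^j\rangle=(\alpha_\pm|00\rangle+\beta_\pm^j|11\rangle)\otimes(\gamma_\pm^j|00\rangle+\kappa_\pm|11\rangle),
\]
for explicit coefficients. Since every two-mode even state is Gaussian (App.~\ref{Two mode}), so is the product. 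Thus the same reduction used in the particle-loss case works here too; no separate Bogoliubov or Wick argument is needed.
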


In short, the application of our strategy to this case is characterized as follows:
\begin{figure*}[t!]
    \centering
    \includegraphics[width=1.0\linewidth]{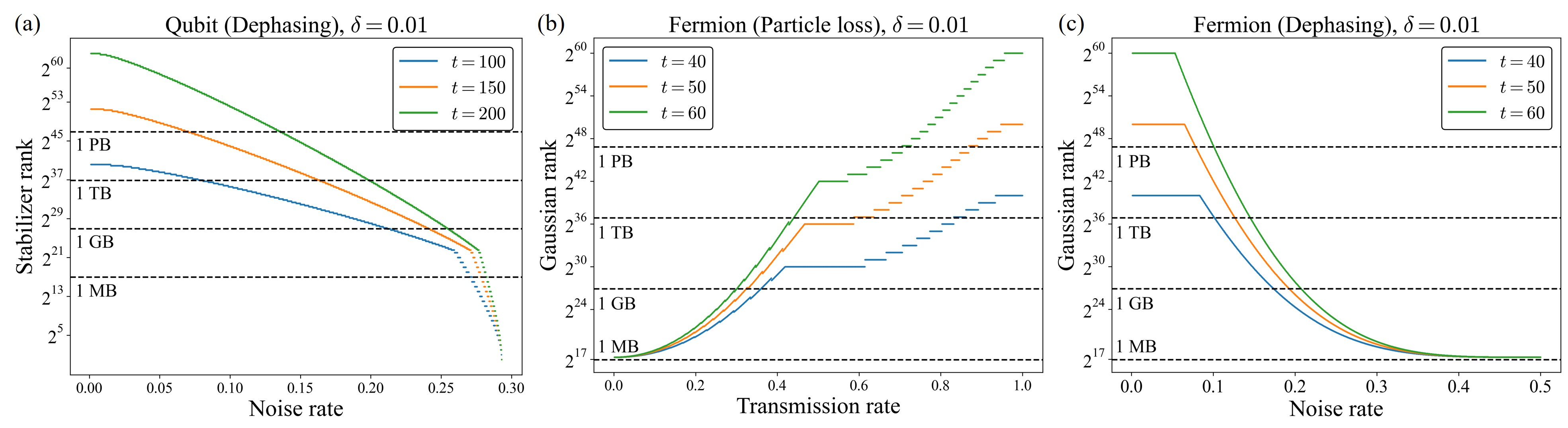}
    \caption{The resourceless rank of the output state (log scale) over the noise rate $p$. We set $\delta=0.01$ and investigate the various numbers of qubits (or modes). Each dashed line indicates the number of state coefficients that can be stored for a given amount of data storage. (a) Qubit case with the dephasing noise. (b) Fermion case with the particle loss. (c) Fermion case with the dephasing noise.}
    \label{fig:Numerical}
\end{figure*}

\begin{theorem}
    Consider the noisy fermionic scheme with dephasing noise and let $\delta = \Omega(1/\textup{poly}(n))$. There is a classical algorithm for approximately simulating the circuit in at most $O(\nu_{\textup{dep}}^{-2t}\textup{poly}(n))$ time, up to $\delta$ TVD.
\end{theorem}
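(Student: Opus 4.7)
The plan is to apply the general three-step framework of Section~\ref{Sec:Algorithm} directly, specialized to the ensemble of $\hat{\rho}_{\text{dep}}$ fixed in Section~\ref{Sec:Fermion_problem}. The crucial structural observation is that this ensemble contains \emph{only} the two non-Gaussian states $|\omega_+\rangle$ and $|\omega_-\rangle$, each with probability $1/2$. Consequently, every draw of $t$ i.i.d.\ states yields $t$ non-Gaussian components, so the truncation subroutine of Section~\ref{Sec:Truncation} is vacuous: we may set $\delta_1 = 0$, and the only source of approximation error is the low-rank Gaussian reconstruction step. In particular, there is no analog of the function $f_{\text{loss}}(t,p,\delta)$ controlling the worst-case count; the worst-case and the typical-case sample both contain exactly $t$ non-Gaussian components.

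Concretely, I would first sample an index string giving $m_1$ copies of $|\omega_+\rangle$ and $m_2$ copies of $|\omega_-\rangle$ with $m_1+m_2=t$, forming $|\Psi\rangle = |\omega_+^{\otimes m_1}\rangle\otimes|\omega_-^{\otimes m_2}\rangle$. Then I would invoke Lemma~\ref{Fermion Dephasing}, which is the specialization of Lemma~\ref{Low rank} to this setting and relies on the two ingredients already established in the text: (i) the decompositions $|\omega_\pm\rangle = \sqrt{E}\bigl(|\omega_\pm^0\rangle + |\omega_\pm^1\rangle\bigr)$ with $|\omega_\pm^j\rangle$ Gaussian, verified in App.~\ref{Check Gaussian}; and (ii) the overlap identities $\langle\omega_\pm^0|\omega_\pm^1\rangle = 2\nu_{\text{dep}}^2 - 1$ with $\nu_{\text{dep}}$ as in Eq.~\eqref{Eq:nu2}. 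Choosing $\delta_2 = \delta^2/4$, Lemma~\ref{Fermion Dephasing} produces a Gaussian superposition $|\mu\rangle$ of rank at most $2^l \leq 16\,\nu_{\text{dep}}^{-2t}\delta^{-2}$ with $|\langle\Psi|\mu\rangle|^2\geq 1-\delta^2/4$, constructible in $O(\nu_{\text{dep}}^{-2t}\delta^{-4})$ time.

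To conclude, I would run the remaining circuit as a phase-sensitive Gaussian computation over the terms of $|\mu\rangle$, which costs $O(\nu_{\text{dep}}^{-2t}\delta^{-2}\,\textup{poly}(n))$ by the Gaussian-rank-based simulators of Refs.~\cite{Gaussian_rank1,Gaussian_rank2}. The TVD bound follows from Eq.~\eqref{Eq:Total TVD}: with $\delta_1=0$ it reduces to $\sqrt{\delta_2} = \delta/2 \leq \delta$. Finally, under the standing assumption $\delta = \Omega(1/\textup{poly}(n))$, the factors $\delta^{-2}$ and $\delta^{-4}$ are absorbed into $\textup{poly}(n)$, so the overall runtime collapses to $O(\nu_{\text{dep}}^{-2t}\,\textup{poly}(n))$.

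The only place where anything nontrivial must be checked is the Gaussianity of the auxiliary states $|\omega_\pm^j\rangle$ defined in Eq.~\eqref{Eq:Gaussian in Fermion_dephasing}, since each of them is a four-mode superposition of number states and is not obviously a free-fermion state. This is the main obstacle, and it is what forces us into the $\nu_{\text{dep}}^{-2t}$ scaling rather than something smaller: the dephased ensemble admits no Gaussian branches at all, so all the cost reduction must be extracted from the within-state decomposition, not from truncation. Once the Gaussianity check of App.~\ref{Check Gaussian} is in hand, together with the overlap computations $\langle\omega_\pm|\omega_\pm^j\rangle = \nu_{\text{dep}}$, the remainder of the argument is a mechanical composition of Lemma~\ref{Fermion Dephasing} with the phase-sensitive Gaussian simulator, exactly paralleling the particle-loss case but with the truncation step omitted.
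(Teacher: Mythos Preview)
Your proposal is correct and follows essentially the same route as the paper: the paper likewise observes that the dephasing ensemble contains only the two non-Gaussian states $|\omega_\pm\rangle$, so truncation is vacuous, and the runtime bound follows directly from Lemma~\ref{Fermion Dephasing} together with the phase-sensitive Gaussian simulator and the standing assumption $\delta=\Omega(1/\textup{poly}(n))$. Your choice $\delta_1=0$ (rather than the paper's generic $\delta_1=\delta/2$) is a cosmetic simplification that exploits this vacuity, not a different argument.
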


Moreover, we investigate when this noisy scheme with dephasing noise collapses into a classical computational regime, for both cases where $p$ or $t$ is given. As we discussed previously, it is sufficient to discuss only $p\in[0,1/2]$, because the dephasing noise has a symmetry with the axis $p = 1/2$. Therefore, we restrict the noise rate to the interval $[0,1/2]$ in the remainder of the discussion.
\begin{theorem}\label{Thm:Fermion_poly2}
    Consider the noisy scheme with dephasing noise and let $\delta = \Omega\left(1/\textup{poly}(n)\right)$. Assume that the number of injected magic states $t$ is given. If the noise rate $p$ satisfies
    \begin{align}
    \frac{1}{2} - p & = O\left(\left(\frac{\log n}{t}\right)^{\frac{1}{4}}\right),
    \end{align}
    then the time complexity becomes $O(\textup{poly}(n))$. Conversely, assume that the noise rate $p$ is given. If the number of injected magic states $t$ satisfies
    \begin{align}
    t & = O\left(\frac{\log n}{\left(\frac{1}{2}- p\right)^4}\right),
    \end{align}
    then the time complexity becomes $O(\textup{poly}(n))$.
\end{theorem}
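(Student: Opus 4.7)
The plan is to reduce the theorem to straightforward asymptotic manipulation of the runtime bound $O(\nu_{\text{dep}}^{-2t}\text{poly}(n))$ established in the immediately preceding theorem. Since $\delta = \Omega(1/\text{poly}(n))$, the $\delta^{-2}$ and $\delta^{-4}$ factors collected through Lem.~\ref{Fermion Dephasing} and Sec.~\ref{Sec:Low-rank} are polynomial in $n$ and can be absorbed into the $\text{poly}(n)$ factor. The only quantity that can destroy polynomiality is $\nu_{\text{dep}}^{-2t}$; thus the polynomial-time condition reduces to $\nu_{\text{dep}}^{-2t} = O(\text{poly}(n))$, equivalently $t\,\log \nu_{\text{dep}}^{-2} = O(\log n)$.

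Next, I would substitute the explicit expression from Eq.~\eqref{Eq:nu2}, giving
\begin{align}
\nu_{\text{dep}}^{-2t} = \left(1 + (1-2p)^4\right)^t,
\end{align}
so that the polynomiality condition becomes $t\,\log\!\left(1 + (1-2p)^4\right) = O(\log n)$. Since $p \in [0, 1/2]$ implies $(1-2p)^4 \in [0,1]$, I would apply the elementary bound $\log(1+x) \leq x$ valid for $x \geq 0$, yielding the sufficient condition $t\,(1-2p)^4 = O(\log n)$. Writing $1-2p = 2(\tfrac{1}{2}-p)$ converts this to $16\,t\,(\tfrac{1}{2}-p)^4 = O(\log n)$, which drives both statements.

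For the first claim, I fix $t$ and solve for $p$: the inequality $(\tfrac{1}{2}-p)^4 = O(\log n / t)$ gives $\tfrac{1}{2}-p = O\!\left((\log n/t)^{1/4}\right)$, as stated. For the second claim, I fix $p$ and solve for $t$, obtaining $t = O\!\left(\log n / (\tfrac{1}{2}-p)^4\right)$, again as stated.

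The main consistency check to verify, rather than a substantial obstacle, is that the regime in which the Low-rank subroutine of Sec.~\ref{Sec:Low-rank} is not triggered (i.e., $t \leq 16\nu_{\text{dep}}^{-2t}\delta^{-2}$) still respects the same $O(\nu_{\text{dep}}^{-2t}\text{poly}(n))$ bound. In that case the direct Gaussian expansion of the sampled pure state already has $2^t \leq 16\nu_{\text{dep}}^{-2t}\delta^{-2}$ components, so the cost stays within the claimed bound. Beyond this consistency check the argument is routine algebra; the only analytic input is the first-order Taylor bound $\log(1+x) \leq x$ near $x = 0$, which is precisely what produces the quartic suppression near the fully dephased point $p = 1/2$ responsible for the characteristic $(\log n / t)^{1/4}$ scaling.
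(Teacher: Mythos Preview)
Your proposal is correct and follows essentially the same approach as the paper: both reduce to showing $\nu_{\text{dep}}^{-2t}=(1+(1-2p)^4)^t=O(\text{poly}(n))$ and then use the elementary bound $\log(1+x)\le x$ to convert this into the condition $t(1-2p)^4=O(\log n)$. The only cosmetic difference is that the paper packages the step ``$x=1+O(\log n/t)\Rightarrow x^t=O(\text{poly}(n))$'' into a separate lemma (Lemma~\ref{Poly lemma}) rather than applying $\log(1+x)\le x$ inline as you do.
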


A detailed proof is provided in App.~\ref{App:Poly}. We mention that the critical point is when $p = 1/2$. As desired, Thm.~\ref{Thm:Fermion_poly2} also shows similar results to the prior discussions. For the result related to the threshold on $p$, the smaller $t$ is, the more the classical simulable regime extends to the interval of $p$. Conversely, for the threshold on $t$, the closer $p$ approaches the critical point, the larger the threshold becomes. Finally, these results also recover the noiseless limits, as discussed previously.

In short, we analytically characterize how particle loss and dephasing noise degrade the computational power of a fermionic circuit that allows universal quantum computation. Moreover, we identify noise and resource thresholds in terms of $p$ and $t$ at which the noisy scheme collapses into a classically simulable regime. Finally, we verify that our analysis is consistent with the noiseless limit described by the Gaussian computation.

\section{Numerical results}\label{Sec:Computational boundary}
In the preceding sections, we introduced a classical simulation algorithm for a quantum circuit with noisy magic states and demonstrated its application to several representative schemes. As discussed earlier, the output of our algorithm is a low-rank superposition of resourceless states, which dominates the computational cost of the remaining algorithm, a few phase-sensitive resourceless computations. In this section, we quantify the resourceless rank of the output state in the worst case under several situations, varying both the number of injected magic states and the noise rate. More precisely, we plot the resourceless rank of our output state in the worst case over the noise rate $p$, with the various number of qubits (or modes) $n$ and a fixed $\delta=0.01$~(see Fig.~\ref{fig:Numerical}).

We remark that our strategy has a decision point: whether to use the second subroutine described in Lem.~\ref{Low rank}. Therefore, one can see that there is a transition of the shape of functions in Fig.~\ref{fig:Numerical}. For instance, in Fig.~\ref{fig:Numerical}~(b), the function becomes discrete after the transition, because the Gaussian rank of the worst case without the second subroutine is $2^{f_{\text{loss}}(t,p,\delta)}$ and $f_{\text{loss}}(t,p,\delta)$ is discrete over $p$.

As desired, in the qubit case, indicated by Fig.~\ref{fig:Numerical}~(a), the stabilizer rank approaches $1$ as the noise rate goes to the threshold. On the other hand, the fermion cases, indicated by Fig.~\ref{fig:Numerical}~(b) and~(c), approach a larger value, $2^{17}$. The reason is that the probabilities of sampling a non-Gaussian are always over $0$. More precisely, the probability in the former case is at least $1/2$, and the probability in the latter case is always $1$. It implies that the second subroutine dominates near the extreme point, so that the Gaussian rank of the worst case is described as $16\nu_{\text{loss}}^{-2f_{\text{loss}}(t,p,\delta)}\delta^{-2}$ (or $16\nu_{\text{dep}}^{-2t}\delta^{-2}$). Therefore, the lower bound $16\delta^{-2}$ occurs~(In this case, $16\delta^{-2}\simeq 2^{17}$), because $\nu_{\text{loss}}$ and $\nu_{\text{dep}}$ is limit to $1$ as $p$ approaches to the extreme point.

Furthermore, we represent the number of state coefficients, which are complex numbers, that can be stored for a given amount of data storage~(see dashed lines in Fig.~\ref{fig:Numerical}). More precisely, we can derive the possible number of stored complex numbers by dividing a given amount of data storage by the required memory for each complex number. We note that 8 bytes are required for storing a single complex number with the complex64 data type. 

For instance, in Fig.~\ref{fig:Numerical}~(a), if 1 TB data storage is given, then we can store coefficients of the results of our algorithm when $p$ is greater than approximately $0.0809$ and $t = 100$. In contrast, when $t = 200$, the possible region becomes tight; we can store the coefficients of the results when $p$ is greater than approximately $0.1993$, given the memory.

Similarly, in Fig.~\ref{fig:Numerical}~(b), 1 TB data storage can save the cases when $p$ is less than approximately $0.8320$ if $t = 40$. Whereas, if $t = 60$, the possible region of $p$ is tighter; the given memory can save the cases when $p$ is less than approximately $0.4402$. Fig.~\ref{fig:Numerical}~(c) also shows a similar behavior; 1 TB data storage can save the coefficients of the results when $p$ is over than approximately $0.1015$ if $t = 40$, however, if $t = 60$, it can save the coefficients of the cases when $p$ is over than approximately $0.1453$.
 
We emphasize that more storage is required for the actual simulation, as we must store classical descriptions of states. However, these values are polynomial (over $n$) multiples of the amount of memory required to store state coefficients, since states are resourceless.

\section{Discussion}\label{Sec:Discussion}
We have studied the classical simulability of a quantum circuit with injected noisy magic states, which permits universal quantum computation in a noiseless environment. To achieve our goal, we propose a novel classical simulation algorithm for this noisy circuit and quantify its computational power by characterizing the simulation cost of our algorithm. More precisely, our primary strategy is to transform a simulation with density operators into one with a low-rank resourceless superposition state. Hence, the output of our algorithm is a classical description of a low-rank resourceless superposition state, and our simulation costs can be characterized as its resourceless rank. Finally, since the simulation costs can be expressed in a closed form, we can identify classical simulation regimes that require polynomial-scale costs to simulate a quantum circuit when we apply our algorithm to a specific situation.

Our strategy comprises three parts: finding a suitable ensemble, sampling and truncation methods, and generating a low-rank resourceless superposition state. We begin by focusing on the probabilistic structure of density operators and show the equivalence between sampling input states and the original scheme. Furthermore, since it can be interpreted as a binomial sampling procedure in the computational cost aspect, we truncate the cases that require large computational costs, but have low probabilities. Finally, although its sampled input state can generally remain a high-rank resourceless superposition state, we show that we can reduce its rank with high fidelity by benchmarking Ref.~\cite{Bravyi}. Throughout our algorithm, constructing a well-designed ensemble description is also a crucial part. Hence, we illustrate the conditions of mixed states tailored to our algorithm.

Furthermore, we show how our approach can be applied to specific situations. We focus on two universal quantum computing systems, the qubit and fermionic systems. We model a noisy channel as dephasing noise in the qubit system and as particle loss and dephasing noise in the fermionic system. Furthermore, we investigate when its noisy situation collapses into a classical simulable region over two parameters, the number of injected magic states $t$ and the noise rate (or transmission rate) $p$. Finally, we numerically measure the resourceless ranks of the worst case when our algorithm is applied to various situations.

There are several open questions still. One of the natural open questions is the optimal decomposition of magic states. For example, stabilizer and Gaussian extent are the quantities for evaluating stabilizer and Gaussian state decomposition~\cite{Gaussian_rank1, Gaussian_rank2, Stabilizer_rank, Bosonic_Gaussian_rank1, Bosonic_Gaussian_rank2}. We anticipate that if one can find its extent-optimal decomposition, then this decomposition will yield better performance. Another open question is the optimal choice of an ensemble of noisy magic states. As discussed previously, the performance of our algorithm depends on the choice of the ensemble. We anticipate that one can find an ensemble more tailored to our algorithm for specific parameters. Additionally, one can combine our strategy and another approach to simulate these noisy schemes classically, such as the matrix product state and the matrix product operator~\cite{Classical_BS1, Classical_GBS1, Classical_BS3, Classical_RCS3, MPS, MPS1, MPS2, MPS3}. Lastly, we anticipate that our strategy can be applied to other various scenarios, such as other magic states or systems.

\acknowledgments
We are grateful to Su-un Lee for helpful discussions on magic state distillation. J.H., S.P., and C.O. were supported by the National
Research Foundation of Korea Grants (No. RS-2024-00431768 and No. RS-2025-00515456) funded by the Korean government (Ministry of Science and ICT~(MSIT)) and the Institute of Information \& Communications Technology Planning \& Evaluation (IITP) Grants funded by the Korea government (MSIT) (No. IITP-2025-RS-2025-02283189 and IITP-2025-RS-2025-02263264).
This work was supported by Global Partnership Program of Leading Universities in Quantum Science and Technology (RS-2025-08542968) through the National Research Foundation of Korea(NRF) funded by the Korean government (Ministry of Science and ICT(MSIT)).

\bibliography{References}

\appendix

\section{Proof of Lem.~\ref{Low rank}}\label{App:Proof of Low rank Thm}
In this section, we give a detailed proof of Lem.~\ref{Low rank}. We again emphasized that Lem.~\ref{Low rank} is inspired by Ref.~\cite{Bravyi}.
\begin{proof}
    By the definition, $|\Psi\rangle$ can be described as 
    \begin{align}
            |\Psi\rangle & = \frac{1}{(2\nu)^m}\sum_{\mathbf{x}\in\mathbb{Z}_2^m} |\psi_1^{x_1}\rangle\cdots|\psi_m^{x_m}\rangle,
    \end{align}
    where $\mathbf{x} = (x_1,\dots,x_m)$. Let $\mathcal{L}$ be a subspace of $\mathbb{Z}_2^m$ of dimension $l$, and define its corresponding state $|\mathcal{L}\rangle$ as
    \begin{align}
            |\mathcal{L}\rangle & = \frac{1}{\sqrt{2^l Z(\mathcal{L})}}\sum_{\mathbf{x}\in\mathcal{L}} |\psi_1^{x_1}\rangle\cdots|\psi_m^{x_m}\rangle,
    \end{align}
    where $Z(\mathcal{L})$ is the normalization factor. Since $\langle\psi_i^0|\psi_i^1\rangle=\langle\psi_i^1|\psi_i^0\rangle = 2\nu^2-1$ for all $i=1,\dots, m$, we can compute $Z(\mathcal{L})$ directly
    \begin{align}
            Z(\mathcal{L}) & = \frac{1}{2^l}\sum_{\mathbf{x}\in\mathcal{L}}\sum_{\mathbf{y}\in\mathcal{L}} \langle \psi_1^{x_1}|\psi_1^{y_1}\rangle\cdots\langle\psi_m^{x_m}|\psi_m^{y_m}\rangle \\
            & = \frac{1}{2^l}\sum_{\mathbf{x}\in\mathcal{L}}\sum_{\mathbf{y}\in\mathcal{L}}\left(2\nu^2-1\right)^{H(\mathbf{x}\oplus \mathbf{y})} \\
            & = \sum_{\mathbf{x}\in\mathcal{L}}\left(2\nu^2-1\right)^{H(\mathbf{x})},
    \end{align}
    where $H(\mathbf{x})$ is the Hamming weight of $\mathbf{x}$ and $\oplus $ is the bitwise addition. One can easily check that $\langle\psi_i|\psi_i^0\rangle = \langle\psi_i|\psi_i^1\rangle = \nu$. Then, we obtain
    \begin{align}
            |\langle\Psi|\mathcal{L}\rangle|^2 & = \frac{1}{2^lZ(\mathcal{L})}\left|\sum_{\mathbf{x}\in\mathcal{L}}\langle\psi_1|\psi_1^{x_1}\rangle\cdots \langle\psi_m|\psi_m^{x_m}\rangle\right|^2 \\
            & = \frac{2^l \nu^{2m}}{Z(\mathcal{L})}.
    \end{align}
    Now, let $\mathcal{L}$ be a uniformly chosen subspace of $\mathbb{Z}_2^m$ of dimension $l$. By linearity, we can compute the expectation value of $Z(\mathcal{L})$,
    \begin{align}
            \mathbb{E}[Z(\mathcal{L})] & = 1 + \sum_{\mathbf{x}\in\mathbb{Z}_2^m\backslash\{0\}}\left(2\nu^2-1\right)^{H(\mathbf{x})}\mathbb{E}\left[\chi_{\mathcal{L}}(\mathbf{x})\right] \\
            & = 1 + \frac{2^l-1}{2^m-1}\sum_{\mathbf{x}\in\mathbb{Z}_2^m\backslash\{0\}}\left(2\nu^2-1\right)^{H(\mathbf{x})} \\
            & = 1 + \frac{2^l-1}{2^m-1}\left\{(2\nu^2)^m - 1\right\} \\
            & \leq 1 + 2^l\nu^{2m},
    \end{align}
    where $\chi_{\mathcal{L}}$ is the indicator function of $\mathcal{L}$ such that $\chi_{\mathcal{L}}(\mathbf{x})=1$ if $\mathbf{x}\in\mathcal{L}$, and $\chi_{\mathcal{L}}(\mathbf{x})=0$ otherwise. It implies that there exists at least one $\mathcal{L}$ such that $Z(\mathcal{L})\leq 1+2^l\nu^{2m}$. Fix $l$ such that $4\geq2^l\nu^{2m}\delta_2\geq2$. By Markov's inequality, we have
    \begin{align}
            & \textrm{Pr}\left[Z(\mathcal{L})\geq \left(1+2^l\nu^{2m}\right)\left(1+\frac{\delta_2}{2}\right)\right] \\
            & \leq \frac{\mathbb{E}\left[Z(\mathcal{L})\right]}{\left(1+2^l\nu^{2m}\right)\left(1+\frac{\delta_2}{2}\right)} \\
            & \leq 1- \frac{\delta_2}{2+\delta_2}.
    \end{align}
    Hence, we can obtain $\mathcal{L^*}$ such that $Z(\mathcal{L^*}) \leq \left(1+2^l\nu^{2m}\right)\left(1+\frac{\delta_2}{2}\right)$ with a constant probability by randomly choosing $O(\delta^{-1}_2)$ subspaces. It follows that
    \begin{align}
            \left|\langle\Psi|\mathcal{L}^*\rangle\right|^2 & = \frac{2^l\nu^{2m}}{Z(\mathcal{L}^*)} \\
            & \geq \frac{2^l\nu^{2m}}{\left(1+2^l\nu^{2m}\right)\left(1+\frac{\delta_2}{2}\right)} \\
            & = \frac{1}{\left(1+2^{-l}\nu^{-2m}\right)\left(1+\frac{\delta_2}{2}\right)} \\
            & \geq \frac{1}{\left(1+\frac{\delta_2}{2}\right)^2} \\
            & \geq 1-\delta_2.
    \end{align}
    Recall that $4\geq2^l\nu^{2m}\delta_2\geq 2$, and it implies that $2^l\leq 4\nu^{-2m}\delta_2^{-1}$.
\end{proof}

\section{Derivation of ensembles in the qubit system}\label{App:Optimization}
In this section, we show how to derive the ensembles described in Sec.~\ref{Sec:Qubit}. First of all, we start at the observation of the extreme point $p=1/2$. Remark that our noisy state $\hat{\rho}_{\text{qubit}}$ is given by
\begin{align}
        \hat{\rho}_{\text{qubit}} & = \begin{pmatrix}
            \cos^2(\frac{\pi}{8}) & (1-2p)\cos(\frac{\pi}{8})\sin(\frac{\pi}{8}) \\
            (1-2p)\cos(\frac{\pi}{8})\sin(\frac{\pi}{8}) & \sin^2(\frac{\pi}{8})
        \end{pmatrix}.
\end{align}
By putting $p=1/2$ into $\hat{\rho}_{\text{qubit}}$, the off-diagonal terms vanish
\begin{align}
        \hat{\rho}_{\text{qubit}}\big|_{p=\frac{1}{2}} & = \begin{pmatrix}
            \cos^2(\frac{\pi}{8}) & 0 \\ 
            0 & \sin^2(\frac{\pi}{8}) 
        \end{pmatrix} \\
        & = \cos^2\left(\frac{\pi}{8}\right)|0\rangle\langle0| + \sin^2\left(\frac{\pi}{8}\right)|1\rangle\langle 1|,
\end{align}
so that it can be expressed as an ensemble consisting of only stabilizer states. Since $\hat{\rho}_{\text{qubit}}$ is continuous over $p$, it should have only small off-diagonal terms near the extreme point. From this observation, we add a small $|+\rangle$ and $|-\rangle$ to supply small off-diagonal terms. More precisely, we set the following problem,
\begin{align}
        \hat{\rho}_{\text{qubit}} & = \begin{pmatrix}
            \cos^2(\frac{\pi}{8}) & (1-2p)\cos(\frac{\pi}{8})\sin(\frac{\pi}{8}) \\
            (1-2p)\cos(\frac{\pi}{8})\sin(\frac{\pi}{8}) & \sin^2(\frac{\pi}{8})
        \end{pmatrix} \\
        & = x_1|0\rangle\langle0| + x_2|1\rangle\langle1| + x_3|+\rangle\langle+| + x_4|-\rangle\langle-|,
\end{align}
such that
\begin{enumerate}
    \item $x_1+x_2+x_3+x_4 = 1$
    \item $x_i\geq 0$ for all $i=1,\dots,4$.
\end{enumerate}
This problem can be transformed into
\begin{enumerate}
    \item $x_1+\frac{1}{2}x_3 + \frac{1}{2}x_4 = \cos^2(\frac{\pi}{8})$
    \item $x_2 + \frac{1}{2}x_3 + \frac{1}{2}x_4 = \sin^2(\frac{\pi}{8})$
    \item $\frac{1}{2}(x_3-x_4) = (1-2p)\cos(\frac{\pi}{8})\sin(\frac{\pi}{8})$
    \item $x_i\geq0$ for all $i=1,\dots, 4$.
\end{enumerate}
Since we have three equalities, we can set a single free parameter. By setting a free parameter as $x_1$, we have
\begin{enumerate}
    \item $x_2 = x_1 + \sin^2(\frac{\pi}{8})-\cos^2(\frac{\pi}{8})$
    \item $x_3 = -x_1+\cos^2(\frac{\pi}{8})+(1-2p)\cos(\frac{\pi}{8})\sin(\frac{\pi}{8})$
    \item $x_4 = -x_1+\cos^2(\frac{\pi}{8})-(1-2p)\cos(\frac{\pi}{8})\sin(\frac{\pi}{8})$.
\end{enumerate}
Since $x_i\geq0$ for all $i=1,\dots,4$, we obtain
\begin{enumerate}
    \item $x_1\geq \cos^2(\frac{\pi}{8})-\sin^2(\frac{\pi}{8})$
    \item $x_1\leq \cos^2(\frac{\pi}{8})+(1-2p)\cos(\frac{\pi}{8})\sin(\frac{\pi}{8})$
    \item $x_1\leq \cos^2(\frac{\pi}{8})-(1-2p)\cos(\frac{\pi}{8})\sin(\frac{\pi}{8})$
    \item $x_1\geq 0$.
\end{enumerate}
One can easily check that $\cos^2(\pi/8)-\sin^2(\pi/8)>0$. If $p\leq1/2$, then the following inequality should hold,
\begin{align}
        & \cos^2\left(\frac{\pi}{8}\right)-\sin^2\left(\frac{\pi}{8}\right) \nonumber \\
        & \leq \cos^2\left(\frac{\pi}{8}\right)-(1-2p)\cos\left(\frac{\pi}{8}\right)\sin\left(\frac{\pi}{8}\right),
\end{align}
so that $p\geq \left(1-\tan(\pi/8)\right)/2$. Similarly, if $p\geq1/2$, then the following inequality should hold,
\begin{align}
        & \cos^2\left(\frac{\pi}{8}\right)-\sin^2\left(\frac{\pi}{8}\right) \nonumber \\
        & \leq \cos^2\left(\frac{\pi}{8}\right)+(1-2p)\cos\left(\frac{\pi}{8}\right)\sin\left(\frac{\pi}{8}\right),
\end{align}
so that $p\leq (1+\tan(\pi/8))/2$. Hence, we can conclude that $\hat{\rho}_{\text{qubit}}$ can be expressed as ensembles consisting of only stabilizer states on $p\in[(1-\tan(\pi/8))/2,(1+\tan(\pi/8))/2]$. Finally, in order to get entire ensembles, set $x_4=0$ when $p\in [(1-\tan(\pi/8))/2,1/2]$. Then, the ensemble is described as 

\begin{center}
\begin{tabular}{@{\hspace{1.45cm}}c@{\hspace{1.45cm}\vrule\hspace{1.45cm}}c@{\hspace{1.45cm}}}
  \textbf{State} & \textbf{Probability} \\ \hline
  $|0\rangle$ & $\frac{1}{2}+\frac{1}{\sqrt{2}}p$           \\
  $|1\rangle$ & $\frac{1-\sqrt{2}}{2} + \frac{1}{\sqrt{2}}p$                         \\
  $| +\rangle$& $\frac{1}{\sqrt{2}}-\sqrt{2}p$
\end{tabular}
\end{center}

Similarly, as setting $x_3=0$ when $p\in [1/2, (1+\tan(\pi/8))/2]$, the ensemble is described as

\begin{center}
\begin{tabular}{@{\hspace{1.45cm}}c@{\hspace{1.45cm}\vrule\hspace{1.45cm}}c@{\hspace{1.45cm}}}
  \textbf{State} & \textbf{Probability} \\ \hline
  $|0\rangle$ & $\frac{1+\sqrt{2}}{2}-\frac{1}{\sqrt{2}}p\ $           \\
  $|1\rangle$ & $\frac{1}{2} - \frac{1}{\sqrt{2}}p\ $                         \\
  $| -\rangle$& $\sqrt{2}p-\frac{1}{\sqrt{2}}$.
\end{tabular}
\end{center}

Now, consider the region outside $[(1-\tan(\pi/8))/2, (1+\tan(\pi/8))/2]$. We start at the noiseless case $p=0$. By the definition, $\hat{\rho}_{\text{qubit}}|_{p=0}$ can be expressed as $|H\rangle$ entirely. Since $\hat{\rho}_{\text{qubit}}$ is continuous over $p$, we probably can find an ensemble connecting the points $p=0$ and $p=(1-\tan(\pi/8))/2$. More precisely, we set the following problem
\begin{align}
        \hat{\rho}_{\text{qubit}} & = \begin{pmatrix}
            \cos^2(\frac{\pi}{8}) & (1-2p)\cos(\frac{\pi}{8})\sin(\frac{\pi}{8}) \\
            (1-2p)\cos(\frac{\pi}{8})\sin(\frac{\pi}{8}) & \sin^2(\frac{\pi}{8})
        \end{pmatrix} \\
        & = x_1|0\rangle\langle 0| + x_2|1\rangle\langle 1| + x_3|+\rangle\langle+| +x_4|H\rangle\langle H|,
\end{align}
such that
\begin{enumerate}
    \item $x_1+x_2+x_3+x_4 = 1$
    \item $x_i\geq 0$ for all $i=1,\dots,4$.
\end{enumerate}
This problem can be transformed into
\begin{enumerate}
    \item $x_1+\frac{1}{2}x_3 + \frac{3+2\sqrt{2}}{2(2+\sqrt{2})}x_4 = \cos^2(\frac{\pi}{8})$ 
    \item $x_2+\frac{1}{2}x_3 + \frac{1}{2(2+\sqrt{2})}x_4 = \sin^2(\frac{\pi}{8})$
    \item $\frac{1}{2}x_3 + \frac{\sqrt{2}+1}{2(2+\sqrt{2})}x_4 = (1-2p)\cos(\frac{\pi}{8})\sin(\frac{\pi}{8})$
    \item $x_i\geq0$ for all $i=1,\dots, 4$.
\end{enumerate}
Since we have three equalities, we can set a single free parameter. By setting a free parameter as $x_4$, we have
\begin{enumerate}
    \item $x_1 = -\frac{1}{2}x_4 + \cos^2(\frac{\pi}{8})-(1-2p)\cos(\frac{\pi}{8})\sin(\frac{\pi}{8})$
    \item $x_2 = \frac{1}{\sqrt{2}(2+\sqrt{2})}x_4+\sin^2(\frac{\pi}{8})-(1-2p)\cos(\frac{\pi}{8})\sin(\frac{\pi}{8})$
    \item $x_3 = -\frac{\sqrt{2}+1}{2+\sqrt{2}}x_4 + 2(1-2p)\cos(\frac{\pi}{8})\sin(\frac{\pi}{8})$.
\end{enumerate}
Since $x_i\geq0$ for all $i=1,\dots,4$, these relations imply
\begin{enumerate}
    \item $x_4\leq 2\left[\cos^2(\frac{\pi}{8})-(1-2p)\cos(\frac{\pi}{8})\sin(\frac{\pi}{8})\right]$
    \item $x_4\geq \sqrt{2}(2+\sqrt{2})\left[(1-2p)\cos(\frac{\pi}{8})\sin(\frac{\pi}{8})-\sin^2(\frac{\pi}{8})\right]$
    \item $x_4\leq\frac{2(2+\sqrt{2})}{1+\sqrt{2}}(1-2p)\cos(\frac{\pi}{8})\sin(\frac{\pi}{8})$
    \item $x_4\geq0$
\end{enumerate}
One can check that
\begin{align}
        0 \leq \sqrt{2}(2+\sqrt{2})\left[(1-2p)\cos\left(\frac{\pi}{8}\right)\sin\left(\frac{\pi}{8}\right)-\sin^2\left(\frac{\pi}{8}\right)\right],
\end{align}
on $p\in[0, (1-\tan(\pi/8))/2]$. Furthermore, one can easily check that the conditions of $x_4$ are well-defined. To minimize the probability of sampling a non-stabilizer state, we set $x_4$ has the minimum value. In other words, by setting
\begin{align}
        x_4 & = \sqrt{2}(2+\sqrt{2})\left[(1-2p)\cos\left(\frac{\pi}{8}\right)\sin\left(\frac{\pi}{8}\right)-\sin^2\left(\frac{\pi}{8}\right)\right],
\end{align}
we obtain our desired ensemble on $p\in[0, (1-\tan(\pi/8))/2]$,

\begin{center}
\begin{tabular}{@{\hspace{1.4cm}}c@{\hspace{1.4cm}\vrule\hspace{1.4cm}}c@{\hspace{1.4cm}}}
  \textbf{State} & \textbf{Probability} \\ \hline
  $|0\rangle$ & $\left(1+\sqrt{2}\right)p$           \\
  $|+\rangle$ & $p$                         \\
  $| H\rangle$& $1-\left(2+\sqrt{2}\right)p$
\end{tabular}
\end{center}

In a similar way, we can also derive the desired ensemble on $p\in [(1+\tan(\pi/8))/2,1]$,
\begin{center}
\begin{tabular}{@{\hspace{1.4cm}}c@{\hspace{1.4cm}\vrule\hspace{0.9cm}}c@{\hspace{0.9cm}}}
  \textbf{State} & \textbf{Probability} \\ \hline
  $|0\rangle$ & $\left(1+\sqrt{2}\right)\left(1-p\right)\ $           \\
  $|-\rangle$ & $1-p\ $                         \\
  $|H^\prime\rangle$& $1-\left(2+\sqrt{2}\right)\left(1-p\right)$,
\end{tabular}
\end{center}

where $|H^\prime\rangle = \cos(\pi/8)|0\rangle - \sin(\pi/8)|1\rangle$. We again emphasize that the whole ensembles have a symmetry with the axis $p=1/2$.

\section{Analysis of time complexity of qubit case}\label{App:Qubit_Time complexity}
In Sec.~\ref{Sec:Qubit}, we illustrate our focused quantum circuit that allows universal quantum computation in the qubit system, and investigate how physical noise destroys its computational power. As we discussed previously, our algorithm is applied to the ensemble defined on $p\in[0, (1-\tan(\pi/8))/2]$, because the ensemble on $p\in[(1-\tan(\pi/8))/2, 1/2]$ can be represented as a convex summation of solely stabilizer states. Remark that our situation has a symmetry with the axis $p=1/2$, so that we restrict our discussion to $p\in[0,1/2]$. In the former region, the time complexity of our algorithm goes to a classically simulable regime as $p$ goes to the boundary. Hence, this noisy scheme may collapse into a classical simulable regime when $p$ is close to $(1-\tan(\pi/8))/2$, even if the noisy magic state cannot be expressed as a convex summation of stabilizer states. 

In this section, we clarify a boundary that this scheme collapses into a classical simulable regime. Remark that the time complexity of our algorithm on $p\in[0, (1-\tan(\pi/8))/2]$ is $O(\nu_{\text{qubit}}^{-2f_{\text{qubit}}(t,p,\delta)}\text{poly}(n))$,
where $\nu_\text{qubit} = \cos(\pi/8)$. Therefore, to investigate when this time complexity becomes polynomial scale, it is sufficient to clarify when $f_{\text{qubit}}(t,p,\delta)$ becomes $O(\log n)$.

\begin{theorem}\label{Thm:Qubit_App_poly1}
    Consider the ensemble defined on $p \in [0, (1-\tan(\pi/8))/2]$, described in Sec.~\ref{Sec:Qubit}. Assume that the number of injected magic states $t$ is given. For $\delta = \Omega(1/\textup{poly}(n))$, if the noise rate $p$ satisfies
    \begin{align}
        p & = \frac{1-\tan\left(\frac{\pi}{8}\right)}{2}\left(1-O\left(\frac{\log n}{t}\right)\right),
    \end{align}
    then the time complexity becomes $O(\textup{poly}(n))$.
\end{theorem}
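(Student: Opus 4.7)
The plan is to show that $f_{\text{qubit}}(t,p,\delta)=O(\log n)$ under the given hypothesis; this yields the theorem immediately, since $\nu_{\text{qubit}}=\cos(\pi/8)$ is a fixed constant in $(0,1)$, so $\nu_{\text{qubit}}^{-2f_{\text{qubit}}(t,p,\delta)}=2^{O(\log n)}=\text{poly}(n)$. The assumption $\delta=\Omega(1/\text{poly}(n))$ also gives $\log(2/\delta)=O(\log n)$, so the KL threshold in Eq.~\eqref{f0} is of order $\log n/t$.

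First I would translate the hypothesis into a statement about $q\equiv 1-(2+\sqrt{2})p$, the probability of drawing $|H\rangle$ from the ensemble of Sec.~\ref{Sec:Qubit_problem}. Using $\tan(\pi/8)=\sqrt{2}-1$, a direct computation gives $(2+\sqrt{2})\cdot\tfrac{1-\tan(\pi/8)}{2}=\tfrac{(2+\sqrt{2})(2-\sqrt{2})}{2}=1$, so the bound $p=\tfrac{1-\tan(\pi/8)}{2}(1-O(\log n/t))$ is equivalent to $q=O(\log n/t)$, hence $tq=O(\log n)$. In particular, the first defining condition of $f_{\text{qubit}}$ in Lem.~\ref{Truncation}, namely $f\geq tq$, is already compatible with $f=O(\log n)$.

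Next I would produce an explicit logarithmic witness: set $f=\lceil C\log n\rceil$ with a constant $C$ to be chosen, let $a=(f+1)/t=\Theta(\log n/t)$, and verify the KL inequality. The key tool is the elementary lower bound
\begin{align}
D(a\Vert q) & \geq a\ln(a/q)-(a-q),
\end{align}
which follows from $\ln u\leq u-1$ applied to $u=(1-q)/(1-a)\geq 1$. Writing $a=\alpha\,\log n/t$ and $q\leq\beta\,\log n/t$ for fixed $\beta$ determined by the hypothesis and $\alpha$ proportional to $C$, the right-hand side equals $(\log n/t)\bigl[\alpha\ln(\alpha/\beta)-(\alpha-\beta)\bigr]$. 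Choosing $C$ large enough makes the bracket an arbitrarily large constant, so the KL condition $D(a\Vert q)\geq\log(2/\delta)/t$ is satisfied for the given $\delta=\Omega(1/\text{poly}(n))$. This certifies $f_{\text{qubit}}(t,p,\delta)\leq\lceil C\log n\rceil$, and the total runtime $O(\nu_{\text{qubit}}^{-2f_{\text{qubit}}}\,\text{poly}(n))$ collapses to $\text{poly}(n)$.

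The main obstacle I anticipate is tracking the constants uniformly as $q$ sweeps across $[0,\beta\log n/t]$. For $q\to 0^+$ the term $a\ln(a/q)$ diverges and the bound is easy, whereas near the upper end of the allowed window the ratio $a/q$ is only a constant, so $C$ must be calibrated jointly against $\beta$ and the hidden constant in $\log(2/\delta)=O(\log n)$. A minor edge case is the regime $tq<1$, where the constraint $f\geq tq$ becomes vacuous and $f_{\text{qubit}}$ is determined purely by the KL condition; the same witness $\lceil C\log n\rceil$ handles this case as well, so it does not affect the overall bound.
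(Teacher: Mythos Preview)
Your proposal is correct and reaches the same conclusion as the paper, but the technical route differs. Both arguments reduce the theorem to $f_{\text{qubit}}(t,p,\delta)=O(\log n)$ after observing that the hypothesis is equivalent to $q\equiv 1-(2+\sqrt{2})p=O(\log n/t)$. From there, the paper exploits minimality in the opposite direction: since $f_{\text{qubit}}$ is the smallest admissible integer, $f_{\text{qubit}}$ itself fails the KL test, i.e.\ $D(f_{\text{qubit}}/t\,\Vert\,q)\le \tfrac{1}{t}\log(2/\delta)$; it then applies the quadratic lower bound $D(a\Vert q)\ge (a-q)^2/(2a)$ to turn this into the inequality $|f_{\text{qubit}}-tq|\le\sqrt{2f_{\text{qubit}}\log(2/\delta)}$, which is solved explicitly to give $f_{\text{qubit}}\le tq+\log(2/\delta)+\sqrt{(tq+\log(2/\delta))^2-(tq)^2}$. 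You instead build an explicit witness $f=\lceil C\log n\rceil$ and verify both defining conditions using the Gibbs-type bound $D(a\Vert q)\ge a\ln(a/q)-(a-q)$. Your approach is slightly more direct and avoids solving a quadratic, while the paper's approach yields a closed-form upper bound on $f_{\text{qubit}}$ that it reuses verbatim in the corollary for the threshold on $t$. The uniformity concern you flag is easily dispatched: $D(a\Vert q)$ is monotone decreasing in $q$ on $(0,a)$, so the worst case is at $q=\beta\log n/t$, where your bracket $\alpha\ln(\alpha/\beta)-(\alpha-\beta)$ can be made to exceed any prescribed constant by taking $\alpha$ (hence $C$) large.
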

\begin{proof}
It is sufficient to show that $f_{\text{qubit}}(t,p,\delta)= O(\log n)$. Since $f_{\text{qubit}}(t,p,\delta)$ is the smallest number satisfying the conditions in Eq.~\eqref{f0}, we have
\begin{align}\label{Eq:qubit_poly1}
    D\left(\frac{f_{\text{qubit}}(t,p,\delta)}{t}\bigg\|1-(2+\sqrt{2})p\right) & \leq \frac{1}{t}\log\left(\frac{2}{\delta}\right).
\end{align}
For convenience, denote $a = f_{\text{qubit}}(t,p,\delta)/t$ and $r = 1- (2+\sqrt{2})p$. Then, the given condition of $p$ implies
\begin{align}\label{Eq:qubit_poly2}
    r &  = O\left(\frac{\log n}{t}\right).
\end{align}
Since $a\geq r$, we obtain
\begin{align}\label{Eq:qubit_poly3}
    D(a\| r) & \geq \frac{(a-r)^2}{2a}.
\end{align}
Combining Eq.~\eqref{Eq:qubit_poly1} and~\eqref{Eq:qubit_poly3}, we have
\begin{align}\label{Eq:qubit_poly4}
    |f_{\text{qubit}}(t,p,\delta)-tr| &\leq \sqrt{2f_{\text{qubit}}(t,p,\delta)\log(2/\delta)}.
\end{align}
It implies that
\begin{align}\label{Eq:Qubit_poly1_inequality1}
    & f_{\text{qubit}}(t,p,\delta) \nonumber \\
    &\leq tr + \log\left(\frac{2}{\delta}\right) + \sqrt{\left(tr+\log\left(\frac{2}{\delta}\right)\right)^2 - (tr)^2}.
\end{align}
It follows that $f_{\text{qubit}}(t,p,\delta) = O(\log n)$ if $\delta = \Omega(1/\text{poly}(n))$.
\end{proof}

\begin{corollary}
    Consider the ensemble defined on $p \in [0, (1-\tan(\pi/8))/2]$, described in Sec.~\ref{Sec:Qubit}. Assume that the noise rate $p$ is given. For $\delta = \Omega(1/\textup{poly}(n))$, if the number of injected magic states $t$ satisfies
    \begin{align}
        t & = O\left(\frac{\log n}{r}\right),
    \end{align}
    where $r = 1-(2+\sqrt{2})p$, then the time complexity becomes $O(\textup{poly}(n))$.
\end{corollary}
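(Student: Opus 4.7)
The plan is to reuse verbatim the Kullback--Leibler divergence inequality derived in the proof of Thm.~\ref{Thm:Qubit_App_poly1} and simply swap the roles of the ``given'' and ``derived'' variables. Because the simulation cost on $p\in[0,(1-\tan(\pi/8))/2]$ is $O(\nu_{\text{qubit}}^{-2f_{\text{qubit}}(t,p,\delta)}\text{poly}(n))$ and $\nu_{\text{qubit}}=\cos(\pi/8)$ is a fixed constant in $(0,1)$, achieving polynomial runtime reduces to showing $f_{\text{qubit}}(t,p,\delta) = O(\log n)$.

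To do this, I would invoke Eq.~\eqref{Eq:Qubit_poly1_inequality1}, which is obtained purely from the minimality condition defining $f_{\text{qubit}}$ in Eq.~\eqref{f0} together with the elementary bound $D(a\| r)\geq (a-r)^{2}/(2a)$ valid for $a\geq r$:
\begin{align*}
    f_{\text{qubit}}(t,p,\delta) \leq tr + \log\left(\frac{2}{\delta}\right) + \sqrt{\left(tr+\log\left(\frac{2}{\delta}\right)\right)^{2} - (tr)^{2}},
\end{align*}
with $r = 1-(2+\sqrt{2})p$. The hypothesis $f_{\text{qubit}}(t,p,\delta)\geq tr$ required to apply the divergence bound is automatic from the definition in Eq.~\eqref{f0}, and crucially the derivation treats $t$ and $p$ symmetrically, since it only manipulates the product $tr$.

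Plugging in the corollary's hypothesis $t=O(\log n / r)$ immediately yields $tr = O(\log n)$, while $\delta = \Omega(1/\text{poly}(n))$ gives $\log(2/\delta) = O(\log n)$. Expanding the radicand as $2(tr)\log(2/\delta) + \log^{2}(2/\delta)$ shows the square-root term is likewise $O(\log n)$, so $f_{\text{qubit}}(t,p,\delta) = O(\log n)$ and therefore $\nu_{\text{qubit}}^{-2f_{\text{qubit}}(t,p,\delta)} = O(\text{poly}(n))$, which together with the polynomial per-instance cost gives the desired runtime bound.

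The ``main obstacle'' here is really a sanity check rather than a genuine difficulty: one must confirm that the chain of inequalities leading to Eq.~\eqref{Eq:Qubit_poly1_inequality1} never secretly invokes the smallness of $r$, so that the implication can indeed be reversed to treat $p$ as the fixed parameter. It is also worth verifying the two physical extremes to confirm consistency: as $p\to 0$ one has $r\to 1$, recovering the Gottesman--Knill-flavored threshold $t=O(\log n)$ for circuits augmented with logarithmically many essentially noiseless $|H\rangle$ states; as $p\to (1-\tan(\pi/8))/2^{-}$, $r\to 0^{+}$ and the threshold on $t$ diverges, reflecting that arbitrarily many copies are tolerated just below the stabilizer-decomposable regime. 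Both limits match the qualitative picture already established by the companion theorem.
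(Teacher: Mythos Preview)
Your proposal is correct and follows exactly the paper's own argument: the paper's proof simply observes that the hypothesis $t=O(\log n/r)$ gives $tr=O(\log n)$ and then invokes Eq.~\eqref{Eq:Qubit_poly1_inequality1} together with $\delta=\Omega(1/\text{poly}(n))$ to conclude $f_{\text{qubit}}(t,p,\delta)=O(\log n)$. Your write-up expands the same reasoning with more detail (spelling out the radicand and the limiting cases), but the logical route is identical.
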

\begin{proof}
    The given condition implies $tr = O(\log n)$. Then, Eq.~\eqref{Eq:Qubit_poly1_inequality1} implies $f_{\text{qubit}}(t,p,\delta) = O(\log n)$ if $\delta = \Omega(1/\text{poly}(n))$.
\end{proof}

\section{Gaussianity of two mode even state}\label{Two mode}
Here, we put a simple proof of why every two-mode even state is Gaussian. By following~\cite{Gaussian}, we start at the definition of a Gaussian state. For future convenience, we define a $n$-mode state $\hat{\rho}_{\vec{\lambda}}$ as
\begin{align}
        \hat{\rho}_{\vec{\lambda}} & = \frac{1}{2^n}\prod_{j=1}^n\left(\hat{I}+\lambda_j\hat{\sigma}_j^z\right) = \frac{1}{2^n}\prod_{j=1}^n\left(\hat{I}-i\lambda_j\hat{c}_{2j-1}\hat{c}_{2j}\right),
\end{align}
where $\hat{\sigma}_j^z$ is the Pauli-$Z$ on $j$-th mode, $\hat{c}_{2j-1}$ and $\hat{c}_{2j}$ are Majorana operators defined as
\begin{align}
        \hat{c}_{2j-1}  = \hat{a}_j + \hat{a}_j^\dagger,~~~
        \hat{c}_{2j}  = -i\left(\hat{a}_j - \hat{a}_j^\dagger\right).
\end{align}
A Gaussian state is defined as a unitary transformation, consisting of a quadrature representation of Majorana operators, of $\hat{\rho}_{\vec{\lambda}}$:
\begin{definition}
    A state $\hat{\rho}$ is \textit{Gaussian} if and only if it can be represented as 
    \begin{align}
        \hat{\rho}  = \hat{U}\hat{\rho}_{\vec{\lambda}}\hat{U}^\dagger,\quad   \hat{U}  = \exp\left[i\left(\hat{H}_1+\hat{H}_2\right)\right],
    \end{align}
    where $\hat{H}_1$ and $\hat{H}_2$ are Hermitian linear combinations of $\hat{c}_p$ and $\hat{c}_p\hat{c}_q$ respectively.
\end{definition}
One can easily observe $\hat{H}_1 = 0$ for every even state. Hence, we can simplify this definition when our state is even:
\begin{definition}\label{Two mode even}
    A state $\hat{\rho}$ is \textit{even Gaussian} if and only if it can be represented as 
    \begin{align}
        \hat{\rho}  = \hat{U}\hat{\rho}_{\vec{\lambda}}\hat{U}^\dagger,\quad  \hat{U}  = \exp\left[i\hat{H}_2\right],
    \end{align}
    where $\hat{H}_2$ is a Hermitian linear combination of $\hat{c}_p\hat{c}_q$.
\end{definition}
From this definition, we can derive the Gaussianity of two-mode even states.
\begin{theorem}
    Every two-mode even state, $|\psi\rangle = \alpha|00\rangle+\beta|11\rangle$ is Gaussian.
\end{theorem}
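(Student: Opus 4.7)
The plan is to reduce the claim to Def.~\ref{Two mode even} by exhibiting an even-Gaussian unitary $\hat{U}$ that maps the vacuum $|00\rangle$ to the target $|\psi\rangle$. The vacuum is already in the reference form needed, since $-i\hat{c}_{2j-1}\hat{c}_{2j} = \hat{I} - 2\hat{a}_j^\dagger\hat{a}_j$ gives $|00\rangle\langle 00| = \hat{\rho}_{\vec{\lambda}}$ at $\vec{\lambda} = (1,1)$; it therefore suffices to build $\hat{U} = \exp(i\hat{H}_2)$ from Majorana bilinears so that $\hat{U}|00\rangle$ equals $|\psi\rangle$ up to a global phase, which drops out of the density operator.

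First I would handle the case of real nonnegative amplitudes. Introduce the fermionic two-mode-squeezing-type generator
\begin{align}
\hat{B} & = \hat{a}_1^\dagger\hat{a}_2^\dagger + \hat{a}_1\hat{a}_2 = \tfrac{1}{2}\bigl(\hat{c}_1\hat{c}_3 - \hat{c}_2\hat{c}_4\bigr),
\end{align}
which is anti-Hermitian and purely bilinear in the Majorana operators, so $\hat{U}_\theta = \exp(\theta\hat{B})$ has the form required by Def.~\ref{Two mode even}. Using the fermionic anticommutation relations, a direct computation gives $\hat{B}|00\rangle = |11\rangle$ and $\hat{B}|11\rangle = -|00\rangle$, so $\hat{B}^2 = -\hat{I}$ on $\mathrm{span}\{|00\rangle,|11\rangle\}$, whence
\begin{align}
\hat{U}_\theta|00\rangle & = \cos\theta\,|00\rangle + \sin\theta\,|11\rangle.
\end{align}
Choosing $\theta \in [0,\pi/2]$ with $\cos\theta = |\alpha|$ and $\sin\theta = |\beta|$, where $|\alpha|^2 + |\beta|^2 = 1$, fixes the magnitudes. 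To install the relative phase, compose with single-mode number rotations $\exp(i\chi_j \hat{n}_j)$; these are Gaussian because $\hat{n}_j = \tfrac{1}{2}(\hat{I} - i\hat{c}_{2j-1}\hat{c}_{2j})$ is a Majorana bilinear up to a constant that only contributes a global phase. Writing $\alpha = e^{i\phi_0}|\alpha|$ and $\beta = e^{i\phi_1}|\beta|$ and choosing $\chi_1 + \chi_2 = \phi_1 - \phi_0$, the composite
\begin{align}
\hat{U} & = \exp(i\chi_1 \hat{n}_1)\exp(i\chi_2 \hat{n}_2)\,\hat{U}_\theta
\end{align}
sends $|00\rangle$ to $|\psi\rangle$ up to a global phase, so $\hat{U}|00\rangle\langle 00|\hat{U}^\dagger = |\psi\rangle\langle\psi|$ is Gaussian.

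The main obstacle is modest and essentially bookkeeping: one must verify that every generator used expands cleanly into Majorana bilinears with no surviving linear term (otherwise we would leave the even-Gaussian class of Def.~\ref{Two mode even}) and confirm that the constant pieces from $\hat{n}_j$ contribute only global phases that cancel in the conjugation $\hat{U}|00\rangle\langle 00|\hat{U}^\dagger$. Both checks follow immediately from the explicit Majorana expansions $\hat{a}_j = (\hat{c}_{2j-1} + i\hat{c}_{2j})/2$ and $\hat{a}_j^\dagger = (\hat{c}_{2j-1} - i\hat{c}_{2j})/2$, so the argument reduces to a short computation rather than a structural claim.
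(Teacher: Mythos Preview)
Your argument is correct and follows essentially the same strategy as the paper: identify $|00\rangle\langle 00|$ with $\hat{\rho}_{(1,1)}$, rotate within $\mathrm{span}\{|00\rangle,|11\rangle\}$ using a pair-creation/annihilation bilinear, and then fix the relative phase with a number rotation. The only differences are cosmetic---you use the generator $\tfrac{1}{2}(\hat{c}_1\hat{c}_3-\hat{c}_2\hat{c}_4)$ where the paper uses $-\tfrac{1}{2}(\hat{c}_1\hat{c}_4+\hat{c}_2\hat{c}_3)$, and you apply two number phases where one suffices---and there is a harmless sign slip in your formula for $\hat{n}_j$ (it should read $\tfrac{1}{2}(\hat{I}+i\hat{c}_{2j-1}\hat{c}_{2j})$, consistent with your earlier correct identity $-i\hat{c}_{2j-1}\hat{c}_{2j}=\hat{I}-2\hat{n}_j$).
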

\begin{proof}
    Set $\vec{\lambda} = (1,1)$. Then we have
    \begin{align}
            \hat{\rho}_{\vec{\lambda}} & = \frac{1}{4}\left(\hat{I} + \hat{\sigma}_1^z\otimes \hat{I}_2\right)\left(\hat{I}+\hat{I}_1\otimes\hat{\sigma}_2^z\right) 
             = |00\rangle\langle 00|.
    \end{align}
    Define Hermitian linear combinations of $\hat{c}_p\hat{c}_q$ as
    \begin{align}
    \hat{R}_{\phi} & = \exp\left[-\frac{\phi}{2}\hat{c}_1\hat{c}_2\right] = \exp\left[i\phi\hat{a}_1^\dagger\hat{a}_1\right],\\
            \hat{U}_{\theta} & = \exp\left[-\frac{\theta}{2}\left(\hat{c}_1\hat{c}_4+\hat{c}_2\hat{c}_3\right)\right]  = \exp\left[i\theta\left(\hat{a}_1\hat{a}_2+\hat{a}_2^\dagger\hat{a}_1^\dagger\right)\right],
    \end{align}
where $\theta$ and $\phi$ are real. These operators satisfy the Hermitian condition in definition~\ref{Two mode even},
\begin{align}
    \left(i\frac{\phi}{2}\hat{c}_1\hat{c}_2\right)^\dagger & = \left(\phi\hat{a}_1^\dagger\hat{a}_1\right)^\dagger 
     = \phi\hat{a}_1^\dagger\hat{a}_1, \\
    \left[i\frac{\theta}{2}\left(\hat{c}_1\hat{c}_4+\hat{c}_2\hat{c}_3\right)\right]^\dagger & = \left[\theta\left(\hat{a}_1\hat{a}_2+\hat{a}_2^\dagger\hat{a}_1^\dagger\right)\right]^\dagger \\
    & = \theta\left(\hat{a}_1\hat{a}_2+\hat{a}_2^\dagger\hat{a}_1^\dagger\right).
\end{align}
With simple algebra, we obtain
\begin{align}
        \hat{R}_{\phi}\hat{U}_\theta|00\rangle & = \hat{R}_{\phi}\left\{\cos(\theta)|00\rangle - i\sin(\theta)|11\rangle\right\} \\
        & = \cos(\theta)|00\rangle - i\exp\left(i\phi\right)\sin(\theta)|11\rangle.
\end{align}
It follows that every two-mode even state, $\alpha|00\rangle+\beta|11\rangle$, can be represented as $\hat{R}_{\phi}\hat{U}_\theta|00\rangle$ up to a global phase.
\end{proof}

One can also adapt another method to check the Gaussianity of states, such as Ref.~\cite{Quadrature_Majorana}.

\section{Analysis of time complexity of fermion cases}\label{App:Poly}
In Sec.~\ref{Sec:Fermion}, we depict our focused quantum circuit that allows universal quantum computation in the fermionic system, and investigate how several noises, such as particle loss and dephasing noise, destroy its computational power. Since our algorithm for simulating these noisy cases induces detailed time complexity, we can analytically clarify boundaries that these noisy schemes collapse into a classical simulable regime. In this section, we will discuss the derivation of our proposed boundaries in detail. To achieve this, we start with the following lemma.
\begin{lemma}\label{Poly lemma}
    $x^t = O\left(\textup{poly}(n)\right)$ if $x = 1 + O\left(\log n/t\right)$.
\end{lemma}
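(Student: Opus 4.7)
The plan is to reduce the claim to a one-line estimate using the standard inequality $\ln(1+y) \le y$ for $y \ge 0$. By the definition of the big-$O$ assumption, there exist constants $c > 0$ and $t_0$ such that for all $t \ge t_0$ (and $n$ large enough) we have $x \le 1 + c\log n / t$, and in particular $x \ge 1$ so $x^t$ is well-defined and increasing in $x$.

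Given this, I would first take the natural logarithm of $x^t$ to linearize the expression, writing
\begin{align}
\ln\left(x^t\right) = t\ln(x) \le t\ln\!\left(1 + \frac{c\log n}{t}\right).
\end{align}
Next, applying the elementary bound $\ln(1+y) \le y$ for $y \ge 0$ with $y = c\log n/t$, the right-hand side is at most $c\log n$. Exponentiating both sides yields $x^t \le n^{c/\ln(\text{base})}$ where the exponent is an absolute constant determined only by $c$ and the convention for $\log$, so $x^t = O(\text{poly}(n))$. One should separately note the trivial regime where $t < t_0$ (or $\log n$ is small), in which case $x^t$ is bounded by a constant and the polynomial bound holds vacuously.

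There is essentially no obstacle here; the statement is a textbook consequence of the $\ln(1+y) \le y$ inequality, and the only cosmetic care needed is to track the constant hidden in the $O$-notation and to verify that the bound holds uniformly for all admissible $t$ (including small $t$, handled by continuity of $x^t$ in $t$ and boundedness of $x$ on the relevant range). The lemma is then applied in the surrounding arguments by choosing $x = \nu^{-2}$ or a similar base with $t$ replaced by the relevant magic count, allowing one to conclude that the simulation cost $\nu^{-2k}\cdot \text{poly}(n)$ stays polynomial whenever the corresponding noise-dependent parameter falls into the regimes identified by Theorems 3, 5, and 6.
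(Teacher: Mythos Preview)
Your proof is correct and essentially identical to the paper's own argument: both write $x=1+\epsilon$ with $\epsilon\le C\log n/t$, take logarithms to get $t\log(1+\epsilon)\le t\epsilon\le C\log n$, and exponentiate to conclude $x^t=O(\textup{poly}(n))$. Your version is actually slightly more careful in tracking the logarithm base and in flagging the small-$t$ regime, but the core idea is the same.
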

\begin{proof}
    Assume that $x = 1 + O\left(\log n/t\right)$. Denote $x = 1+\epsilon$. The assumption implies that there is a constant $C$ such that $\epsilon \leq C\log n/t$. With simple algebra, we compute
    \begin{align}
             x^t & = (1+\epsilon)^t   = \exp\left[t\log\left(1 + \epsilon\right)\right]. 
    \end{align}
    $\log(1+x)\leq x$ holds for every $x>0$, we obtain
    \begin{align}
            t\log(1+\epsilon) & \leq t\epsilon  \leq C\log n.
    \end{align}
    It implies that $x^t = O\left(\text{poly}(n)\right)$.  
\end{proof}
Remark that the computational cost of the particle loss case is given by $O(\nu_{\text{loss}}^{-2f_{\text{loss}}(t,p,\delta)}\text{poly}(n))$,
where $\nu_{\text{loss}}$ and $f_{\text{loss}}(t,p,\delta)$ are defined in Thm.~\ref{Fermion particle loss} and Eq.~\eqref{f1} respectively. For convenience, denote
\begin{align}
        \nu_{\text{loss}}^{-2f_{\text{loss}}(t,p,\delta)} & \leq \left(\frac{3p^2-2p+1}{2p^2-2p+1}\right)^t  \equiv g_1(p)^t.
\end{align}
To reveal the computational boundary, it is sufficient to investigate when $g_1(p)^t=O\left(\text{poly}(n)\right)$.
\begin{theorem}\label{Thm:Fermion_App1}
    Consider the noisy scheme with particle loss, described in Sec.~\ref{Sec:Fermion}. Assume that the number of injected magic states $t$ is given. For $\delta = \Omega\left(1/\textup{poly}(n)\right)$, if the transmission rate $p$ satisfies
    \begin{align}
        p &= O\left(\sqrt{\frac{\log n}{t}}\right).
    \end{align}
    then the time complexity of our algorithm becomes $O(\textup{poly}(n))$.
\end{theorem}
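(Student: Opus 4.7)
The plan is to reduce the claim to Lemma~\ref{Poly lemma} by extracting the $f_{\text{loss}}$ dependence from the exponent via the crude bound $f_{\text{loss}}(t,p,\delta) \leq t$. This is a coarser analysis than the Chernoff-based argument used in the qubit appendix (Thm.~\ref{Thm:Qubit_App_poly1}), but it is all we need: we only require polynomial scaling, and the worst case $f_{\text{loss}} = t$ is already polynomial in the hypothesized regime.

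First, I would verify the preliminary inequality $\nu_{\text{loss}}^{-2f_{\text{loss}}(t,p,\delta)} \leq g_1(p)^t$ stated just before the theorem. Since
\begin{align}
\nu_{\text{loss}}^{-2} & = \frac{3p^2-2p+1}{2p^2-2p+1} = g_1(p) \geq 1
\end{align}
for every $p \in [0,1]$ (the numerator exceeds the denominator by $p^2 \geq 0$), and $0 \leq f_{\text{loss}}(t,p,\delta) \leq t$ by construction, monotonicity gives the claimed bound. Therefore it suffices to show $g_1(p)^t = O(\textup{poly}(n))$.

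Second, I would compute
\begin{align}
g_1(p) - 1 & = \frac{(3p^2-2p+1)-(2p^2-2p+1)}{2p^2-2p+1} = \frac{p^2}{2p^2-2p+1}.
\end{align}
The hypothesis $p = O(\sqrt{\log n/t})$ pins $p$ into the small-$p$ regime, where the denominator is bounded below by a positive constant (for instance, $2p^2-2p+1 \geq 1/2$ whenever $p \leq 1/2$), so $g_1(p) - 1 \leq 2p^2 = O(\log n/t)$. Hence $g_1(p) = 1 + O(\log n/t)$, and Lemma~\ref{Poly lemma} applied with $x = g_1(p)$ yields $g_1(p)^t = O(\textup{poly}(n))$.

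Finally, I would combine this with the outer $\textup{poly}(n)$ factor in the simulation cost $O(\nu_{\text{loss}}^{-2f_{\text{loss}}(t,p,\delta)}\textup{poly}(n))$. Any implicit $\delta^{-c}$ factors (from the low-rank subroutine and truncation) remain polynomial under the assumption $\delta = \Omega(1/\textup{poly}(n))$, so the overall runtime is $O(\textup{poly}(n))$. The only point that requires a moment of reflection is whether the coarse bound $f_{\text{loss}} \leq t$ discards too much, but it does not: the bound $g_1(p)^t$ is already polynomial in the hypothesized regime, so a sharper Chernoff-based estimate of $f_{\text{loss}}$ (as in App.~\ref{App:Qubit_Time complexity}) would only tighten constants without altering the conclusion.
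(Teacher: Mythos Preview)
Your proposal is correct and follows essentially the same route as the paper: the paper also bounds $\nu_{\text{loss}}^{-2f_{\text{loss}}(t,p,\delta)} \leq g_1(p)^t$ (stated just before the theorem), computes $g_1(p)-1 = p^2/(2p^2-2p+1) \leq 2p^2$ using $2p^2-2p+1 \geq 1/2$, and then invokes Lemma~\ref{Poly lemma}. The only cosmetic difference is that the paper asserts the denominator bound on all of $[0,1]$ rather than just $[0,1/2]$, which is immaterial here.
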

\begin{proof}
    By Lemma~\ref{Poly lemma}, it is sufficient to show $g_1(p) = 1 + O(\log n/t)$. With simple algebra, we can get
    \begin{align}
            g_1(p) - 1 & = \frac{p^2}{2p^2-2p+1}.
    \end{align}
     Since $2p^2-2p+1 \geq\frac{1}{2}$ for all $p\in[0,1]$, the given condition implies 
     \begin{align}\label{Eq:Fermion_App_inequality1}
             \frac{p^2}{2p^2-2p+1} & \leq 2p^2 = O\left(\frac{\log n}{t}\right).
     \end{align}
\end{proof}
\begin{corollary}
    Consider the noisy scheme with particle loss, described in Sec.~\ref{Sec:Fermion}. Assume that the transmission rate $p$ is given. For $\delta = \Omega\left(1/\textup{poly}(n)\right)$, if the number of injected magic states $t$ satisfies
    \begin{align}
        t & = O\left(\frac{\log n}{p^2}\right),
    \end{align}
    then the time complexity becomes $O(\textup{poly}(n))$.
\end{corollary}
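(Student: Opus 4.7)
The plan is to observe that the hypothesis of the corollary is, after simple rearrangement, the same bound on $p$ that appears in Thm.~\ref{Thm:Fermion_App1}, so the corollary follows by invoking the same chain of inequalities with the roles of $t$ and $p$ swapped. Concretely, the assumption $t = O(\log n / p^2)$ means there is a constant $C$ with $t \le C \log n / p^2$, which is equivalent to $p^2 \le C \log n / t$, i.e.\ $p^2 = O(\log n / t)$.

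Given this, I would proceed exactly as in the proof of Thm.~\ref{Thm:Fermion_App1}. By Lem.~\ref{Poly lemma}, it suffices to show $g_1(p) = 1 + O(\log n / t)$, where $g_1(p) = (3p^2 - 2p + 1)/(2p^2 - 2p + 1)$. Using the algebraic identity
\begin{align}
g_1(p) - 1 &= \frac{p^2}{2p^2 - 2p + 1}
\end{align}
together with the bound $2p^2 - 2p + 1 \ge 1/2$ valid for all $p \in [0,1]$, one obtains $g_1(p) - 1 \le 2p^2$. Substituting $p^2 = O(\log n / t)$ yields $g_1(p) - 1 = O(\log n / t)$, whence Lem.~\ref{Poly lemma} gives $g_1(p)^t = O(\textup{poly}(n))$. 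Since the worst-case cost is at most $O(g_1(p)^t \, \textup{poly}(n))$, the total runtime is polynomial in $n$, provided $\delta = \Omega(1/\textup{poly}(n))$ as assumed.

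There is essentially no obstacle here: the only thing to check is that the rearrangement $t = O(\log n/p^2) \Leftrightarrow p^2 = O(\log n/t)$ preserves the asymptotic notation, which is immediate for $p > 0$. The edge case $p \to 0$ is harmless because then $g_1(p) \to 1$ and the cost is trivially polynomial in $n$. Thus the corollary is a direct consequence of the same inequality used in Thm.~\ref{Thm:Fermion_App1}, simply read with $t$ as the independent variable rather than $p$.
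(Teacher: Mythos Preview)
Your proposal is correct and follows essentially the same approach as the paper: rearrange $t = O(\log n/p^2)$ into $p^2 = O(\log n/t)$ and then invoke the argument of Thm.~\ref{Thm:Fermion_App1} (via Lem.~\ref{Poly lemma} and the bound $g_1(p)-1 \le 2p^2$). The paper's proof is just a terser version of what you wrote.
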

\begin{proof}
    The given condition implies that there is a constant $C$ such that
    \begin{align}
        t & \leq \frac{C}{2}\frac{\log n}{p^2},
    \end{align}
    so that 
    \begin{align}
        2p^2 & \leq C \frac{\log n}{t}.
    \end{align}
    Following the proof of Thm.~\ref{Thm:Fermion_App1}, it follows our desired result.
\end{proof}
In a similar way, we can also discuss the dephasing noise case. The computational cost in this case is related to $\nu_{\text{dep}}$ defined in Thm~\ref{Fermion Dephasing}. Remark that the computational cost of the dephasing noise case is given by $ O(\nu_{\text{dep}}^{-2t}\text{poly}(n))$, where $\nu_{\text{dep}}$ is defined in Eq.~\eqref{Eq:nu2}. For convenience, denote
\begin{align}
        \nu_{\text{dep}}^{-2} & = 1+(1-2p)^4   \equiv g_2(p).
\end{align}
Similar to the prior discussion, it is sufficient to investigate when $g_2(p)^t = O\left(\text{poly}(n)\right)$ to reveal the computational boundary. As we mentioned earlier, we restrict the range of $p$ on $[0,1/2]$, because the dephasing noise has a symmetry with the axis $p=1/2$.
\begin{theorem}\label{Thm:Fermion_App2}
    Consider the noisy scheme with dephasing noise, described in Sec.~\ref{Sec:Fermion}. Assume that the number of injected magic states $t$ is given. For $\delta = \Omega\left(1/\textup{poly}(n)\right)$, if the noise rate $p$ satisfies
    \begin{align}
        \frac{1}{2} - p & = O\left(\left(\frac{\log n}{t}\right)^{\frac{1}{4}}\right)
    \end{align} then the time complexity of our algorithm becomes $O(\textup{poly}(n))$.
\end{theorem}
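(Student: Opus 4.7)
The plan is to mirror the structure of the proof of Thm.~\ref{Thm:Fermion_App1} and reduce the time-complexity bound to a single scalar inequality controlled by Lem.~\ref{Poly lemma}. The classical-simulation cost for the dephasing case was already established as $O\bigl(\nu_{\text{dep}}^{-2t}\,\textup{poly}(n)\bigr)$, so the only thing to verify is that the magic-content factor $\nu_{\text{dep}}^{-2t}$ is of polynomial scale whenever $1/2-p$ satisfies the stated smallness condition. Since $\textup{poly}(n)\cdot\textup{poly}(n)=\textup{poly}(n)$, the total complexity then collapses to $O(\textup{poly}(n))$.

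First I would recall the abbreviation $g_2(p)\equiv\nu_{\text{dep}}^{-2}=1+(1-2p)^4$ introduced just before the theorem statement, so that $\nu_{\text{dep}}^{-2t}=g_2(p)^t$. Next I would invoke Lem.~\ref{Poly lemma}, which says that $x^t=O(\textup{poly}(n))$ as soon as $x=1+O(\log n/t)$. Thus it suffices to check $g_2(p)=1+O(\log n/t)$ under the hypothesis on $p$.

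For this check I would write $g_2(p)-1=(1-2p)^4=16\bigl(\tfrac{1}{2}-p\bigr)^4$ and substitute the hypothesis $\tfrac{1}{2}-p=O\bigl((\log n/t)^{1/4}\bigr)$. Raising both sides to the fourth power yields $\bigl(\tfrac{1}{2}-p\bigr)^4=O(\log n/t)$, so $g_2(p)-1=O(\log n/t)$, exactly the assumption of Lem.~\ref{Poly lemma}. Applying the lemma gives $g_2(p)^t=O(\textup{poly}(n))$, and multiplying by the $\textup{poly}(n)$ factor from the remaining phase-sensitive Gaussian computations yields the claimed $O(\textup{poly}(n))$ runtime; the hypothesis $\delta=\Omega(1/\textup{poly}(n))$ only enters to guarantee that the low-rank subroutine of Lem.~\ref{Low rank} contributes at most a polynomial overhead, exactly as in the particle-loss case.

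There is no genuine obstacle here: the argument is purely algebraic and parallels Thm.~\ref{Thm:Fermion_App1} almost verbatim, with the quartic $(1-2p)^4$ replacing the quadratic $p^2$ expansion. The only minor subtlety is the absence of a ``denominator'' term analogous to $2p^2-2p+1$ in the particle-loss proof, which actually makes the present reduction simpler: the bound $g_2(p)-1\leq 16(1/2-p)^4$ is exact rather than an inequality obtained from a lower bound on a denominator. A companion corollary, inverting the roles of $t$ and $p$ in the hypothesis, follows by the same substitution argument and is omitted for brevity.
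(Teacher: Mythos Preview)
Your proposal is correct and follows essentially the same approach as the paper: reduce to Lem.~\ref{Poly lemma} by checking that $g_2(p)-1=(1-2p)^4=16(1/2-p)^4=O(\log n/t)$ under the hypothesis. The paper's proof is in fact even terser, simply asserting that the hypothesis ``evidently'' gives $g_2(p)=1+O(\log n/t)$; your version spells out the one algebraic substitution that makes this evident.
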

\begin{proof}
    By Lemma~\ref{Poly lemma}, it is sufficient to show $g_2(p) = 1+O(\log n/t)$. Evidently, the condition implies that $g_2(p)= 1+O\left(\log n/t\right)$.
\end{proof}
\begin{corollary}
    Consider the noisy scheme with dephasing noise, described in Sec.~\ref{Sec:Fermion}. Assume that the noise rate $p$ is given. For $\delta = \Omega\left(1/\textup{poly}(n)\right)$, if the number of injected magic states $t$ satisfies
    \begin{align}
        t & = O\left(\frac{\log n}{\left(\frac{1}{2}- p\right)^4}\right),
    \end{align} 
    then the time complexity of our algorithm becomes $O(\textup{poly}(n))$.
\end{corollary}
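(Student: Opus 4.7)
The plan is to reduce this statement to the earlier Lem.~\ref{Poly lemma} in exactly the same way as Thm.~\ref{Thm:Fermion_App2}, only reversing the roles of the given and the derived quantity. Recall that the total runtime of our algorithm in the dephasing setting is $O(\nu_{\text{dep}}^{-2t}\textup{poly}(n))$, with
\begin{align}
g_2(p) \;\equiv\; \nu_{\text{dep}}^{-2} \;=\; 1+(1-2p)^4,
\end{align}
so the corollary is equivalent to the statement $g_2(p)^t = O(\textup{poly}(n))$ under the assumed bound on $t$. By Lem.~\ref{Poly lemma}, this holds provided $g_2(p) = 1 + O(\log n/t)$, or equivalently $(1-2p)^4 = O(\log n/t)$.

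The hypothesis $t = O\!\left(\log n/(\tfrac{1}{2}-p)^4\right)$ means that there exists a constant $C>0$ such that $t \le C\,\log n /(\tfrac{1}{2}-p)^4$. Rearranging gives $(\tfrac{1}{2}-p)^4 \le C\log n/t$, and since $(1-2p)^4 = 16(\tfrac{1}{2}-p)^4$, this is exactly $(1-2p)^4 = O(\log n/t)$. Plugging into $g_2$ yields $g_2(p) = 1 + O(\log n/t)$, and Lem.~\ref{Poly lemma} immediately concludes $\nu_{\text{dep}}^{-2t} = O(\textup{poly}(n))$, hence the total cost is $O(\textup{poly}(n))$.

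I do not foresee a genuine obstacle: the proof is a one-line algebraic inversion of the step used in Thm.~\ref{Thm:Fermion_App2}, together with a direct appeal to Lem.~\ref{Poly lemma}. The only point worth being careful about is the direction of the asymptotic bound, namely ensuring that the $O(\cdot)$ in the hypothesis on $t$ transfers to an upper bound on $(1-2p)^4$ with the correct sign, which it does because $(\tfrac{1}{2}-p)^4$ and $t$ appear on opposite sides of the assumed inequality. The assumption $\delta = \Omega(1/\textup{poly}(n))$ plays no role beyond keeping us inside the regime in which the overall $O(\nu_{\text{dep}}^{-2t}\textup{poly}(n))$ expression for the runtime is valid, exactly as in the preceding converses.
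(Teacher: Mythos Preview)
Your proposal is correct and follows essentially the same route as the paper: rearrange the hypothesis $t = O(\log n/(\tfrac{1}{2}-p)^4)$ into $(\tfrac{1}{2}-p)^4 = O(\log n/t)$, note this gives $g_2(p) = 1 + O(\log n/t)$, and invoke Lem.~\ref{Poly lemma}. The paper phrases the intermediate step as $\tfrac{1}{2}-p = O((\log n/t)^{1/4})$ and then appeals back to the proof of Thm.~\ref{Thm:Fermion_App2}, but this is the same argument you have simply unfolded in place.
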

\begin{proof}
    The given condition implies that there is a constant $C$ such that
    \begin{align}
        t & \leq C\frac{\log n}{\left(\frac{1}{2}- p\right)^4}, 
    \end{align}
    so that $1/2 - p = O\left(\left(\log n/t\right)^{1/4}\right)$. Following the proof of Thm.~\ref{Thm:Fermion_App2}, it follows our desired result.
\end{proof}

\section{Check of Gaussian states}\label{Check Gaussian}
In this section, we verify that the states described from Eq.~\eqref{Eq:Gaussian in Fermion_dephasing} are Gaussian. It is sufficient that these states can be decomposed as tensor products of two-mode even states,
\begin{align}
        \left(\alpha|00\rangle+\beta|11\rangle\right)\left(\gamma|00\rangle+\kappa|11\rangle\right)
\end{align}
up to a global phase~\cite{Gaussian}. Let $\eta = 1-2p$ and 
\begin{align}
    \xi_{\pm} & = \frac{\sqrt{1+\eta^4}\pm \sqrt{1-\eta^4}}{\sqrt{2}\eta^2}.
\end{align}
Then, we note that, for $j=0,1$,
\begin{align}
    |\omega_{\pm}^j\rangle & = \left(\alpha_{\pm}|00\rangle + \beta_{\pm}^j|11\rangle\right)\left(\gamma_{\pm}^j |00\rangle + \kappa_{\pm}|11\rangle\right),
\end{align}
where
\begin{align}
    \alpha_{\pm} & = \left(1 + \xi_{\mp}^2\right)^{-\frac{1}{2}}, \\
    \beta^j_{\pm} & = (-1)^j \xi_{\mp} \left(1 + \xi_{\mp}^2\right)^{-\frac{1}{2}}, \\
    \gamma_{\pm}^j & = (-1)^j(1+\xi_{\pm}^2)^{-\frac{1}{2}}, \\
    \kappa_{\pm} & = \xi_{\pm}(1+\xi_{\pm}^2)^{-\frac{1}{2}}.
\end{align}

\nocite{*}

\end{document}